\setlist[enumerate]{itemsep=-1pt,topsep=-3pt}
\newcommand{\cmark}{\ding{51}}%
\newcommand{\xmark}{\ding{55}}%
\newtheorem{theorem}{Theorem}
\newtheorem{definition}{Definition}
\newtheorem{lemma}{Lemma}
\newtheorem{example}{Example}
\newtheorem{remark}{Remark}
\newcommand{\safemath}[2]{\newcommand{#1}{\ensuremath{#2}\xspace}}
\safemath{\bma}{\mathbf{a}}
\safemath{\bmb}{\mathbf{b}}
\safemath{\bmc}{\mathbf{c}}
\safemath{\bmd}{\mathbf{d}}
\safemath{\bme}{\mathbf{e}}
\safemath{\bmf}{\mathbf{f}}
\safemath{\bmg}{\mathbf{g}}
\safemath{\bmh}{\mathbf{h}}
\safemath{\bmi}{\mathbf{i}}
\safemath{\bmj}{\mathbf{j}}
\safemath{\bmk}{\mathbf{k}}
\safemath{\bml}{\mathbf{l}}
\safemath{\bmm}{\mathbf{m}}
\safemath{\bmn}{\mathbf{n}}
\safemath{\bmo}{\mathbf{o}}
\safemath{\bmp}{\mathbf{p}}
\safemath{\bmq}{\mathbf{q}}
\safemath{\bmr}{\mathbf{r}}
\safemath{\bms}{\mathbf{s}}
\safemath{\bmt}{\mathbf{t}}
\safemath{\bmu}{\mathbf{u}}
\safemath{\bmv}{\mathbf{v}}
\safemath{\bmw}{\mathbf{w}}
\safemath{\bmx}{\mathbf{x}}
\safemath{\bmy}{\mathbf{y}}
\safemath{\bmz}{\mathbf{z}}
\safemath{\bmzero}{\mathbf{0}}
\safemath{\bmone}{\mathbf{1}}
\safemath{\bmpi}{\pmb{\pi}}
\safemath{\bmalpha}{\pmb{\alpha}}
\safemath{\bmrho}{\pmb{\rho}}
\bmdefine{\biad}{a}
\bmdefine{\bibd}{b}
\bmdefine{\bicd}{c}
\bmdefine{\bidd}{d}
\bmdefine{\bied}{e}
\bmdefine{\bifd}{f}
\bmdefine{\bigd}{g}
\bmdefine{\bihd}{h}
\bmdefine{\biid}{i}
\bmdefine{\bijd}{j}
\bmdefine{\bikd}{k}
\bmdefine{\bild}{l}
\bmdefine{\bimd}{m}
\bmdefine{\bind}{n}
\bmdefine{\biod}{o}
\bmdefine{\bipd}{p}
\bmdefine{\biqd}{q}
\bmdefine{\bird}{r}
\bmdefine{\bisd}{s}
\bmdefine{\bitd}{t}
\bmdefine{\biud}{u}
\bmdefine{\bivd}{v}
\bmdefine{\biwd}{w}
\bmdefine{\bixd}{x}
\bmdefine{\biyd}{y}
\bmdefine{\bizd}{z}
\bmdefine{\bixid}{\xi}
\bmdefine{\bilambdad}{\lambda}
\bmdefine{\bimud}{\mu}
\bmdefine{\binud}{\nu}
\bmdefine{\bithetad}{\theta}
\bmdefine{\biomegad}{\omega}
\bmdefine{\biphid}{\phi}
\safemath{\bmia}{\biad}
\safemath{\bmib}{\bibd}
\safemath{\bmic}{\bicd}
\safemath{\bmid}{\bidd}
\safemath{\bmie}{\bied}
\safemath{\bmif}{\bifd}
\safemath{\bmig}{\bigd}
\safemath{\bmih}{\bihd}
\safemath{\bmii}{\biid}
\safemath{\bmij}{\bijd}
\safemath{\bmik}{\bikd}
\safemath{\bmil}{\bild}
\safemath{\bmim}{\bimd}
\safemath{\bmin}{\bind}
\safemath{\bmio}{\biod}
\safemath{\bmip}{\bipd}
\safemath{\bmiq}{\biqd}
\safemath{\bmir}{\bird}
\safemath{\bmis}{\bisd}
\safemath{\bmit}{\bitd}
\safemath{\bmiu}{\biud}
\safemath{\bmiv}{\bivd}
\safemath{\bmiw}{\biwd}
\safemath{\bmix}{\bixd}
\safemath{\bmiy}{\biyd}
\safemath{\bmiz}{\bizd}
\safemath{\bmxi}{\bixid}
\safemath{\bmlambda}{\bilambdad}
\safemath{\bmmu}{\bimud}
\safemath{\bmnu}{\binud}
\safemath{\bmtheta}{\bithetad}
\safemath{\bmomega}{\biomegad}
\safemath{\bmphi}{\biphid}
\safemath{\bA}{\mathbf{A}}
\safemath{\bB}{\mathbf{B}}
\safemath{\bC}{\mathbf{C}}
\safemath{\bD}{\mathbf{D}}
\safemath{\bE}{\mathbf{E}}
\safemath{\bF}{\mathbf{F}}
\safemath{\bG}{\mathbf{G}}
\safemath{\bH}{\mathbf{H}}
\safemath{\bI}{\mathbf{I}}
\safemath{\bJ}{\mathbf{J}}
\safemath{\bK}{\mathbf{K}}
\safemath{\bL}{\mathbf{L}}
\safemath{\bM}{\mathbf{M}}
\safemath{\bN}{\mathbf{N}}
\safemath{\bO}{\mathbf{O}}
\safemath{\bP}{\mathbf{P}}
\safemath{\bQ}{\mathbf{Q}}
\safemath{\bR}{\mathbf{R}}
\safemath{\bS}{\mathbf{S}}
\safemath{\bT}{\mathbf{T}}
\safemath{\bU}{\mathbf{U}}
\safemath{\bV}{\mathbf{V}}
\safemath{\bW}{\mathbf{W}}
\safemath{\bX}{\mathbf{X}}
\safemath{\bY}{\mathbf{Y}}
\safemath{\bZ}{\mathbf{Z}}
\safemath{\bZero}{\mathbf{0}}
\safemath{\bOne}{\mathbf{1}}
\safemath{\bDelta}{\mathbf{\Delta}}
\safemath{\bLambda}{\mathbf{\UpLambda}}
\safemath{\bPhi}{\mathbf{\Upphi}}
\safemath{\bSigma}{\mathbf{\Upsigma}}
\safemath{\bOmega}{\mathbf{\Upomega}}
\safemath{\bTheta}{\mathbf{\Uptheta}}
\bmdefine{\biAd}{A}
\bmdefine{\biBd}{B}
\bmdefine{\biCd}{C}
\bmdefine{\biDd}{D}
\bmdefine{\biEd}{E}
\bmdefine{\biFd}{F}
\bmdefine{\biGd}{G}
\bmdefine{\biHd}{H}
\bmdefine{\biId}{I}
\bmdefine{\biJd}{J}
\bmdefine{\biKd}{K}
\bmdefine{\biLd}{L}
\bmdefine{\biMd}{M}
\bmdefine{\biOd}{N}
\bmdefine{\biPd}{O}
\bmdefine{\biQd}{P}
\bmdefine{\biRd}{R}
\bmdefine{\biSd}{S}
\bmdefine{\biTd}{T}
\bmdefine{\biUd}{U}
\bmdefine{\biVd}{V}
\bmdefine{\biWd}{W}
\bmdefine{\biXd}{X}
\bmdefine{\biYd}{Y}
\bmdefine{\biZd}{Z}
\bmdefine{\biDelta}{\Delta}
\bmdefine{\biLambda}{\Lambda}
\bmdefine{\biPhi}{\Phi}
\bmdefine{\biSigma}{\Sigma}
\bmdefine{\biOmega}{\Omega}
\bmdefine{\biTheta}{\Theta}
\safemath{\bimA}{\biAd}
\safemath{\bimB}{\biBd}
\safemath{\bimC}{\biCd}
\safemath{\bimD}{\biDd}
\safemath{\bimE}{\biEd}
\safemath{\bimF}{\biFd}
\safemath{\bimG}{\biGd}
\safemath{\bimH}{\biHd}
\safemath{\bimI}{\biId}
\safemath{\bimJ}{\biJd}
\safemath{\bimK}{\biKd}
\safemath{\bimL}{\biLd}
\safemath{\bimM}{\biMd}
\safemath{\bimN}{\biNd}
\safemath{\bimO}{\biOd}
\safemath{\bimP}{\biPd}
\safemath{\bimQ}{\biQd}
\safemath{\bimR}{\biRd}
\safemath{\bimS}{\biSd}
\safemath{\bimT}{\biTd}
\safemath{\bimU}{\biUd}
\safemath{\bimV}{\biVd}
\safemath{\bimW}{\biWd}
\safemath{\bimX}{\biXd}
\safemath{\bimY}{\biYd}
\safemath{\bimZ}{\biZd}
\safemath{\bimDelta}{\biDelta}
\safemath{\bimLambda}{\biLambda}
\safemath{\bimPhi}{\biPhi}
\safemath{\bimSigma}{\biSigma}
\safemath{\bimOmega}{\biOmega}
\safemath{\bimTheta}{\biTheta}
\safemath{\setA}{\mathcal{A}}
\safemath{\setB}{\mathcal{B}}
\safemath{\setC}{\mathcal{C}}
\safemath{\setD}{\mathcal{D}}
\safemath{\setE}{\mathcal{E}}
\safemath{\setF}{\mathcal{F}}
\safemath{\setG}{\mathcal{G}}
\safemath{\setH}{\mathcal{H}}
\safemath{\setI}{\mathcal{I}}
\safemath{\setJ}{\mathcal{J}}
\safemath{\setK}{\mathcal{K}}
\safemath{\setL}{\mathcal{L}}
\safemath{\setM}{\mathcal{M}}
\safemath{\setN}{\mathcal{N}}
\safemath{\setO}{\mathcal{O}}
\safemath{\setP}{\mathcal{P}}
\safemath{\setQ}{\mathcal{Q}}
\safemath{\setR}{\mathcal{R}}
\safemath{\setS}{\mathcal{S}}
\safemath{\setT}{\mathcal{T}}
\safemath{\setU}{\mathcal{U}}
\safemath{\setV}{\mathcal{V}}
\safemath{\setW}{\mathcal{W}}
\safemath{\setX}{\mathcal{X}}
\safemath{\setY}{\mathcal{Y}}
\safemath{\setZ}{\mathcal{Z}}
\safemath{\emptySet}{\varnothing}
\safemath{\colA}{\mathscr{A}}
\safemath{\colB}{\mathscr{B}}
\safemath{\colC}{\mathscr{C}}
\safemath{\colD}{\mathscr{D}}
\safemath{\colE}{\mathscr{E}}
\safemath{\colF}{\mathscr{F}}
\safemath{\colG}{\mathscr{G}}
\safemath{\colH}{\mathscr{H}}
\safemath{\colI}{\mathscr{I}}
\safemath{\colJ}{\mathscr{J}}
\safemath{\colK}{\mathscr{K}}
\safemath{\colL}{\mathscr{L}}
\safemath{\colM}{\mathscr{M}}
\safemath{\colN}{\mathscr{N}}
\safemath{\colO}{\mathscr{O}}
\safemath{\colP}{\mathscr{P}}
\safemath{\colQ}{\mathscr{Q}}
\safemath{\colR}{\mathscr{R}}
\safemath{\colS}{\mathscr{S}}
\safemath{\colT}{\mathscr{T}}
\safemath{\colU}{\mathscr{U}}
\safemath{\colV}{\mathscr{V}}
\safemath{\colW}{\mathscr{W}}
\safemath{\colX}{\mathscr{X}}
\safemath{\colY}{\mathscr{Y}}
\safemath{\colZ}{\mathscr{Z}}
\safemath{\opA}{\mathbb{A}}
\safemath{\opB}{\mathbb{B}}
\safemath{\opC}{\mathbb{C}}
\safemath{\opD}{\mathbb{D}}
\safemath{\opE}{\mathbb{E}}
\safemath{\opF}{\mathbb{F}}
\safemath{\opG}{\mathbb{G}}
\safemath{\opH}{\mathbb{H}}
\safemath{\opI}{\mathbb{I}}
\safemath{\opJ}{\mathbb{J}}
\safemath{\opK}{\mathbb{K}}
\safemath{\opL}{\mathbb{L}}
\safemath{\opM}{\mathbb{M}}
\safemath{\opN}{\mathbb{N}}
\safemath{\opO}{\mathbb{O}}
\safemath{\opP}{\mathbb{P}}
\safemath{\opQ}{\mathbb{Q}}
\safemath{\opR}{\mathbb{R}}
\safemath{\opS}{\mathbb{S}}
\safemath{\opT}{\mathbb{T}}
\safemath{\opU}{\mathbb{U}}
\safemath{\opV}{\mathbb{V}}
\safemath{\opW}{\mathbb{W}}
\safemath{\opX}{\mathbb{X}}
\safemath{\opY}{\mathbb{Y}}
\safemath{\opZ}{\mathbb{Z}}
\safemath{\opZero}{\mathbb{O}}
\safemath{\identityop}{\opI}
\safemath{\veca}{\bma}
\safemath{\vecb}{\bmb}
\safemath{\vecc}{\bmc}
\safemath{\vecd}{\bmd}
\safemath{\vece}{\bme}
\safemath{\vecf}{\bmf}
\safemath{\vecg}{\bmg}
\safemath{\vech}{\bmh}
\safemath{\veci}{\bmi}
\safemath{\vecj}{\bmj}
\safemath{\veck}{\bmk}
\safemath{\vecl}{\bml}
\safemath{\vecm}{\bmm}
\safemath{\vecn}{\bmn}
\safemath{\veco}{\bmo}
\safemath{\vecp}{\bmmp}
\safemath{\vecq}{\bmq}
\safemath{\vecr}{\bmr}
\safemath{\vecs}{\bms}
\safemath{\vect}{\bmt}
\safemath{\vecu}{\bmu}
\safemath{\vecv}{\bmv}
\safemath{\vecw}{\bmw}
\safemath{\vecx}{\bmx}
\safemath{\vecy}{\bmy}
\safemath{\vecz}{\bmz}
\safemath{\veczero}{\bmzero}
\safemath{\vecone}{\bmone}
\safemath{\vecxi}{\bmxi}
\safemath{\veclambda}{\bmlambda}
\safemath{\vecmu}{\bmmu}
\safemath{\vecnu}{\bmnu}
\safemath{\vecomega}{\bmomega}
\safemath{\vectheta}{\bmtheta}
\safemath{\vecphi}{\bmphi}
\safemath{\vecpi}{\bmpi}
\safemath{\vecalpha}{\bmalpha}
\safemath{\vecrho}{\bmrho}
\safemath{\matA}{\bA}
\safemath{\matB}{\bB}
\safemath{\matC}{\bC}
\safemath{\matD}{\bD}
\safemath{\matE}{\bE}
\safemath{\matF}{\bF}
\safemath{\matG}{\bG}
\safemath{\matH}{\bH}
\safemath{\matI}{\bI}
\safemath{\matJ}{\bJ}
\safemath{\matK}{\bK}
\safemath{\matL}{\bL}
\safemath{\matM}{\bM}
\safemath{\matN}{\bN}
\safemath{\matO}{\bO}
\safemath{\matP}{\bP}
\safemath{\matQ}{\bQ}
\safemath{\matR}{\bR}
\safemath{\matS}{\bS}
\safemath{\matT}{\bT}
\safemath{\matU}{\bU}
\safemath{\matV}{\bV}
\safemath{\matW}{\bW}
\safemath{\matX}{\bX}
\safemath{\matY}{\bY}
\safemath{\matZ}{\bZ}
\safemath{\matzero}{\bmzero}
\safemath{\matDelta}{\bDelta}
\safemath{\matLambda}{\bLambda}
\safemath{\matPhi}{\bPhi}
\safemath{\matSigma}{\bSigma}
\safemath{\matOmega}{\bOmega}
\safemath{\matTheta}{\bTheta}
\safemath{\matidentity}{\matI}
\safemath{\matone}{\matO}
\safemath{\rnda}{A}
\safemath{\rndb}{B}
\safemath{\rndc}{C}
\safemath{\rndd}{D}
\safemath{\rnde}{E}
\safemath{\rndf}{F}
\safemath{\rndg}{G}
\safemath{\rndh}{H}
\safemath{\rndi}{I}
\safemath{\rndj}{J}
\safemath{\rndk}{K}
\safemath{\rndl}{L}
\safemath{\rndm}{M}
\safemath{\rndn}{N}
\safemath{\rndo}{O}
\safemath{\rndp}{P}
\safemath{\rndq}{Q}
\safemath{\rndr}{R}
\safemath{\rnds}{S}
\safemath{\rndt}{T}
\safemath{\rndu}{U}
\safemath{\rndv}{V}
\safemath{\rndw}{W}
\safemath{\rndx}{X}
\safemath{\rndy}{Y}
\safemath{\rndz}{Z}
\safemath{\rveca}{\bimA}
\safemath{\rvecb}{\bimB}
\safemath{\rvecc}{\bimC}
\safemath{\rvecd}{\bimD}
\safemath{\rvece}{\bimE}
\safemath{\rvecf}{\bimF}
\safemath{\rvecg}{\bimG}
\safemath{\rvech}{\bimH}
\safemath{\rveci}{\bimI}
\safemath{\rvecj}{\bimJ}
\safemath{\rveck}{\bimK}
\safemath{\rvecl}{\bimL}
\safemath{\rvecm}{\bimM}
\safemath{\rvecn}{\bimN}
\safemath{\rveco}{\bomO}
\safemath{\rvecp}{\bimP}
\safemath{\rvecq}{\bimQ}
\safemath{\rvecr}{\bimR}
\safemath{\rvecs}{\bimS}
\safemath{\rvect}{\bimT}
\safemath{\rvecu}{\bimU}
\safemath{\rvecv}{\bimV}
\safemath{\rvecw}{\bimW}
\safemath{\rvecx}{\bimX}
\safemath{\rvecy}{\bimY}
\safemath{\rvecz}{\bimZ}
\safemath{\rvecxi}{\bmxi}
\safemath{\rveclambda}{\bmlambda}
\safemath{\rvecmu}{\bmmu}
\safemath{\rvectheta}{\bmtheta}
\safemath{\rvecphi}{\bmphi}
\safemath{\rmatA}{\bimA}
\safemath{\rmatB}{\bimB}
\safemath{\rmatC}{\bimC}
\safemath{\rmatD}{\bimD}
\safemath{\rmatE}{\bimE}
\safemath{\rmatF}{\bimF}
\safemath{\rmatG}{\bimG}
\safemath{\rmatH}{\bimH}
\safemath{\rmatI}{\bimI}
\safemath{\rmatJ}{\bimJ}
\safemath{\rmatK}{\bimK}
\safemath{\rmatL}{\bimL}
\safemath{\rmatM}{\bimM}
\safemath{\rmatN}{\bimN}
\safemath{\rmatO}{\bimO}
\safemath{\rmatP}{\bimP}
\safemath{\rmatQ}{\bimQ}
\safemath{\rmatR}{\bimR}
\safemath{\rmatS}{\bimS}
\safemath{\rmatT}{\bimT}
\safemath{\rmatU}{\bimU}
\safemath{\rmatV}{\bimV}
\safemath{\rmatW}{\bimW}
\safemath{\rmatX}{\bimX}
\safemath{\rmatY}{\bimY}
\safemath{\rmatZ}{\bimZ}
\safemath{\rmatDelta}{\bimDelta}
\safemath{\rmatLambda}{\bimLambda}
\safemath{\rmatPhi}{\bimPhi}
\safemath{\rmatSigma}{\bimSigma}
\safemath{\rmatOmega}{\bimOmega}
\safemath{\rmatTheta}{\bimTheta}
\newenvironment{textbmatrix}{	\setlength{\arraycolsep}{2.5pt}%
								\big[\begin{matrix}}{\end{matrix}\big]%
								\raisebox{0.08ex}{\vphantom{M}}}
\def\be{\begin{equation}}
\def\ee{\end{equation}}
\def\een{\nonumber \end{equation}}
\def\mat{\begin{bmatrix}}
\def\emat{\end{bmatrix}}
\def\btm{\begin{textbmatrix}}
\def\etm{\end{textbmatrix}}
\def\ba#1\ea{\begin{align}#1\end{align}}
\def\bas#1\eas{\begin{align*}#1\end{align*}}
\def\bs#1\es{\begin{split}#1\end{split}} 
\def\bg#1\eg{\begin{gather}#1\end{gather}} 
\def\bi#1\ei{\begin{itemize}#1\end{itemize}}
\DeclareMathOperator*{\argmax}{arg\;max}		%
\DeclareMathOperator{\Prob}{\opP}			%
\DeclareMathOperator{\Exop}{\opE}			%
\newcommand{\ind}[1]{\mathbbm{1}_{#1}}
\safemath{\dirac}{\delta}					%
\safemath{\krond}{\dirac}					%
\safemath{\upto}{\uparrow}
\safemath{\downto}{\downarrow}
\safemath{\iu}{j}							%
\safemath{\ev}{\lambda}						%
\safemath{\hilseqspace}{l^{2}}				%
\newcommand{\banachfunspace}[1]{\setL^{#1}}	%
\safemath{\hilfunspace}{\banachfunspace{2}}	%
\safemath{\SNR}{\text{\sc snr}} 				%
\safemath{\No}{N_0}							%
\safemath{\Es}{E_s}							%
\safemath{\Eb}{E_b}							%
\safemath{\EbNo}{\frac{\Eb}{\No}}
\safemath{\EsNo}{\frac{\Es}{\No}}
\DeclareMathOperator{\CHop}{\ensuremath{\opH}} %
\safemath{\tvir}{\rndh_{\CHop}}				%
\safemath{\tvtf}{\rndl_{\CHop}}				%
\safemath{\spf}{\rnds_{\CHop}}				%
\safemath{\bff}{H_{\CHop}}					%
\safemath{\ircf}{r_{h}}						%
\safemath{\tftvcf}{r_{s}}					%
\safemath{\tfcf}{r_{l}}						%
\safemath{\bfcf}{r_{H}}						%
\safemath{\tcorr}{c_h}						%
\safemath{\scf}{c_{s}}						%
\safemath{\tfcorr}{c_{l}}					%
\safemath{\fcorr}{c_{H}}						%
\safemath{\mi}{I}							%
\safemath{\capacity}{C}						%
\safemath{\normal}{\mathcal{N}}			%
\safemath{\jpg}{\mathcal{CN}}			%
\safemath{\mchain}{\leftrightarrow}		%
\safemath{\dB}{\,\mathrm{dB}}
\safemath{\dBm}{\,\mathrm{dBm}}
\safemath{\Hz}{\,\mathrm{Hz}}
\safemath{\kHz}{\,\mathrm{kHz}}
\safemath{\MHz}{\,\mathrm{MHz}}
\safemath{\GHz}{\,\mathrm{GHz}}
\safemath{\s}{\,\mathrm{s}}
\safemath{\ms}{\,\mathrm{ms}}
\safemath{\mus}{\,\mathrm{\mu s}}
\safemath{\ns}{\,\mathrm{ns}}
\safemath{\meter}{\,\mathrm{m}}
\safemath{\mm}{\,\mathrm{mm}}
\safemath{\cm}{\,\mathrm{cm}}
\safemath{\m}{\,\mathrm{m}}
\safemath{\W}{\,\mathrm{W}}
\safemath{\J}{\,\mathrm{J}}
\safemath{\K}{\,\mathrm{K}}
\safemath{\bit}{\,\mathrm{bit}}
\safemath{\define}{=}			%
\safemath{\equivalent}{\sim}
\safemath{\distas}{\sim}					%
\safemath{\sdiff}{\Delta}				%
\safemath{\reals}{\mathbb{R}}
\safemath{\positivereals}{\reals_{+}}
\safemath{\integers}{\mathbb{Z}}
\safemath{\posint}{\integers_{+}}
\safemath{\naturals}{\mathbb{N}}
\safemath{\posnaturals}{\naturals_{+}}
\safemath{\complexset}{\mathbb{C}}
\safemath{\rationals}{\mathbb{Q}}
\newcommand{\Sym}{\mathbb{S}}
\newcommand{\lift}[1]{\widebar{#1}}
\newcommand{\lipext}[1]{\operatorname{Ext}\left(#1\right)}
\newcommand{\empc}[1]{\sigma(#1)}
\newcommand{\symmetrization}[1]{\operatorname{Sym}\left( #1\right)}
\newcommand{\symmbar}[1]{\widebar{\operatorname{Sym}}\left( #1\right)}
\newcommand{\inducedmfg}[1]{\operatorname{MFG}\left( #1\right)}
\newcommand{\Vfin}{V}
\newcommand{\Expfin}{\setE}
\newcommand{\gpop}{\Gamma}
\newcommand{\lpop}{\Lambda}
\newcommand{\JfinNi}[1]{J^{(#1)}}
\newcommand{\ExpfinNi}[1]{\mathcal{E}^{(#1)}}
\newcommand{\initpop}{\rho_0}
\newcommand{\lpopmu}{L_{pop,\mu}}
\newcommand{\refpolvec}{\widebar{\pi}}
\newcommand{\refpol}{\widebar{\pi}}
\newcommand{\cF}{\mathcal{F}}
\newcommand{\cO}{\mathcal{O}}
\newcommand{\empdist}[1]{\widehat{\mu}_{#1}} 
\newcommand{\EE}[1]{\mathbb{E}\left[#1\right]}
\newcommand{\EEc}[2]{\mathbb{E}\left[#1\left|#2\right.\right]}
\newcommand{\Qpi}[1]{Q^{\tau,\pi}_{#1}} 
\newcommand{\qpi}[1]{q^{\tau,\pi}_{#1}} 
\newcommand{\Qest}[2]{\widehat{Q}^{#1}_{#2}}
\newcommand{\qest}[2]{\widehat{q}^{#1}_{#2}}
\newcommand{\Qmax}{Q_{\text{max}}}
\newcommand{\lr}[2]{\eta_{#1}} 
\newcommand{\plr}[1]{\xi_{#1}} 
\newcommand{\entropy}{\setH}
\newcommand{\kl}{D_{\text{KL}}}
\title{Exploiting Approximate Symmetry for Efficient Multi-Agent Reinforcement Learning }
\author{%
  Batuhan Yardim \\
  Department of Computer Science\\
  ETH Z\"urich\\
  \texttt{alibatuhan.yardim@inf.ethz.ch} \\
  \And
  Niao He \\
  Department of Computer Science\\ 
  ETH Z\"urich\\
  \texttt{niao.he@inf.ethz.ch} \\
}
\begin{document}

\maketitle

\begin{abstract}
Mean-field games (MFG) have become significant tools for solving large-scale multi-agent reinforcement learning problems under symmetry.
However, the assumption of exact symmetry limits the applicability of MFGs, as real-world scenarios often feature inherent heterogeneity.
Furthermore, most works on MFG assume access to a known MFG model, which might not be readily available for real-world finite-agent games.
In this work, we broaden the applicability of MFGs by providing a methodology to extend any finite-player, possibly asymmetric, game to an ``induced MFG''.
First, we prove that $N$-player dynamic games can be symmetrized and smoothly extended to the infinite-player continuum via explicit Kirszbraun extensions.
Next, we propose the notion of $\alpha,\beta$-symmetric games, a new class of dynamic population games that incorporate approximate permutation invariance.
For $\alpha,\beta$-symmetric games, we establish explicit approximation bounds, demonstrating that a Nash policy of the induced MFG is an approximate Nash of the $N$-player dynamic game.
We show that TD learning converges up to a small bias using trajectories of the $N$-player game with finite-sample guarantees, permitting symmetrized learning without building an explicit MFG model.
Finally, for certain games satisfying monotonicity, we prove a sample complexity of $\widetilde{\mathcal{O}}(\varepsilon^{-6})$ for the $N$-agent game to learn an $\varepsilon$-Nash up to symmetrization bias. 
Our theory is supported by evaluations on MARL benchmarks with thousands of agents.
\end{abstract}

\section{Introduction}

Competitive multi-agent reinforcement learning (MARL) has found a wide range of applications in the recent years \cite{shavandi2022multi, wiering2000multi,samvelyan2019starcraft, rashedi2016markov, matignon2007hysteretic, mao2022mean}.
Simultaneously, MARL is fundamentally challenging at the regime with many agents due to an exponentially growing search space \citep{wang2020breaking}, also known as the \emph{curse-of-many-agents}.
Even finding an \emph{approximate} solution (i.e. approximate \emph{Nash}) is \textsc{PPAD}-hard \cite{daskalakis2023complexity}, thus potentially intractable.
For these reasons, it has been an active area of research to identify ``islands of tractability'', where MARL can be solved efficiently (see e.g. \cite{leonardos2021global, perrin2020fictitious}).

In this work, we develop a theory of efficient learning for MARL problems that exhibit \emph{approximate symmetry} building upon the theory of mean-field games (MFG).
MFG is a common theoretical framework for breaking the curse of many agents under perfect symmetry.
Initially proposed by \cite{lasry2007mean} and \cite{huang2006large}, MFG analyzes $N$-player games with symmetric agents when $N$ is large.
In this setting, the so-called propagation of chaos permits the reduction of the $N$-player game to a game between a representative agent and a population distribution.
This theoretical framework has been widely studied in many recent works \cite{anahtarci2022q, guo2022general, perolat2022scaling, perrin2020fictitious, xie2021learning, yardim2023policy}.

However, works on MFG exhibit two major bottlenecks preventing wider applicability in MARL.
First and foremost, the aforementioned works on MFG all assume some form of exact symmetry between agents.
Namely, in the MFGs, all agents must have the same reward function and dynamics must be homogeneous (or permutation invariant) among agents.
Such perfect symmetry between agents in MARL is theoretically convenient yet practically infeasible:
Even in applications where symmetry is presumed, usually, there are imperfections in dynamics that break invariance.
Little research has studied whether MFGs could offer tractable approximations to otherwise intractable games that might exhibit approximate symmetries.
Secondly, many works on MFG (such as \cite{guo2019learning, perolat2022scaling}) implicitly assume that an exact model of the MFG is known to the algorithm akin to solving a known MDP.
In real-world applications, an exact MFG model might not be readily available.
MFGs can potentially address settings where only $N$-player dynamics (possibly incorporating imperfections and heterogeneity) can be simulated;
however, such a theory of MFGs has yet to be developed.

We address these shortcomings by developing a theoretically sound MARL framework for scenarios when permutation invariance holds only approximately.
Unlike previous work on MFG, our theoretical approach is \emph{constructive}: we show that given any MARL problem, one can construct an MFG approximation that permits efficient learning.
We define a new, broad class of games with approximate permutation invariance, dubbed \emph{$\alpha,\beta$-symmetric} games, for which approximate Nash equilibria can be learned efficiently.
Our theoretical framework provides \emph{end-to-end} learning guarantees for policy mirror descent combined with TD learning. 
Our experimental findings further demonstrate strong performance improvements in MARL problems with thousands of agents.

\subsection{Related Work}

\begin{table*}[t]
  \begin{tabular}{llccc} \toprule
    \textbf{Work} & \textbf{Symmetry} & \textbf{Approximation} & \textbf{Learning} & \textbf{Learn w/o model} \\ \midrule
    \citeauthor{saldi2018markov}, \citeyear{saldi2018markov} & Exact & \cmark \emph{(asymptotic)} & \xmark & - \\
    \citeauthor{yardim2024meanfield}, \citeyear{yardim2024meanfield} & Exact & \cmark \emph{(explicit)} & \xmark & - \\
    \citeauthor{cui2021approximately}, \citeyear{cui2021approximately} & Exact & \cmark \emph{(asymptotic)} & \cmark \emph{(reg.)} & \xmark \\
    \citeauthor{zaman2023oracle}, \citeyear{zaman2023oracle} & Exact & \xmark & \cmark \emph{(reg.)} & \xmark \\
    \citeauthor{yardim2023policy}, \citeyear{yardim2023policy} & Exact & \xmark & \cmark \emph{(reg.)} & \cmark \\
    \citeauthor{parise2019graphon}, \citeyear{parise2019graphon} & Graphon  & \cmark \emph{(explicit)} & \xmark & - \\
    \citeauthor{zhang2023learning}, \citeyear{zhang2023learning} & Graphon  & \xmark & \cmark \emph{(mon.)} & \xmark \\
    \citeauthor{perolat2022scaling}, \citeyear{perolat2022scaling} & Multi-pop. & \xmark & \cmark \emph{(mon.)} & \xmark \\
     \midrule
    \textbf{Our work} & $\alpha,\beta$-symm. & \cmark (\emph{explicit}) & \cmark (\emph{mon.}) & \cmark \\
    \bottomrule
  \end{tabular}
  \caption{Selected models of symmetric games studied in MF-RL works. 
  (\emph{reg.}: only Nash with regularization strictly bounded away from zero, \emph{mon.}: monotonicity assumption)
  }
  \label{tab:selected_works_approx}
\end{table*}

We compare our work with selected past MFG results in Table~\ref{tab:selected_works_approx}, and also provide a detailed commentary in this section.

\textbf{Mean-field games and RL.}
MFGs represent a particular type of competitive game where players exhibit strong symmetries.
Past work has studied the existence of MFG Nash equilibrium as well as its approximation of finite-player Nash \cite{carmona2013probabilistic,carmona2018probabilistic, saldi2018markov}.
The convergence of RL algorithms has also been widely studied in discrete-time MFG assuming either contractivity in the stationary equilibrium setting \cite{zaman2023oracle, yardim2023policy, guo2019learning, xie2021learning, cui2021approximately} or monotonicity in the finite-horizon setting \cite{perrin2020fictitious, perolat2022scaling,yardim2023stateless,perolat2015approximate,perrin2022generalization}.
These models however assume exact homogeneity between all participants.
Furthermore, existing algorithms typically assume knowledge of the exact MFG model \cite{guo2019learning, zaman2023oracle}, hindering their real-world applicability.
Multi-population MFG (MP-MFG) can incorporate multiple types of populations exposed to different dynamics \cite{huang2006large,perolat2022scaling, subramanian2020multi,dayanikli2023multi, bensoussan2018mean, carmona2018probabilistic, huang2024modelbased}. 
However, within each population exact symmetry must hold and the number of types must be much smaller than the number of agents.
Moreover, MP-MFG can be lifted to an equivalent single-population MFG \cite{huang2024modelbased} under certain constraints.
Overall, all of these works require variations of the same stringent symmetry assumptions, restricting their applicability.
A detailed survey of learning in MFGs can be found at \cite{lauriere2024learning}.

\textbf{Graphon MFG.}
Graphon games, proposed initially by \cite{parise2019graphon}, can incorporate heterogeneity between MFG agents by assuming graphon-based interactions.
The setting has been analyzed in discrete-time \cite{cui2021learning, vasal2020master} as well as in the continuous time setting \cite{aurell2022finite, caines2019graphon, aurell2022stochastic}. 
Recently, policy mirror descent has been analyzed in this setting to produce convergence guarantees under monotonicity conditions \cite{zhang2023learning}.
However, these works on graphon mean-field games still incorporate exact symmetry in the form of the graphon: namely, the types of agents must follow a symmetric distribution and interactions must be through a symmetric graphon.
In fact, graphon MFGs can still be reduced to regular MFGs \cite{zhang2023learning}.

\textbf{Other related work.}
Another class of games where a large number of agents can be tackled tractably are the so-called potential games \cite{rosenthal1973class}, generalized to Markov potential games incorporating dynamics \cite{leonardos2021global}.
Approximate potentials have been studied in a similar spirit on Markov $\alpha$-potential games \cite{guo2023markov} and near potential games \cite{candogan2013near}.
However, to the best of our knowledge, approximate symmetry has not been studied in the literature of MFGs.

\subsection{Our Contributions}

We list the following as our contributions, compared to past work summarized in the previous section.
\begin{enumerate}[leftmargin=*]
    \item We first tackle the foundational but understudied question for MFGs: \emph{when can a given $N$-agent game be meaningfully extended to an infinite-player MFG?} 
    We construct a well-defined MFG approximation to an arbitrary (possibly non-symmetric) finite-player dynamical game (DG) using the idea of function symmetrization and via Kirszbraun Lipschitz extensions.
    \item Using our extension, we define a new class of \emph{$\alpha,\beta$-symmetric DGs} for which it is tractable to find approximate Nash.
    $\alpha,\beta$-symmetry generalizes permutation invariance in dynamic games to arbitrary MARL problems, where parameters $\alpha,\beta$ quantify degrees of heterogeneity in dynamics and player rewards respectively.
    \item We prove that the solution of the induced MFG is indeed an approximate Nash to the original $\alpha,\beta$-symmetric DG up to a bias of $\mathcal{O}(\sfrac{1}{\sqrt{N}} + \alpha + \beta)$, demonstrating that MFG approximation is robust to heterogeneity and finite-agent errors in the DG.
    \item We analyze TD learning on the trajectories of the finite-agent DG.
    We show that by only using $\mathcal{O}(\varepsilon^{-2})$ samples from the $N$-player dynamic game, policies can be approximately evaluated \emph{on the abstract MFG} up to symmetrization error.
    \item Finally, we show that under monotonicity conditions, policy mirror descent (PMD) combined with TD learning converges to an approximate Nash equilibrium using $\widetilde{\mathcal{O}}(\varepsilon^{-6})$ sample trajectories of the $N$-player DG.
    This provides an end-to-end learning guarantee for MARL under $\alpha,\beta$-symmetry, characterizing a novel class of problems that can be solved efficiently with MARL.
\end{enumerate}

\section{Main Results}

\emph{Notation.}
For $K\in\mathbb{N}_{>0}$, let $[K] := \{ 1,\ldots, K\}$.
Let $\Delta_\setX$ be the probability simplex over $\setX$.
For any $N \in \mathbb{N}_{>0}$ define $\Delta_{\setX, N} := \{ \mu \in \Delta_\setX \, | \, N\mu(x) \in \mathbb{N}_{\geq 0} , \forall x\in\setX\}$.
For $\vecx \in \setX^N$, define the empirical distribution $\empc{\vecx} \in \Delta_{\setX, N}$ as $\empc{\vecx}(x') = \sfrac{1}{N}\sum_{i=1}^N \ind{x_i = x'}$.
Let $\Sym_K$ be the set of permutations over the set $[K]$, so $\Sym_K := \{g:[K] \rightarrow [K] \, | \, g \text{ bijective}\}$.
For $\vecx = (x_1, \ldots, x_K) \in \setX^K$ and $g \in \Sym_K$, define $g(\vecx) := (x_{g(1)}, \ldots, x_{g(K)}) \in \setX^K$.
Define $\vecx^{-i} \in \setX^{K-1}$ as the vector with $i$-th entry of $\veca$ removed, and $(x, \vecx^{-i}) \in \setX^K$ as the vector where $i$-th coordinate of $\vecx$ has been replaced by $x\in\setX$.

We consider discrete state-action sets $\setS,\setA$.
We denote the set of time-dependent policies on $\setS,\setA$ as $\Pi := \{ \pi:\setS\times[H] \rightarrow \Delta_\setA\}$.
We abbreviate $\pi_h(a|s):=\pi(s,h)(a)$.
For $p:\setS \rightarrow \Delta_\setA$ and $\rho \in \Delta_\setS$, we define $(\rho \cdot p) \in\Delta_\setS$ as $(\rho \cdot p)(s,a) := \rho(s) p(s)(a)$ for all $s,a\in\setS\times\setA$.
Finally, we define entropy $\entropy(u) := -\sum_a u(a) \log u(a)$ for $u\in\Delta_\setA$.
We denote $\kl(u|v) := \sum_{a} u(a)\log\frac{u(a)}{v(a)}$ for $u,v\in\Delta_\setA$.

\subsection{Finite-Horizon Dynamic Games}

Firstly, we define finite-horizon dynamic games, the main object of interest of this work.

\begin{definition}[N-player FH-DG]\label{def:dg}
    An $N$-player finite-horizon dynamic game (FH-DG) is a tuple $(\setS, \setA, \initpop, N, H, \{ P^i\}_{i=1}^N, \{R^i\}_{i=1}^N)$ where the state and actions sets $\setS, \setA$ are discrete, $\initpop\in\Delta_\setS$, the number of players $N\in\mathbb{N}_{>1}$, horizon length $H\in\mathbb{N}_{>0}$, and transition dynamics and rewards are functions such that 
    $P^i: \setS \times \setA \times (\setS \times \setA) ^ {N-1} \rightarrow \Delta_\setS$ and $R^i: \setS \times \setA \times (\setS \times \setA) ^ {N-1} \rightarrow [0,1]$.
\end{definition}

The above definition differs from Markov games \cite{shapley1953stochastic}, where a common state is shared by all agents.
In FH-DG, each agent only observes their own state and the dynamics depend on the state vector of all $N$ agents.
Such a model can be especially realistic in cases where games have natural \emph{locality}, that is, the game state is not globally available to agents.
Next, we define the Nash equilibrium.

\begin{definition}[FH-DG Nash equilibrium]\label{def:fhdg_ne}
    For a FH-DG $\setG = (\setS, \setA, \initpop, N, H, \{ P^i\}_{i=1}^N, \{R^i\}_{i=1}^N)$, policy tuple $\vecpi=(\pi^1,\ldots,\pi^N)\in\Pi^N$ the expected total reward of agent $i\in [N]$ is defined as 
    \begin{align*}
        \JfinNi{i} \left( \vecpi \right) & := \Exop \left[ \sum_{h=0}^{H-1} R^i(s_h^i, a_h^i, \vecrho_h^{-i}) \middle| \substack{\forall j : s_0^j \sim \initpop , \quad a_h^j \sim \pi_h^j(s_h^j)\\ s_{h+1}^j \sim P^j(\cdot|s_h^j, a_h^j, \vecrho_h^{-j})} \right]
    \end{align*}
    where $\vecrho_h := (s_h^i, a_h^i)_{i=1}^N$.
    The \emph{exploitability} of agent $i$ for policies $\vecpi$ is then defined as $\ExpfinNi{i}(\vecpi) := \max_{\pi\in\Pi} \JfinNi{i} \left(\pi, \vecpi^{-i} \right) - \JfinNi{i} \left( \vecpi \right)$.
    If $\max_i \ExpfinNi{i}(\vecpi) = 0$, $\vecpi$ is called a Nash equilibrium (NE) of the FH-DG.
    If $\max_i \ExpfinNi{i}(\vecpi) \leq \delta$, $\vecpi$ is called a $\delta$-Nash equilibrium ($\delta$-NE) of the FH-DG.
\end{definition}

At a $\delta$-NE, the incentive for any selfish agent to deviate is small: hence, approximate NE is a natural solution concept for FH-DG.
However, the problem of finding $\delta$-NE is challenging: not only is it computationally intractable in general \cite{daskalakis2009complexity}, but for large $N$ the search space of policies grows exponentially.
This motivates the approximation via symmetrization in the remainder of the work. 

\subsection{Symmetrization and Lipschitz Extension}

In order to define approximate symmetry, we first show that finite-agent dynamics of Definition~\ref{def:dg} can be extended to infinitely many players.
In the process, we tackle a question that is relevant for MFGs beyond our work:\emph{When and how can we build an MFG model on the continuum, given dynamics on finitely many players?}
We will use the notions of symmetrization and Lipschitz extension.

\begin{definition}[Symmetric function, symmetrization]
A function $f: \setX^K \rightarrow \setY$ is called \emph{symmetric} if $f(g(\vecx)) = f(\vecx)$, $\forall\vecx \in \setX^K$, $g\in\Sym_{K}$.
For a symmetric $f: \setX^K \rightarrow \setY$, we define its \emph{population lifted version} $\lift{f}: \Delta_{\setX, K} \rightarrow \setY$ as the well-defined function such that $\widebar{f}(\mu) = f(\vecx)$ for $\forall \vecx \in \setX^K$ satisfying $\empc{\vecx} = \mu.$
Given $f: \setX^K \rightarrow \mathbb{R}^D$, we define the \emph{symmetrization} $\symmetrization{f}: \setX^K \rightarrow \setY$ as
\begin{align*}
    \symmetrization{f}(\vecx) = \frac{1}{K!} \sum_{g \in \Sym_K} f(g(\veca)), \quad \forall \veca\in\setX^K.
\end{align*}
We also denote $\symmbar{f} := \widebar{\symmetrization{f}}$.
\end{definition}

We note that the terminology ``symmetrization'' is consistent as $\symmetrization{f}$ is indeed a symmetric function (as verified in Section~\ref{sec:appendix_prelim}).
Furthermore, if $f$ is symmetric then $\symmetrization{f} = f$ as expected.

Finally, to extend DG to the infinite-player continuum, we will use the following special case of the Kirszbraun-Valentine theorem, which concerns Lipschitz extensions of functions from strict subsets of the Euclidean space to the entirety of the space preserving their Lipschitz modulus.

\begin{lemma}[Kirszbraun-Valentine \cite{kirszbraun1934zusammenziehende, valentine1945lipschitz}]\label{lemma:kirszbraun}
    Let $d_1, d_2 \in \mathbb{N}_{> 0}$, and $U\subset\mathbb{R}^{d_1}$.
    Let $f: U \rightarrow \mathbb{R}^{d_2}$ be an $L$-Lipschitz function with respect to the Euclidean norm $\|\cdot\|_2$.
    Then, there exists $\lipext{f}: \mathbb{R}^{d_1} \rightarrow \mathbb{R}^{d_2}$ such that $\lipext{f}$ is $L$-Lipschitz and $\lipext{f}(x) = f(x)$ for all $x \in U$.
\end{lemma}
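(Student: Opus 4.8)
The plan is to follow the classical route to Kirszbraun's theorem, isolating the single geometric fact that is special to Euclidean (inner-product) geometry. First I would rescale so that $L = 1$, which is harmless. The extension to all of $\mathbb{R}^{d_1}$ then follows from a \emph{one-point extension} step via a standard Zorn's lemma / transfinite induction argument: order the $1$-Lipschitz extensions of $f$ to supersets of $U$ by restriction; this poset has maximal elements, and any maximal extension must already be defined on all of $\mathbb{R}^{d_1}$, since otherwise one could extend it to one further point and contradict maximality. Thus the entire theorem reduces to the one-point extension.

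For the one-point extension, given a new point $x_0 \in \mathbb{R}^{d_1} \setminus U$, I need a value $y_0 \in \mathbb{R}^{d_2}$ with $\|y_0 - f(u)\| \le \|x_0 - u\|$ for every $u \in U$; equivalently, the family of closed balls $\{\, \bar B(f(u), \|x_0 - u\|) \,\}_{u \in U}$ must have nonempty intersection. Since these balls are compact and convex in $\mathbb{R}^{d_2}$, by the finite-intersection property of compact sets (or Helly's theorem, which further reduces to subfamilies of size $d_2+1$) it suffices to show every finite subfamily, indexed by $u_1,\dots,u_m \in U$, has a common point. This is the core lemma: given $x_0, u_1,\dots,u_m$ and $y_i := f(u_i)$ with $\|y_i - y_j\| \le \|u_i - u_j\|$ for all $i,j$, there exists $z$ with $\|z - y_i\| \le r_i := \|x_0 - u_i\|$ for all $i$.

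I would prove this by a convexity/optimization argument. Consider minimizing the convex, coercive function $\phi(z) = \max_i \|z - y_i\|^2 / r_i^2$ (the degenerate case $r_i = 0$ forces $y_0 = y_i$ and is treated separately), and let $z^\ast$ be a minimizer with value $\lambda^2$. The first-order optimality condition at $z^\ast$ yields convex weights $\theta_i \ge 0$, $\sum_i \theta_i = 1$, supported on the active set $I = \{ i : \|z^\ast - y_i\|^2 = \lambda^2 r_i^2 \}$, such that $z^\ast = \sum_i \theta_i y_i$. Suppose for contradiction that $\lambda > 1$. Expanding $0 = \|\sum_i \theta_i (y_i - z^\ast)\|^2$ via polarization gives $\sum_i \theta_i \|y_i - z^\ast\|^2 = \tfrac12 \sum_{i,j} \theta_i \theta_j \|y_i - y_j\|^2$, while the analogous identity for $w := \sum_i \theta_i u_i$ gives $\tfrac12 \sum_{i,j} \theta_i \theta_j \|u_i - u_j\|^2 = \sum_i \theta_i \|u_i - w\|^2 = \sum_i \theta_i r_i^2 - \|w - x_0\|^2$. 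Chaining these through the Lipschitz bound $\|y_i - y_j\| \le \|u_i - u_j\|$ gives $\lambda^2 \sum_i \theta_i r_i^2 = \sum_i \theta_i \|y_i - z^\ast\|^2 \le \sum_i \theta_i r_i^2$, which forces $\lambda \le 1$, a contradiction. Hence $\lambda \le 1$, every finite subfamily of balls intersects, and the proof closes.

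The main obstacle is precisely this finite ball-intersection lemma: it is the only place where the Euclidean inner-product structure is essential, as the polarization identity and the fact that $z^\ast$ lies in the convex hull of the active centers $\{y_i\}_{i \in I}$ are exactly what make the two sides of the inequality comparable. Everything else (rescaling, Zorn's lemma, the compactness/Helly reduction) is soft. I would additionally verify the degenerate cases ($x_0 \in U$, or some $r_i = 0$) and confirm that the compactness argument legitimately passes from finite subfamilies to the full, possibly infinite, family of balls.
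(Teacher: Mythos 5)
The paper does not prove this lemma at all: it is stated as a black-box citation of the classical Kirszbraun--Valentine theorem \cite{kirszbraun1934zusammenziehende, valentine1945lipschitz}, so there is no in-paper argument to compare against. Your proposal is, in effect, a correct reconstruction of the standard textbook proof of that classical result, and all the essential ingredients are in place: the rescaling to $L=1$, the Zorn's lemma reduction to a one-point extension, the reformulation of the one-point extension as nonemptiness of $\bigcap_{u \in U} \bar{B}(f(u), \|x_0 - u\|)$, the reduction to finite subfamilies via compactness of closed balls in $\mathbb{R}^{d_2}$, and the variational core lemma. The key computation also checks out: the first-order condition for the minimizer of $\max_i \|z - y_i\|^2 / r_i^2$ places $z^\ast$ in the convex hull of the active centers (after renormalizing the subgradient weights by $1/r_i^2$, which you implicitly do), the variance identity $\sum_i \theta_i \|v_i - \bar{v}\|^2 = \tfrac12 \sum_{i,j} \theta_i \theta_j \|v_i - v_j\|^2$ is applied correctly on both the $y$-side and the $u$-side, and the degenerate case $r_i = 0$ indeed cannot occur since $x_0 \notin U$. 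The only cosmetic omissions are the trivial case $U = \emptyset$ and the remark that one could replace Zorn's lemma by extension over a countable dense subset followed by continuity; neither affects correctness. Since the paper only needs the statement (applied to extend the symmetrized dynamics and rewards from $\Delta_{\setS\times\setA, N}$ to $\Delta_{\setS\times\setA}$), supplying a full proof is beyond what the paper itself does, but your argument is sound.
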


While $\lipext{f}$ is not unique in general, it admits various explicit formulations \cite{mcshane1934extension,sukharev1978optimal}, and the particular formulation is not important in this work.

\subsection{Mean-field Games and \texorpdfstring{$\alpha, \beta$}{a,b}-Symmetric Games}

Next, using the definitions from the previous section, we show how the FH-DG can be extended to an MFG.
We formalize the finite-horizon MFG (FH-MFG), which will be the main approximation tool.

\begin{definition}[Finite-horizon mean-field game]\label{def:mfg}
    A finite-horizon mean-field game (FH-MFG) is a tuple $(\setS, \setA, \initpop, H, P, R)$ where $\setS, \setA$ are discrete, $\initpop \in \Delta_\setS$, $H\in\mathbb{N}_{>0}$, the transition dynamics $P$ is a function $P : \setS\times\setA\times \Delta_{\setS\times\setA} \rightarrow \Delta_\setS$, and 
 the reward $R$ is a function $R : \setS\times\setA\times \Delta_{\setS\times\setA} \rightarrow [0,1]$.
\end{definition}

Compared to Definition~\ref{def:dg}, Definition~\ref{def:mfg} introduces two conceptual changes under the premise of exact symmetry: (1) the dependency of dynamics to the states and actions of other agents have been reduced to a dependency on a population distribution on $\Delta_{\setS\times\setA}$, and (2) $N$ agents have been implicitly replaced by a single representative agent.
We next extend the definition NE to MFGs.

\begin{definition}[Induced population, MFG-NE]\label{def:mfg_ne}
    For a FH-MFG defined by the tuple $(\setS, \setA, \initpop, H, P, R)$, we define the population update operators $\gpop, \lpop$ as
\begin{align}
    \gpop(\mu, \pi)(s',a') &:= \sum_{s \in \setS, a \in \setA} \mu(s,a) P(s'|s,a,\mu) \pi(a'|s') \\
    \lpop(\pi) &:= \big\{\Gamma(\cdots \Gamma(\Gamma(\initpop \cdot \pi_0, \pi_1) \cdots, \pi_{h-1})) \big\}_{h=0}^{H-1}.
\end{align}
For $\pi  \in \Pi$ and $\vecmu = \{ \mu_h \}_{h=0}^{H-1} \in \Delta_{\setS\times\setA}^{H}$, the expected reward is defined as
    \begin{align}
        \Vfin \left( \vecmu, \pi \right) & := \Exop \left[ \sum_{h=0}^{H-1} R(s_h, a_h, \mu_h) \middle| \substack{s_0 \sim \initpop , \quad a_h \sim \pi_h(s_h)\\ s_{h+1} \sim P(s_h, a_h, \mu_h)} \right].
    \end{align}
    We define MFG exploitability as $\Expfin (\pi) := \max_{\pi' \in \Pi} \Vfin( \lpop (\pi), \pi') - \Vfin ( \lpop ( \pi), \pi )$ and FH-MFG-NE as:
    \begin{align}
        \textit{Policy } &\pi^* = \{ \pi^*_h\}_{h=0}^{H-1} \in \Pi \text{ such that } \quad \Expfin (\pi^*) = 0. \tag{MFG-NE}
    \end{align}
\end{definition}

Intuitively, the above definition of MFG-NE requires that the policy $\pi$ is optimal against the population flow it induces.
Questions of existence \cite{cardaliaguet2010notes,bensoussan2013mean, huang2023statistical} and approximation of the FH-DG under exact symmetry \cite{saldi2018markov} have been thoroughly studied in the literature.
That is, if an $N$-player game exhibits exact symmetry, then the MFG-NE exists and is an approximate NE of the FH-DG.

Taking a constructive approach, we show that the FH-MFG-NE of an \emph{appropriately constructed} MFG is also an approximate NE for a given FH-DG without a prior model.
The definition below of an ``induced MFG'' demonstrates how arbitrary non-symmetric dynamics can be extended to an MFG.

\begin{definition}[Induced FH-MFG]\label{def:inducedmfg}
    Let $\setG = (\setS, \setA, \initpop, N, H, \{ P^i\}_{i=1}^N, \{R^i\}_{i=1}^N)$ be a FH-DG.
    The MFG induced by $\setG$, denoted $\inducedmfg{\setG}$, is defined to be the $(\setS, \setA, \initpop, H, P, R)$, where $P: \setS\times\setA\times\Delta_{\setS\times\setA} \rightarrow \Delta_\setS$ and $R: \setS\times\setA\times\Delta_{\setS\times\setA} \rightarrow [0,1]$ are defined for all $s\in\setS,a\in\setA,\mu\in\Delta_{\setS\times\setA}$ as:
    {
        \begin{align*}
            P(s,a, \mu) :=  \sum_{i=1}^N \frac{\lipext{\symmbar{P^i(s,a, \cdot)}} (\mu)}{N}, \quad
            R(s,a,\mu) := \sum_{i=1}^N \frac{\lipext{\symmbar{R^i(s,a, \cdot)}} (\mu)}{N}.
        \end{align*}
    }
\end{definition}

$\inducedmfg{\setG}$ is well-defined due to Lemma~\ref{lemma:kirszbraun}.
In words, the definition of $\inducedmfg{\setG}$ consists of two main operations: (1) symmetrize ($\symmbar{\cdot}$) and extend ($\lipext{\cdot}$) $P^i, R^i$ to $\Delta_{\setS\times\setA}$,
and (2) average symmetrized dynamics and rewards for all players.
Furthermore, in the special case $P^i=P^j$ and $R^i=R^j$ for all $i\neq j$ and $P^i(s,a,\cdot), R^i(s,a,\cdot)$ are symmetric, the $\inducedmfg{\setG}$ has dynamics and rewards $\lipext{\widebar{P}^1}, \lipext{\widebar{R}^1}$, which are simply the Lipschitz extensions of $P^1, R^1$ to the continuum.

\begin{remark}
    Even for exact symmetric games, Definition~\ref{def:inducedmfg} is relevant.
    The availability of an MFG model is typically taken for granted, however, 
    since real-world algorithms might only be able to access finite-agent dynamics without a known MFG model, it is a valid research question when and how a game can be meaningfully extended to infinite players (answered by Definition~\ref{def:inducedmfg}).
\end{remark}

Finally, we provide the definition of approximate or $\alpha,\beta$-symmetry.
\begin{definition}[$\alpha,\beta$-Symmetric DG]\label{def:alphabeta}
    Let $\setG =  (\setS, \setA, \initpop, N, H, \{ P^i\}_{i=1}^N, \{R^i\}_{i=1}^N)$ be an $N$-player FH-DG, inducing $\inducedmfg{\setG} = (\setS, \setA, \initpop, H, P, R)$.
    If it holds for $\alpha,\beta \geq 0$ that
    \begin{align*}
        \max_{ \substack{i\in[N], \, s,a\in\setS\times\setA \\ \mu\in\Delta_{\setS\times\setA}}} \, \max_{ \substack{\vecrho \in (\setS\times\setA)^{N-1} \\ \sigma(\vecrho) = \mu} } \| P^i(s,a,\vecrho) - P(s,a,\mu) \|_1 \leq \alpha, \\
        \max_{ \substack{i\in[N], \, s,a\in\setS\times\setA \\ \mu\in\Delta_{\setS\times\setA}}} \, \max_{ \substack{\vecrho \in (\setS\times\setA)^{N-1} \\ \sigma(\vecrho) = \mu} } | R^i(s,a,\vecrho) - R(s,a,\mu) | \leq \beta,
    \end{align*}
    then the FH-DG $\setG$ is called $\alpha,\beta$-symmetric.
\end{definition}
As expected, an exactly symmetric $N$-player game is also $0,0$-symmetric, and any dynamic game $\setG$ is $\alpha,\beta$-symmetric for some constants $\alpha\leq 2, \beta \leq 1$.
Hence, Definition~\ref{def:alphabeta} generalizes exact permutation invariance.
Games that exhibit near-exact symmetries will have very small constants $\alpha,\beta$, we will next provide approximation and learning guarantees for such finite-agent games.

\subsection{Approximation of NE under Approximate Symmetry}

In this section, we will prove that a NE of the induced $\inducedmfg{\setG}$ is also an approximate NE of the finite-agent game $\setG$.
We will provide an explicit bound on the approximation, motivating the use of MFGs for solving FH-DG.

We first introduce the notion of $\kappa$-sparse dynamics.
In words, with $\kappa$-sparse dynamics an agent at state $s$ playing action $a$ is impacted only by other agents occupying a sparse set of ``neighboring'' state-actions $\setN_{s,a} \subset \setS\times\setA$ where $|\setN_{s,a}| \leq \kappa$.
For a subset $\setU \subset \setX$, we define the function $p_\setU : \setX \rightarrow \setX \cup \{\perp \}$ as $p_\setU (x) = x$ if $x\in\setU$ and $p_\setU(x) = \perp$ otherwise, where $\perp$ is treated as a placeholder element such that $\perp \notin \setU$. 

\begin{definition}[$\kappa$-sparse dynamics/rewards]
    A function $f: \setX^{M} \rightarrow \setY$ is called \emph{$\kappa$-sparse} (on some $\setU \subset \setX$) if $|\setU| \leq \kappa$ and $f(\vecx) = f(\vecy)$ whenever $p_{\setU}(x_i) = p_\setU(y_i)$ for all $i=1,\ldots,M$.
    Dynamics $\{ P^i \}_{i=1}^N$ (resp. rewards $\{R^i\}_{i=1}^N$) are called \emph{$\kappa$-sparse} if all $P^i(s,a,\cdot)$ (resp. $R^i(s,a,\cdot)$) are $\kappa$-sparse on some $\setU_{s,a} \subset \setS\times\setA$ for all $s\in\setS, a\in\setA$ (resp. $\setU_{s,a}\subset \setS\times\setA$ for all $s\in\setS, a\in\setA$).
\end{definition}

Sparsity is common in FH-DG, particularly when there is spatial structure.
Many standard MFG problems such as the beach-bar problem \cite{perrin2020fictitious} and crowd modeling \cite{zaman2023oracle} are in fact $(\kappa=1)$-sparse, as agents are only affected by the population distribution at their current state.

Using sparsity, we provide an upper bound of the Lipschitz constants of maps $P(s,a,\cdot), R(s,a,\cdot)$ of the induced MFG on the continuum $\Delta_{\setS\times\setA}$, demonstrating that unless the FH-DG exhibits dominant players, $P,R$ have bounded Lipschitz moduli independent of $N$.

\begin{lemma}[Lipschitz extension bound] \label{lemma:extension_bound}
    Let $\setG$ be an FH-DG with dynamics and rewards $\{ P^i\}_{i=1}^N, \{R^i\}_{i=1}^N$ admitting the induced mean-field game $\inducedmfg{\setG}$ with dynamics and rewards $P,R$.
    Assume that $\{ P^i\}_{i=1}^N, \{R^i\}_{i=1}^N$ are $\kappa$-sparse and it holds that
    \begin{align*}
        \| P^i(s,a, \vecrho) - P^i(s,a, ((s', a'), \vecrho^{-j}))\|_1 \leq C_1, \quad
        | R^i(s,a, \vecrho) - R^i(s,a, ((s', a'), \vecrho^{-j}))| \leq C_2,
    \end{align*}
    for any $i,j \in[N], i\neq j$, $s,s'\in\setS, a, a'\in\setA$ and $\vecrho\in(\setS\times\setA)^{N-1}$ for some constants $C_1, C_2$.
    Then, the induced $P, R$ have Lipschitz modulus at most 
    $ C_1 N \kappa$ and $ C_2 N \sqrt{\kappa}$ respectively,
    that is, 
    \begin{align*}
        \| P(s,a,\mu) - P(s,a,\mu')\|_2 \leq C_1 N \kappa \| \mu - \mu' \|_2, \quad
        | R(s,a,\mu) - R(s,a,\mu')| \leq C_2 N \sqrt{ \kappa } \| \mu - \mu' \|_2,
    \end{align*}
    for any $s\in\setS,a\in\setA,\mu,\mu'\in\Delta_{\setS\times\setA}$.
\end{lemma}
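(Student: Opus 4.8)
The plan is to establish the Euclidean Lipschitz modulus of each lifted, symmetrized map $\symmbar{P^i(s,a,\cdot)}$ (resp. $\symmbar{R^i(s,a,\cdot)}$) on the discrete domain $\Delta_{\setS\times\setA,N-1}$ of $(N-1)$-agent empirical measures, and then transfer it to the continuum and through the averaging. First I would observe that by Lemma~\ref{lemma:kirszbraun} the Kirszbraun extension $\lipext{\cdot}$ preserves the Euclidean Lipschitz constant, and that an average $\frac1N\sum_i g_i$ of $L$-Lipschitz functions is again $L$-Lipschitz by the triangle inequality. Hence it suffices to bound, for each fixed $i,s,a$, the modulus of $\symmbar{P^i(s,a,\cdot)}$ on the finite set $\Delta_{\setS\times\setA,N-1}$; the claimed bounds on the induced $P,R$ then follow by taking the maximum over $i$.

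Next I would transfer the single-coordinate hypothesis from $P^i$ to its symmetrization. If $\vecx,\vecy\in(\setS\times\setA)^{N-1}$ differ in exactly one coordinate, then for every $g\in\Sym_{N-1}$ the tuples $g(\vecx),g(\vecy)$ also differ in exactly one coordinate, so the hypothesis gives $\|P^i(s,a,g(\vecx))-P^i(s,a,g(\vecy))\|_1\le C_1$; averaging over $g$ shows $\symmetrization{P^i(s,a,\cdot)}$ obeys the same single-coordinate bound $C_1$ (and $C_2$ for rewards). I would also note that $\symmetrization{P^i(s,a,\cdot)}$ inherits $\kappa$-sparsity on the same set $\setU=\setU_{s,a}$, so its value depends only on the counts of the tuple on $\setU$; equivalently, $\symmbar{P^i(s,a,\cdot)}(\mu)$ depends only on $(\mu(u))_{u\in\setU}$.

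The core step is a telescoping/relocation argument. Given $\mu,\mu'\in\Delta_{\setS\times\setA,N-1}$, I would pick representatives and transform one into a tuple whose $\setU$-projection matches $\mu'$ by relocating agents one at a time. By sparsity only relocations touching $\setU$ matter, and the minimal number of such single-coordinate moves is at most $(N-1)\sum_{u\in\setU}|\mu(u)-\mu'(u)|$, obtained by pairing additions with removals on $\setU$ and drawing the remainder from the $\perp$-reservoir (always feasible, since both are valid $(N-1)$-agent measures). Telescoping along this path and applying the single-coordinate bound gives
\begin{align*}
\big\|\symmbar{P^i(s,a,\cdot)}(\mu)-\symmbar{P^i(s,a,\cdot)}(\mu')\big\|_2 \le (N-1)\,C_1 \sum_{u\in\setU}|\mu(u)-\mu'(u)|,
\end{align*}
and likewise with $C_2$ for $R$. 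Finally I would pass from the $\setU$-restricted $\ell_1$ distance to the Euclidean norm: the bound $\sum_{u\in\setU}|\mu(u)-\mu'(u)|\le\kappa\,\|\mu-\mu'\|_2$ yields the factor $N\kappa$ for $P$, while Cauchy–Schwarz $\sum_{u\in\setU}|\mu(u)-\mu'(u)|\le\sqrt{\kappa}\,\|\mu-\mu'\|_2$ yields $N\sqrt{\kappa}$ for $R$.

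I expect the relocation-counting step to be the main obstacle: one must argue carefully that the $\setU$-marginal of one $(N-1)$-agent empirical measure can be turned into that of another along a path of single-coordinate changes that stays within valid empirical measures, that each elementary step alters at most one agent's contribution to $\symmetrization{P^i(s,a,\cdot)}$, and that the step count is governed by the $\ell_1$ mass moved on $\setU$ rather than on all of $\setS\times\setA$ (this is precisely where sparsity buys the $\kappa$ dependence). The remaining ingredients—Kirszbraun invariance of the modulus, stability under averaging, and the two norm conversions—are routine.
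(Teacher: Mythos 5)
Your proposal is correct and follows essentially the same route as the paper's proof: transfer the single-coordinate bound and $\kappa$-sparsity to the symmetrization, observe that the lifted map depends only on the $\setU_{s,a}$-marginal, bound the change by counting single-coordinate relocations (at most $(N-1)$ times the $\ell_1$ discrepancy on $\setU_{s,a}$), convert to the Euclidean norm via Cauchy--Schwarz, and finish with Kirszbraun invariance and averaging. The only cosmetic difference is that the paper first states the bound in terms of the full $\|\mu-\mu'\|_1$ and then restricts to $\setU_{s,a}$ using sparsity, whereas you restrict to $\setU_{s,a}$ from the outset; the relocation-counting step you flag as the main obstacle is handled in the paper by exactly the argument you sketch.
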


The above theorem characterizes a condition on the original FH-DG for the induced FH-MFG to have smooth (Lipschitz) dynamics.
The theorem suggests that the game must have \emph{no dominant players} so that the effect of each agent on others is upper bounded of order $\mathcal{O}(\sfrac{1}{N})$.
Furthermore, by standard results in MFG literature, if the ``no dominant players'' condition of Lemma~\ref{lemma:extension_bound} holds, the population update $\Gamma$ is also Lipschitz continuous with some modulus $\lpopmu$ that is independent of $N$.

Finally, we state the main approximation result, which quantifies how closely the true $N$-player game Nash equilibrium can be approximated by the mean-field Nash equilibrium of the symmetrized game.

\begin{theorem}\label{theorem:main_approximation}
    Let $\setG = (\setS, \setA, \initpop, N, H, \{ P^i\}_{i=1}^N, \{R^i\}_{i=1}^N)$ be an $N$-player FH-DG and $\inducedmfg{\setG} = (\setS, \setA, \initpop, H, P, R)$.
    Let the Lipschitz modulus of the population update operator $\gpop$ in $\mu$ be $\lpopmu$.
    If $\pi^* \in \Pi$ is a MFG-NE of $\inducedmfg{\setG}$, then $(\pi^*, \ldots, \pi^*) \in \Pi^N$ is an $\epsilon$-NE of the FH-DG, where
    \begin{align*}
        \epsilon = \mathcal{O}\left( \frac{H^2 (1-\lpopmu^H)}{(1-\lpopmu)\sqrt{N}} + \alpha H^2 \frac{1-\lpopmu^H}{1-\lpopmu} + \beta H \right).
    \end{align*}
\end{theorem}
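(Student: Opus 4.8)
The plan is to bound the finite-game exploitability $\max_i \ExpfinNi{i}(\pi^*,\ldots,\pi^*)$ by comparing, for every agent $i$ and every candidate policy, the true $N$-player return $\JfinNi{i}$ against the mean-field return $\Vfin$ evaluated at the equilibrium-induced flow $\vecmu^* := \lpop(\pi^*)$. The central object is a \emph{value-transfer lemma}: for any $\pi\in\Pi$, when agent $i$ plays $\pi$ while every other agent plays $\pi^*$, one has $|\JfinNi{i}(\pi,(\pi^*)^{-i}) - \Vfin(\vecmu^*,\pi)| \leq \mathrm{err}$, where $\mathrm{err}$ is exactly the claimed $\mathcal{O}(\cdot)$ quantity. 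Granting this lemma the theorem follows in three lines: apply it with $\pi=\pi^*$ to get $\JfinNi{i}(\pi^*,\ldots,\pi^*) \geq \Vfin(\vecmu^*,\pi^*) - \mathrm{err}$, apply it to the maximizing deviation $\pi$ to get $\JfinNi{i}(\pi,(\pi^*)^{-i}) \leq \Vfin(\vecmu^*,\pi)+\mathrm{err}$, and subtract. Since $\pi^*$ is an MFG-NE, $\max_\pi \Vfin(\vecmu^*,\pi) - \Vfin(\vecmu^*,\pi^*) = \Expfin(\pi^*) = 0$, so $\ExpfinNi{i} \leq 2\,\mathrm{err}$ for every $i$.

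To prove the value-transfer lemma I would couple the finite and mean-field processes and control two error quantities propagating along the horizon. Let $\delta_h := \EE{\|\empc{\vecrho_h} - \mu_h^*\|_1}$ be the expected deviation of the empirical state-action distribution from the mean-field flow, and let $\eta_h$ be the total-variation gap between the law of the tagged agent's pair $(s_h^i,a_h^i)$ and the mean-field marginal. For $\delta_h$ I would set up a one-step recursion: conditionally on the current configuration the agents transition independently, so the empirical next-state distribution concentrates around its conditional mean with fluctuation $\mathcal{O}(1/\sqrt N)$ (a bounded-differences / McDiarmid estimate); this conditional mean differs from $\gpop(\empc{\vecrho_h},\pi^*)$ by at most $\mathcal{O}(\alpha)$ by $\alpha,\beta$-symmetry; and $\gpop$ contracts the previous error by the factor $\lpopmu$. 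This yields $\delta_{h+1}\leq \lpopmu\,\delta_h + \mathcal{O}(1/\sqrt N + \alpha)$, hence $\delta_h = \mathcal{O}\big((1/\sqrt N + \alpha)\,\tfrac{1-\lpopmu^h}{1-\lpopmu}\big)$. The tagged-agent gap obeys $\eta_{h+1}\leq \eta_h + \mathcal{O}(\alpha + L_P\delta_h)$, where $L_P$ is the Lipschitz modulus of $P(s,a,\cdot)$ from Lemma~\ref{lemma:extension_bound}, so $\eta_h$ accumulates these per-step transition errors over the horizon.

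Finally I would assemble the value gap by a telescoping sum over the $H$ reward terms. At step $h$ the per-term error splits into a reward-mismatch piece bounded by $\beta$ (by $\alpha,\beta$-symmetry), a population-Lipschitz piece bounded by $L_R\delta_h$, and a tagged-agent-marginal piece bounded by $\eta_h$ (since $R\in[0,1]$). Summing over $h$ gives $\mathrm{err} \leq \beta H + L_R\sum_h \delta_h + \sum_h \eta_h$: the first sum yields the $\beta H$ term, while $\sum_h\delta_h$ and $\sum_h\eta_h$ each contribute the horizon factor $H\cdot\tfrac{1-\lpopmu^H}{1-\lpopmu}$, with the extra power of $H$ in $\sum_h \eta_h$ arising because each $\eta_h$ is itself a sum of per-step drifts. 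This produces the $H^2\tfrac{1-\lpopmu^H}{1-\lpopmu}$ scaling on both the $1/\sqrt N$ and $\alpha$ contributions. The main obstacle is the concentration step: establishing the clean $\mathcal{O}(1/\sqrt N)$ bound on $\delta_h$ despite the agents' transitions being coupled through the shared empirical measure, and---crucially for the deviation case---verifying that a single arbitrarily deviating agent perturbs the empirical distribution of the remaining $N-1$ agents only by $\mathcal{O}(1/N)$, which is precisely where the ``no dominant player'' structure underlying Lemma~\ref{lemma:extension_bound} enters.
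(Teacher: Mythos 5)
Your proposal follows essentially the same route as the paper's proof: the same four-stage structure (a contraction recursion for the empirical-distribution error $\delta_h$ with per-step $\mathcal{O}(1/\sqrt{N}+\alpha)$ drift, an accumulating recursion for the tagged agent's marginal gap $\eta_h$, a per-step reward decomposition into $\beta$-, population-Lipschitz-, and marginal-mismatch pieces summed over the horizon, and the final two-sided value-transfer argument combined with $\Expfin(\pi^*)=0$), and your accounting of the $1/N$ perturbation from the single deviating agent matches the paper's treatment via $\widebar{\Delta}\leq 2/N$. The resulting error terms and their $H$- and $\lpopmu$-dependence agree with the paper's bound, so the plan is sound as stated.
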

\begin{proof}(sketch)
     We show that (1) the empirical distribution of agent state-actions over $\setS\times\setA$ approximates the induced mean-field  $\lpop(\pi^*)$, (2) marginal distributions of states of an agent $\Prob[s_h^i = \cdot]$ in FH-DG are also approximated by the mean-field, and (3) explicitly bounding the difference between $\Vfin$ and $\JfinNi{i}$.
    The formal proof and explicit upper bound are presented in Section~\ref{sec:proof_main_approx}.
\end{proof}

Most critically, the approximation bound proves that the MFG approximation is robust to small heterogeneity: when $\alpha,\beta$ are small, the induced MFG-NE approximates the true NE well.
Furthermore, the upper bound suggests three different asymptotic regimes depending on $\gpop$ being non-expansive, contractive, or expansive.
If $\lpopmu \leq 1$, the bound above is polynomial.
If $\lpopmu > 1$, $\alpha > 0$ might incur an exponential dependency on $H$, whereas the error due to $\beta > 0$ only scales linearly with $\cO(\beta H)$.
However, the exponential worst-case dependence of the bias on $H$ is generally unavoidable even under perfect symmetry, as matching lower bounds are known \cite{yardim2024meanfield}.
Theorem~\ref{theorem:main_approximation} also recovers the bounds known for exactly symmetric FH-DG (i.e. $\alpha=\beta=0$, see \cite{yardim2024meanfield}).

Finally, we emphasize that Theorem~\ref{theorem:main_approximation} does not assume any particular structure on the FH-DG: the results apply for any values of $\alpha,\beta$, although the quality of approximation will vary.
Furthermore, it is known that for $N>2$, finding an $\epsilon$-NE for the FH-DG is \textsc{PPAD}-complete even for a certain \emph{absolute constant} $\epsilon$ \cite{goldberg2011survey}.
Hence, even when $\alpha,\beta$ are not close to $0$, the result will be useful in approaching the \textsc{PPAD}-complete limit via mean-field approximation.

The results so far already suggest a learning algorithm: 
one can estimate (e.g. via neural networks) the induced $P,R$ and solve the MFG directly with standard algorithms (e.g. \cite{perrin2020fictitious, lauriere2022scalable}).
However, this method can be prohibitively expensive as it involves learning functions to and from $\Delta_{\setS\times\setA}$.
The remainder of the paper will be dedicated to formulating more efficient methods.

\subsection{Policy Evaluation with \texorpdfstring{$\alpha, \beta$}{a,b}-Symmetry}

In this section, we analyze TD learning for $\alpha,\beta$-symmetric FH-DG.
While Definition~\ref{def:inducedmfg} provides an explicit construction of an MFG, we show that this construction is not needed for policy evaluation.
Namely, using TD learning, a policy $\pi$ can be evaluated with respect to the (induced) mean-field $\lpop(\pi)$ only through sampling trajectories of the FH-DG $\setG$.
We first define Q functions on the MFG.

\begin{definition}[Mean-field Q values]\label{def:qvalue}
    For the MFG $(\setS, \setA, \initpop, H, P, R)$, for $\tau\geq 0$, $h = 0,\ldots,H-1$, we define (entropy regularized) Q-values for each $h = 0,\ldots,H-1$ and $s\in\setS, a\in\setA$ as
    \begin{align*}
        \Qpi{h} (s,a ) := \Exop \left[ \sum_{h'=h}^{H-1} R(s_{h'}, a_{h'}, \mu_{h'}) + \tau \entropy( \pi_{h'}(\cdot|s_{h'})) \middle| \substack{s_h = s, \, a_h = a , \, s_{h'+1} \sim P(s_{h'}, a_{h'}, \mu_{h'}), \\   a_{h'} \sim \pi_{h'+1}(s_{h'+1}), \, \mu_{h'} := \lpop(\pi)_{h'}, \forall h' \geq h} \right].
    \end{align*}
\end{definition}

In other words, the Q-values of a policy $\pi$ are computed with respect to the MDP induced by the population distributions $\lpop(\pi)$ in the MFG.
We note that the above definition does not match the typical definition of Q-values in a multi-agent setting, and rather is defined concerning an abstract MFG.
We note that we will occasionally treat $\Qpi{h}$ as an element of the vector space $\mathbb{R}^{\setS\times\setA}$.

For the finite-horizon problem, we will analyze TD learning, which is a standard method for policy evaluation with established guarantees beyond MFGs \cite{tsitsiklis1996analysis}.
We formulate Algorithm~\ref{alg:td}, presented for simplicity as performing TD learning on agent $1$.

\begin{algorithm}
\caption{TD Learning for $\alpha,\beta$-symmetric games}\label{alg:td}
\begin{algorithmic}[1]
\Require FH-DG $\setG$, epochs $M$, learning rates $\{\lr{m}{h}\}_{m}$, policy $\pi \in \Pi$, entropy regularization $\tau \geq 0$.
\State $\Qest{0}{h}(s,a) \leftarrow 0, \quad \forall h \in \{ 0, \ldots H-1\}, s\in\setS,a\in\setA$
\For{$m \in 0, 1, \ldots, M-1$} 
    \State Using $\pi$ for all agents, sample path from $\setG$: $\{\vecrho_{m, h}, \vecr_{m, h}\}_{h=0}^{H-1} := \{s^i_{m,h}, a^i_{m,h}, r^i_{m, h}\}_{i,h}$.
    \State Perform TD update: 
        \begin{align*}
            \Qest{m+1}{h} &\leftarrow \Qest{m}{h} + \lr{m}{h}\big(\Qest{m}{h+1}(s_{m,h+1}^1, a_{m,h+1}^1) + r_{m,h}^1 + \tau\entropy( \pi_h(\cdot|s_{m,h}^1)) \\
            &\quad \quad - \Qest{m}{h}(s_{m,h}^1, a_{m,h}^1)\big) \vece_{s_{m,h}^1, a_{m,h}^1}, \, \forall h < H-1
           \\
            \Qest{m+1}{H-1} &\leftarrow \Qest{m}{H-1} + \lr{m}{h} (\tau\entropy( \pi_{H-1}(\cdot|s_{m,H-1}^1)) + r_{m,H-1}^1 - \Qest{m}{H-1}(s_{m,h}^1, a_{m,h}^1)) \vece_{s_{m,H-1}^1, a_{m,H-1}^1}
        \end{align*}
\EndFor
\State Return $\{\Qest{M}{h}\}_{h=0}^{H-1}$.
\end{algorithmic}
\end{algorithm}

\begin{theorem}[TD learning for $\alpha,\beta$-Symmetric Games]\label{theorem:td}
Let $\setG$ be an $N$-player FH-DG and $\inducedmfg{\setG}$ be its induced MFG.
Let $\pi \in \Pi$ be a policy such that $\lpop(\pi) = \vecmu = \{\mu_h \}_h$ and $\delta := \inf_{ \substack{ h,s,a \text{ s.t. } \Prob[s^1_{h} = s, a^1_{h} = s] > 0 }} \Prob[s^1_{h} = s, a^1_{h} = s].$
Assume Algorithm~\ref{alg:td} is run with $\pi$ for $M > 0$ epochs for with learning rates $\lr{m}{h} := \frac{2\delta^{-1}}{m + 2\delta^{-1}}$.
Then, the output $\{\Qest{M}{h}\}_h$ of Algorithm~\ref{alg:td} satisfies $\Exop\left[ \sum_{h=0}^{H-1} \| \Qest{M}{h} - \Qpi{h} \|^2_{\mu_h} \right] \leq \cO \left( \frac{1}{M} + \frac{1}{N} + \alpha^2 + \beta^2 \right),$
where $\|\cdot\|_{p}$ is defined for $p\in\Delta_{\setS\times\setA}$ as $\| q \|_p := \sqrt{\sum_{s,a} p(s,a) q(s,a)^2}$.
\end{theorem}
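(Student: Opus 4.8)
The plan is to read Algorithm~\ref{alg:td} as finite-horizon TD learning on an \emph{effective single-agent MDP} that agent $1$ experiences inside $\setG$, and to split the error into a statistical tracking error against this effective MDP and a model bias between the effective MDP and the target mean-field. Concretely, when all agents follow $\pi$, define the time-inhomogeneous effective transitions and rewards of agent $1$ by the conditional expectations
\begin{align*}
    \widehat{P}_h(s'|s,a) := \Prob[s^1_{h+1} = s' \mid s^1_h = s, a^1_h = a], \qquad \widehat{R}_h(s,a) := \Exop[\, r^1_h \mid s^1_h = s, a^1_h = a],
\end{align*}
taken over the full $N$-player process, and let $\widetilde{Q}_h$ be the exact entropy-regularized Q-values of this effective MDP. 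With $\nu_h(s,a):=\Prob[s^1_h=s,a^1_h=a]$ denoting agent $1$'s visitation, I would split
\begin{align*}
    \| \Qest{M}{h} - \Qpi{h} \|^2_{\mu_h} \leq 2\| \Qest{M}{h} - \widetilde{Q}_h \|^2_{\mu_h} + 2\| \widetilde{Q}_h - \Qpi{h} \|^2_{\mu_h},
\end{align*}
and bound the two terms by the $\cO(1/M)$ statistical error and the $\cO(1/N + \alpha^2 + \beta^2)$ model bias respectively.

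For the statistical term I would argue by backward induction over $h = H-1, \ldots, 0$. The key observation is that, although agent $1$'s marginal trajectory is not Markovian, finite-horizon TD only uses one-step conditional structure: conditioned on a visit to $(s,a)$ at step $h$ and on the (independently-sampled) estimate $\Qest{m}{h+1}$, the TD target is an unbiased estimate of the effective Bellman backup $\widehat{R}_h(s,a) + \tau\entropy(\pi_h(\cdot|s)) + \sum_{s'}\widehat{P}_h(s'|s,a)\sum_{a'}\pi_{h+1}(a'|s')\widetilde{Q}_{h+1}(s',a')$, whose fixed point is exactly $\widetilde{Q}_h$. Each reachable coordinate then runs a scalar stochastic-approximation recursion with bounded martingale-difference noise; the stepsize $\lr{m}{h}=2\delta^{-1}/(m+2\delta^{-1})$ together with the visitation lower bound $\delta$ ensures every visited coordinate is updated often enough to yield the standard $\Exop\|\Qest{M}{h}-\widetilde{Q}_h\|^2_{\nu_h} = \cO(1/M)$ rate, with the bootstrap error from level $h+1$ folded into the induction hypothesis. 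Converting the $\nu_h$-weighting to the $\mu_h$-weighting costs only a $\delta$-dependent constant absorbed into $\cO$, since on its support $\nu_h$ is bounded below by $\delta$, so $\mu_h \leq \delta^{-1}\nu_h$ pointwise on the common support.

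For the model-bias term I would invoke a simulation-lemma comparison of the effective MDP $(\widehat P_h,\widehat R_h)$ against the mean-field MDP $(P(\cdot,\cdot,\mu_h),R(\cdot,\cdot,\mu_h))$ where $\mu_h=\lpop(\pi)_h$. The per-step reward gap satisfies $|\widehat R_h(s,a) - R(s,a,\mu_h)| \leq \Exop[\,|R^1(s,a,\vecrho^{-1}_h) - R(s,a,\empc{\vecrho^{-1}_h})| \mid s,a] + \Exop[\,|R(s,a,\empc{\vecrho^{-1}_h}) - R(s,a,\mu_h)| \mid s,a]$; the first summand is at most $\beta$ by $\alpha,\beta$-symmetry (Definition~\ref{def:alphabeta}), and the second is bounded by the reward Lipschitz modulus of Lemma~\ref{lemma:extension_bound} times $\Exop[\|\empc{\vecrho^{-1}_h}-\mu_h\|_2 \mid s,a] = \cO(1/\sqrt N)$, the concentration of the other agents' empirical measure around the mean field. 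The dynamics gap $\|\widehat P_h(\cdot|s,a) - P(\cdot|s,a,\mu_h)\|_1$ is controlled analogously by $\alpha + \cO(1/\sqrt N)$. Propagating these per-step gaps through the $H$-step Bellman recursion (Q-values bounded by $\cO(H)$) gives a value-level bias of order $\alpha + \beta + 1/\sqrt N$, and squaring yields $\| \widetilde{Q}_h - \Qpi{h}\|^2_{\mu_h} = \cO(1/N + \alpha^2 + \beta^2)$ with all horizon- and Lipschitz-dependent factors absorbed into the constant.

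The main obstacle is the statistical term, precisely because agent $1$'s trajectory $\{(s^1_h,a^1_h)\}_h$ is \emph{not} Markovian: its transitions are driven by the fluctuating empirical measure $\empc{\vecrho^{-1}_h}$ of the other agents, so off-the-shelf TD convergence theorems do not apply. The resolution is to verify that, after centering against the effective conditional backup above, the TD noise remains a bounded martingale-difference sequence, and to establish the concentration estimates of the bias term \emph{under conditioning} on agent $1$'s own state-action — which perturbs the law of $\empc{\vecrho^{-1}_h}$ by only $\cO(1/N)$ via a weak-dependence/exchangeability argument. Carefully combining this conditioning with the $1/M$ stochastic-approximation rate under the $\delta$-dependent stepsize, and justifying the $\nu_h\to\mu_h$ norm conversion through the support lower bound $\delta$, is where the technical weight of the proof concentrates.
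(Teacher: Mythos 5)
Your plan is sound and would reach the stated bound, but it is organized quite differently from the paper's proof. The paper runs a single backward-inductive stochastic-approximation recursion directly on $\Exop[\|\Qest{m}{h}-\Qpi{h}\|^2_{p_h}]$ (with $p_h$ agent $1$'s visitation law): at each level the TD-target error is decomposed into the bootstrap error against $\Qpi{h+1}$, the gap between agent $1$'s true one-step reward/kernel and the mean-field $R(s,a,\mu_h)$, $P(\cdot|s,a,\mu_h)$, and martingale noise, so the $\alpha,\beta,\sfrac{1}{N}$ bias enters as an additive $\cO(\lr{m}{h}(\alpha^2+\beta^2+B_h))$ term inside the recursion, where $B_h=\Exop[\|\empc{\vecrho^{-1}_h}-\mu_h\|^2]$ is controlled by Lemma~\ref{lemma:pop_error_squared}. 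You instead interpose the effective time-inhomogeneous MDP $(\widehat P_h,\widehat R_h)$ seen by agent $1$, prove a bias-free $\cO(\sfrac{1}{M})$ rate to its Q-values $\widetilde Q_h$, and pay the model bias once through a simulation lemma; this separation is conceptually cleaner, each half is standard, and the unbiasedness of the TD target for the effective backup does hold because $\Qest{m}{h+1}$ is measurable with respect to past (i.i.d.) episodes. Two caveats. First, your stated justification for the conditional concentration --- that conditioning on $(s^1_h,a^1_h)=(s,a)$ perturbs the law of $\empc{\vecrho^{-1}_h}$ by only $\cO(\sfrac{1}{N})$ --- is not correct in general, since agent $1$'s position can be strongly correlated with the population configuration through the shared history; what rescues the step is either the crude bound $\Exop[X\mid A]\le\Prob[A]^{-1}\Exop[X]\le\delta^{-1}\Exop[X]$ for $X\ge 0$, or the paper's device of leaving the conditional expectations inside the recursion and summing them against the visitation weights $p_h(s,a)$, which by the tower property recovers the unconditional $B_h$ exactly and avoids conditional concentration altogether. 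Second, $\Exop[\|\empc{\vecrho^{-1}_h}-\mu_h\|]$ is of order $\sfrac{1}{\sqrt N}+\alpha\cdot\operatorname{poly}(H)$ rather than just $\sfrac{1}{\sqrt N}$, since the $\alpha$-drift accumulates over time steps; this is harmless because your final bound already carries $\alpha^2$. The $\nu_h\to\mu_h$ norm conversion you flag is glossed over in the paper as well (its proof bounds the $\|\cdot\|_{p_h}$ norm) and requires $\supp(\mu_h)\subseteq\supp(\nu_h)$ in addition to the lower bound $\delta$.
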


Theorem~\ref{theorem:td} provides a finite-sample guarantee for TD learning, a building block of many MARL and MFG algorithms.
Most importantly, it suggests that in order to use mean-field game theory to approximate NE of an FH-DG $\setG$, there is no need to explicitly build a model of $\inducedmfg{\setG}$.
Instead, TD learning in the original $N$-player game when all the agents pursue policy $\pi$ allows the evaluation of the mean-field $Q$-values of $\pi$ efficiently.
The rate of convergence suggested by Theorem~\ref{theorem:td} matches the optimal known rates for TD-learning in a single-agent setting \cite{kotsalis2022simpleII}.
In practice, one can use the trajectories of all $N$ agents to further improve efficiency, instead of only using that of agent $i=1$.

\subsection{Learning NE under \texorpdfstring{$\alpha, \beta$}{a,b}-Symmetry}

We complete our framework by providing our key theoretical result: any $\alpha,\beta$-symmetric DG can be solved approximately only using samples from the $N$-player DG, under monotonicity assumptions.
Our algorithm uses TD learning as a building block, with stochastic policy evaluations used for policy mirror descent updates \cite{lan2023policy, yardim2023policy, zhang2023learning}.

\begin{definition}[Monotone MFG \cite{perrin2020fictitious, perolat2022scaling}]\label{def:monotone_mfg}
    A MFG with dynamics $P$ and rewards $R$ is called \emph{monotone} if $P$ is independent of $\mu$, and for all $\mu,\mu'$ it holds that $\sum_{s,a} (R(s,a,\mu) - R(s,a,\mu'))(\mu(s,a) - \mu'(s,a)) < 0$.
    A DG $\setG$ is called \emph{monotone extendable} if $\inducedmfg{\setG}$ is monotone.
\end{definition}

To motivate this definition, we provide a large class of DGs that are relevant and monotone-extendable.

\begin{example}[Asymmetric dynamic congestion games]
    For any $i\in[N]$, let $h_i:\setS\times\setA\times [N] \rightarrow [0,1], r^i:\setS\times\setA \rightarrow [0,1] $ be arbitrary functions so that $h_i(s,a,\cdot)$ is non-increasing for any $s,a$.
    Assume $P^i(\cdot|s,a,\vecrho^{-i})$ does not depend on $\vecrho^{-i}$ for any $s,a$, and $R^i(s,a,\vecrho^{-i})$ be $1$-sparse so that
    $R^i(s,a,\vecrho^{-i}) = h_i(s,a,\sum_{j=1}^N \ind{\rho_j = (s,a)}) + r_i(s,a)$.
    Such games can be seen as generalizations of congestion games \cite{rosenthal1973class} and congestion games with player-specific incentives \cite{milchtaich1996congestion}, for which an efficient solution is unknown.
    We prove monotone extendability and characterize the values of $\alpha,\beta$ and Lipschitz constants for such games in Section~\ref{sec:appendix_monotone_congestion}.
\end{example}

\begin{algorithm}
\caption{Policy mirror descent for $\alpha,\beta$-symmetric games (Symm-PMD)}\label{alg:pmd}
\begin{algorithmic}[1]
\Require FH-DG $\setG$, epochs $T$, TD learning epochs $M$, learning rates $\{\plr{t}\}_{t}$, entropy $\tau$.
\State Initialize uniform policy: $\pi_{0,h}(a|s) = \sfrac{1}{|\setA|}, \quad \forall h \in \{ 0, \ldots H-1\}, s\in\setS,a\in\setA$
\For{$t \in 0, 1, \ldots, T-1$} \Comment{\emph{Run for $T$ epochs}}
    \State Run Algorithm~\ref{alg:td} for policy $\pi_{t}$, $M$ epochs, entropy $\tau$, $\{\lr{m}{h}\}_{m}$ as in Theorem~\ref{theorem:td}
    \State Obtain $\{\Qest{t}{h}\}_{h=0}^{H-1}$, \quad set $\qest{t}{h}(s,a) := \Qest{t}{h}(s,a) - \tau\entropy(\pi_{t,h}(\cdot|s))$.
    \State Perform PMD update: ($\forall s\in\setS, h = 0,\ldots,H-1$)
        \begin{align*}
            \widehat{\pi}_{t+1,h}(\cdot|s) := \argmax_{u\in\Delta_\setA} \frac{\plr{t}}{1 - \tau \plr{t}} \Bigg[ \qest{t}{h}(s, \cdot)^\top u + \tau\setH(u) \Bigg] - \kl(u|\pi_{t,h}(\cdot|s)).
        \end{align*}
    \State Update policies: $\pi_{t+1,h}(\cdot|s) := \left(1 - \frac{1}{t+1}\right)\widehat{\pi}_{t+1,h}(\cdot|s) + \frac{1}{t+1}\operatorname{Unif(\cdot)}, \quad \forall s\in\setS$.
\EndFor
\State Return $\bar{\pi} := \{\frac{1}{T+1}\sum_{t=0}^T \pi_{t,h}\}_{h=0}^{H-1}$.
\end{algorithmic}
\end{algorithm}

\begin{theorem}[Convergence of PMD]\label{theorem:pmd}
    Let $\setG$ be a monotone extendable $\alpha,\beta$-symmetric game.
    Assume Symm-PMD (Algorithm~\ref{alg:pmd}) runs with learning rates $\plr{t} = \frac{1}{\sqrt{t+1}}$, entropy regularization $\tau\in(0,\sfrac{1}{2})$, with $M \geq \setO(\varepsilon^{-2})$ TD iterations for $T\geq\widetilde{\setO}(\varepsilon^{-4})$ epochs.
    Then, the output policy $\bar{\pi}$ is a $\setO(\varepsilon\tau^{-1} + \alpha \tau^{-1} + \beta\tau^{-1} + \sfrac{\tau^{-1}}{\sqrt{N}} + \tau)$-Nash equilibrium of $\setG$ in expectation.
\end{theorem}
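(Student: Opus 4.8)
The plan is to do all the real work on the abstract induced MFG $\inducedmfg{\setG}$ and transfer back to the finite game $\setG$ only at the very end. Concretely, I would first bound the \emph{mean-field} exploitability $\Expfin(\bar\pi)$ of the averaged iterate in expectation, and then invoke an approximate-NE version of Theorem~\ref{theorem:main_approximation} to conclude that $(\bar\pi,\ldots,\bar\pi)$ is an approximate NE of the FH-DG. Although Theorem~\ref{theorem:main_approximation} is stated for an exact MFG-NE, its argument only compares $\Vfin(\lpop(\bar\pi),\cdot)$ against the finite-agent return for arbitrary policies; reading off the two one-sided value-approximation bounds (for the best deviation and for $\bar\pi$ itself) gives, for \emph{any} $\bar\pi$, the clean robust inequality $\max_i\ExpfinNi{i}(\bar\pi)\le\Expfin(\bar\pi)+\cO(\sfrac{1}{\sqrt N}+\alpha+\beta)$. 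Since $\tau<\sfrac12$, this additive penalty is dominated by the $\tau^{-1}$-amplified terms appearing below and is absorbed into the final bound.

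The core is the PMD analysis on the entropy-regularized monotone MFG. I would treat the update in Algorithm~\ref{alg:pmd} as mirror descent with the KL-Bregman divergence and apply the three-point identity per state $s$ and step $h$, obtaining the standard one-step inequality relating $\kl(\pi^*|\pi_{t+1,h}(\cdot|s))$, $\kl(\pi^*|\pi_{t,h}(\cdot|s))$, the regularized advantage of the current iterate, and an $\cO(\plr{t}^2)$ step term. Here $\pi^*$ is the regularized MFG-NE, which exists and is unique because monotonicity (Definition~\ref{def:monotone_mfg}, with $P$ independent of $\mu$) makes the regularized mean-field operator strongly monotone with modulus of order $\tau$. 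Summing over $t$, telescoping the Bregman terms, and using monotonicity to pass from the per-iterate regularized advantages to the exploitability of the \emph{averaged} policy is the crux of the argument: monotonicity converts the accumulated mirror-descent regret, which with $\plr{t}=\sfrac{1}{\sqrt{t+1}}$ is $\widetilde{\cO}(\sqrt T)$, i.e. an averaged gap of $\widetilde{\cO}(T^{-1/2})$, into a bound on $\Expfin(\bar\pi)$ that is essentially the square root of that gap. This square-root conversion is what yields the effective rate $\widetilde{\cO}(T^{-1/4})$ and forces $T\ge\widetilde{\cO}(\varepsilon^{-4})$ to push the optimization error below $\varepsilon$. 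Because the regularized equilibrium is only $\tau$-strongly monotone, both this optimization gap and any gradient perturbation enter the \emph{unregularized} exploitability bound amplified by $\tau^{-1}$.

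Next I would inject the inexactness of the oracle. Algorithm~\ref{alg:pmd} uses $\qest{t}{h}=\Qest{t}{h}-\tau\entropy(\pi_{t,h})$ in place of the true regularized advantage, where $\Qest{t}{h}$ comes from $M$ TD steps on the finite-agent trajectories. I would bound the mismatch directly by Theorem~\ref{theorem:td}, namely $\EE{\sum_h\|\Qest{t}{h}-Q^{\tau,\pi_t}_h\|_{\mu_{t,h}}^2}\le\cO(\sfrac1M+\sfrac1N+\alpha^2+\beta^2)$ with $\mu_{t,h}=\lpop(\pi_t)_h$, and take square roots to pass to the norm used in the mirror-descent inequality. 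Propagated through the $\tau^{-1}$ conditioning above, with $M\ge\cO(\varepsilon^{-2})$ the $\sfrac1M$ term contributes $\cO(\varepsilon\tau^{-1})$ and the bias terms contribute $\cO((\alpha+\beta+\sfrac{1}{\sqrt N})\tau^{-1})$. Finally, the regularization itself is a bias: the regularized and unregularized exploitabilities differ by at most the magnitude of the entropy term, $\cO(\tau\log|\setA|)=\cO(\tau)$, producing the lone additive $+\tau$. Collecting the optimization error $\cO(\varepsilon\tau^{-1})$, the amplified statistical bias $\cO((\alpha+\beta+\sfrac{1}{\sqrt N})\tau^{-1})$, and the regularization bias $\cO(\tau)$ bounds $\Expfin(\bar\pi)$, and the transfer step completes the proof.

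The main obstacle I anticipate is the middle step: using monotonicity to turn the coupled policy/population mirror-descent dynamics into an exploitability bound on the \emph{averaged} policy, while simultaneously tracking the $\tau$-dependence and the mismatch between the weighting distributions $\mu_{t,h}=\lpop(\pi_t)_h$ (under which Theorem~\ref{theorem:td} controls the TD error) and the distributions needed to evaluate the regularized advantage against the comparator $\pi^*$. That $P$ is $\mu$-independent under monotonicity makes $\lpop$ affine in the state-action flow, which tames the population tracking; but ensuring the per-state mirror-descent inequalities aggregate consistently under these time-varying weights, and that the square-root/monotonicity conversion interacts cleanly with the $\tau^{-1}$ gradient-error amplification, is the delicate part of the argument.
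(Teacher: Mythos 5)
Your proposal follows essentially the same route as the paper's proof: a KL-based mirror-descent recursion toward the unique $\tau$-regularized MFG-NE, with monotonicity (via the monotone improvement and performance difference lemmas) killing the value-gap term, the TD error from Theorem~\ref{theorem:td} injected and amplified by $\tau^{-1}$, a Pinsker/Lipschitz-exploitability conversion producing the $T^{-1/4}$ rate and hence $T\geq\widetilde{\setO}(\varepsilon^{-4})$, the $+\tau$ regularization bias from Lemma~\ref{lemma:reg_bias}, and a final transfer via the ($\delta$-NE version of the) approximation theorem. The distribution-mismatch issue you flag is exactly the point the paper handles with a bounded mismatch coefficient $C_{\text{dist}}$ guaranteed by the uniform mixing and entropy regularization, so your plan is sound.
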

\begin{proof}
    The proof is based on \cite{zhang2023learning} with the added complications of finitely many agents, approximate symmetry, and stochastic TD learning.
    Full proof is presented in Section~\ref{sec:proof_pmd}.
\end{proof}

Theorem~\ref{theorem:pmd} suggests a sample complexity of $\widetilde{\setO}(\varepsilon^{-6})$ trajectories from the $N$-agent FH-DG in order to compute a $\varepsilon$-NE (up to symmetrization bias).
In fact, it is (to the best of our knowledge) the first finite-sample guarantee for computing approximate NE for a large class of dynamic games with many agents.
Most importantly, the number of agents $N$ does not appear in the complexity: hence, the curse of many agents can be provably circumvented for $\alpha,\beta$-symmetric games.
Even in the exactly symmetric case ($\alpha=\beta=0$), Theorem~\ref{theorem:pmd} is the first guarantee to the best of our knowledge for learning FH-MFG-NE only observing trajectories of the $N$-agent game.

\section{Experimental Results}

We support our theory by deploying Symm-PMD (Algorithm~\ref{alg:pmd}) on several large-scale $\alpha,\beta$-symmetric games.
For evaluations, we modify the well-known benchmarks from MFG literature (see \cite{cui2021approximately}) to propose three games with asymmetric incentives: A-RPS, A-SIS, and A-Taxi.
\textbf{A-RPS} is an adaptation of RPS \cite{cui2021approximately} to incorporate asymmetric rewards for agents.
\textbf{A-SIS} models disease propagation in a large population individually choosing to self-isolate or go out, incorporating asymmetric agents with individual susceptibility/healing rates and unique aversions to isolation.
Finally, \textbf{A-Taxi} simulates a large population of taxis serving clients in a grid, with individual preferences for regions and crowd aversion.
In our experiments, we use $N=1000$ and $N=2000$ agents demonstrating the ability of our framework to handle large MARL games.
A-Taxi incorporates $|\setS| > 2^{30}, H=128$, hence necessitates neural parameterization.
Our setup is thoroughly described in Section~\ref{sec:experiment_details}.

We deploy Symm-PMD on two different DGs, with $\alpha=0, \beta\approx 0.1, N=2000, H=10$ on A-RPS and $\alpha\approx\beta\approx 0.1, N=1000, H=20$ for A-SIS.
We compare the symmetrized approach of Symm-PMD to its asymmetric counterpart independent PMD (IPMD), where a separate policy is learned for each agent.
The training curves, pictured in Figures~\ref{figure:main}-(b,c) characterize the exploitabilty of the learned policies throughout training.
In both cases, while IPMD has no approximation bias in principle, it struggles to converge presumably suffering from the curse-of-many-agents.
Symm-PMD, on the other hand, rapidly converges to a policy profile with low exploitability and is much more sample-efficient.
In both cases, Symm-PMD converges to a solution with low bias.

We demonstrate the scalability of our approach with neural policies.
In the A-Taxi environment, we use PPO \cite{schulman2017proximal} with symmetrized neural policies and compare to the settings the policy has access to agent identities (either one-hot encoded, in OH-NN, or as an integer, in ID-NN).
Symmetrized policies outperform either benchmark by converging faster and to a better solution.
Learning independent neural policies for each of \num{1000} agents (Ind-NN) is extremely expensive in this setting: this approach performs the worst and is orders of magnitude computationally more expensive.

\textbf{Computational efficiency.} 
We also emphasize the computational efficiency of symmetrization: since our algorithm need not learn separate policies for each agent, it is drastically more computationally efficient compared to independent PMD.
In A-SIS and A-RPS benchmarks, learning is ~\num{60}\% faster, whereas symmetrized neural PPO in A-Taxi is >95\% faster than its independent counterpart.

\begin{figure}[h]
\begin{tabular}{ccc}
  \includegraphics[width=0.305\linewidth]{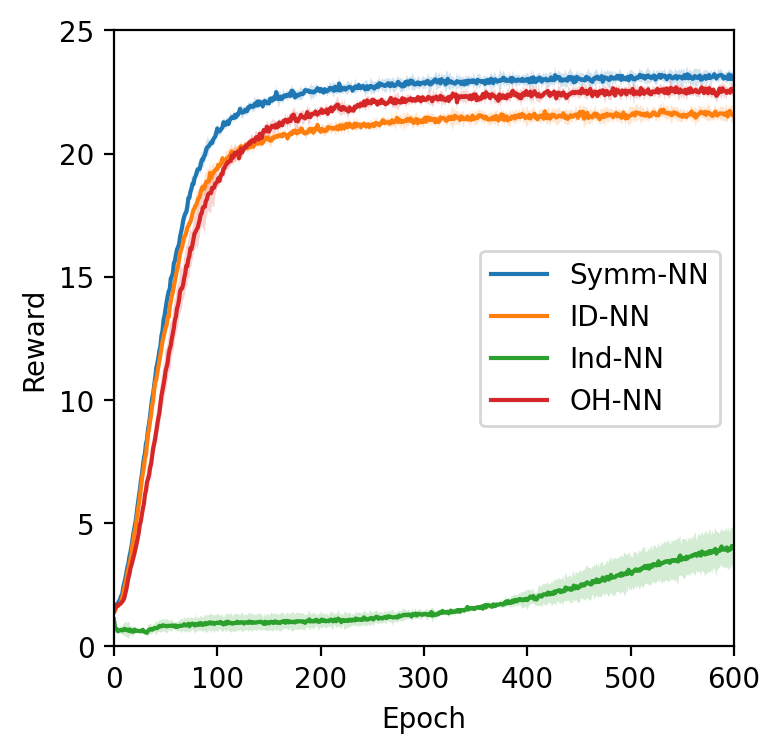} & 
  \includegraphics[width=0.31\linewidth]{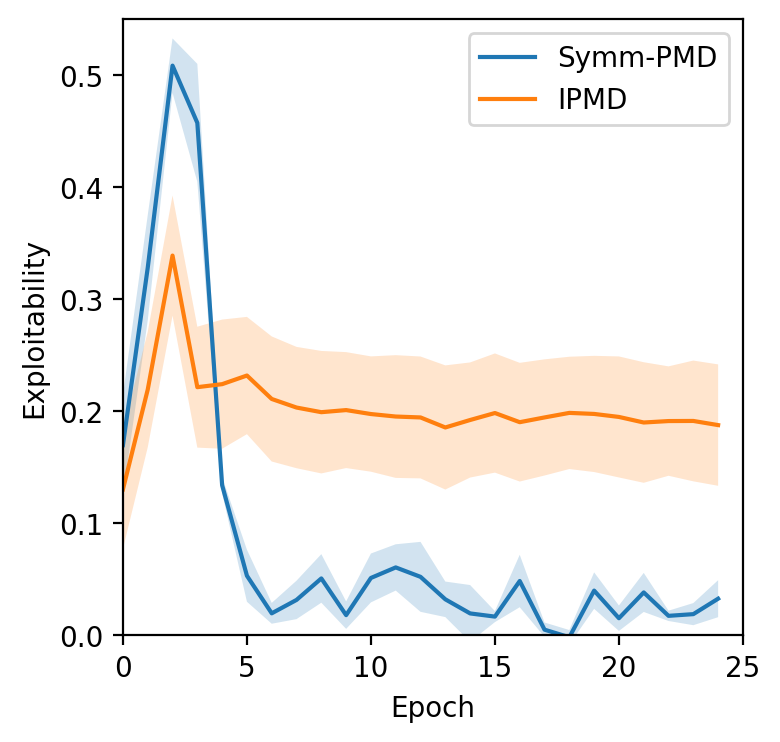} &   
  \includegraphics[width=0.31\linewidth]{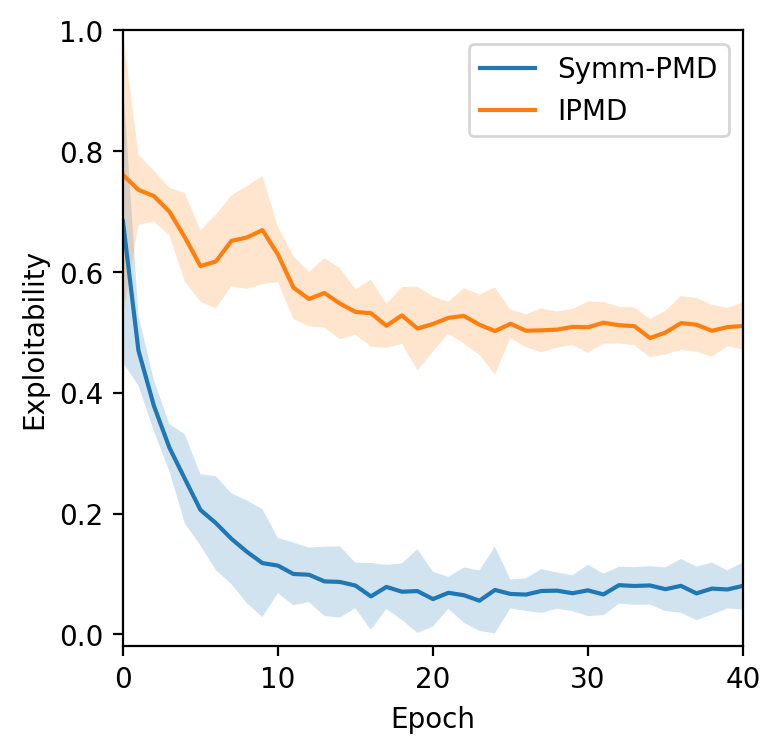} \\
(a) & (b) & (c)
\end{tabular}
\caption{
(a) The mean rewards throughout training of symmetric policies (Sym-NN), policies with onehot encoding for $i$ (OH-NN), policies with numerical encoding for $i$ (Ind-NN) and independent policies (Ind-NN) in A-Taxi.
(b, c) The exploitability throughout multiple epochs of Symm-PMD (Algorithm~\ref{alg:pmd}) and IPMD, for A-RPS with $\beta=0.1$ in (b) and A-SIS with $\alpha=\beta=0.1$ in (c).
}
\label{figure:main}
\end{figure}

\section{Discussion and Conclusion}

We formulated a new class of competitive MARL problems ($\alpha,\beta$-symmetric games) that can be tractably solved.
We constructively showed that every $\alpha,\beta$-symmetric FH-DG can be efficiently approximated by an induced MFG.
We provided theoretical guarantees for TD learning, and under monotonicity, for PMD to approximate NE up to symmetrization bias.
These results provide a complete theory of learning under approximate symmetry, supported by numerical experiments.

\begin{ack}
This project is supported by Swiss National Science Foundation (SNSF) under the framework of NCCR Automation and SNSF Starting Grant.
\end{ack}

\bibliographystyle{abbrvnat}
\bibliography{references}

\appendix

\section{Preliminaries}\label{sec:appendix_prelim}

Firstly, we present several basic facts regarding symmetrization and symmetric functions.

\begin{lemma}
    For any $f:\setX^K\rightarrow\setY$, $\symmetrization{f}$ is a symmetric function.
\end{lemma}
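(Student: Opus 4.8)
The plan is to unwind the definition of symmetry directly and reduce the claim to the fact that left translation in the permutation group $\Sym_K$ is a bijection. Concretely, I would fix an arbitrary $h \in \Sym_K$ and $\vecx \in \setX^K$, and verify that $\symmetrization{f}(h(\vecx)) = \symmetrization{f}(\vecx)$, which is precisely the definition of $\symmetrization{f}$ being symmetric.

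First I would establish the composition rule for the group action defined in the notation section, namely $g(h(\vecx)) = (h \circ g)(\vecx)$ for all $g, h \in \Sym_K$. This follows by comparing coordinates: writing $\vecy = h(\vecx)$ so that $y_j = x_{h(j)}$, the $i$-th coordinate of $g(\vecy)$ is $y_{g(i)} = x_{h(g(i))}$, which is exactly the $i$-th coordinate of $(h\circ g)(\vecx)$ under the paper's convention $(g(\vecx))_i = x_{g(i)}$. With this identity in hand, I would expand the definition of the symmetrization:
\[
    \symmetrization{f}(h(\vecx)) = \frac{1}{K!}\sum_{g\in\Sym_K} f(g(h(\vecx))) = \frac{1}{K!}\sum_{g\in\Sym_K} f\big((h\circ g)(\vecx)\big).
\]

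Finally, I would reindex the sum via the substitution $g' = h \circ g$. Since left multiplication by the fixed permutation $h$ is a bijection of $\Sym_K$ onto itself, as $g$ ranges over all of $\Sym_K$ so does $g'$, and the sum is unchanged:
\[
    \frac{1}{K!}\sum_{g\in\Sym_K} f\big((h\circ g)(\vecx)\big) = \frac{1}{K!}\sum_{g'\in\Sym_K} f(g'(\vecx)) = \symmetrization{f}(\vecx).
\]
As $h$ and $\vecx$ were arbitrary, this shows $\symmetrization{f}$ is symmetric.

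The proof is essentially bookkeeping, and the only place where care is genuinely required is the composition step: under the action convention $(g(\vecx))_i = x_{g(i)}$, one must track that iterating the action corresponds to $g\circ h$ in the \emph{opposite} order (here $h\circ g$), and then invoke bijectivity of left translation. Everything else is a one-line reindexing, so I do not anticipate any substantive obstacle beyond getting this order correct.
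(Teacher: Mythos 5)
Your proof is correct and follows essentially the same route as the paper's: expand the definition and reindex the sum over $\Sym_K$ using the fact that composition with a fixed permutation is a bijection of the group onto itself. The only difference is that you explicitly verify the composition identity $g(h(\vecx)) = (h\circ g)(\vecx)$, which the paper leaves implicit; this is a worthwhile bit of care but not a different argument.
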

\begin{proof}
    For any $g'\in\Sym_K$, we have
\begin{align*}
    \symmetrization{f}(g'(\vecx)) = &\frac{1}{K!} \sum_{g \in \Sym_K} f(g(g'(\vecx))) 
    = \frac{1}{K!} \sum_{g \in \Sym_K} f(g(\vecx)) 
    = \symmetrization{f}(\vecx) ,
\end{align*}
since composition by $g'$ defines a bijection from $\Sym_K$ onto itself.
\end{proof}

\begin{lemma}
    For any symmetric function $f:\setX^K\rightarrow\setY$, $\symmetrization{f} = f$.
\end{lemma}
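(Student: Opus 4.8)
The plan is to simply unwind the definition of $\symmetrization{f}$ and invoke the hypothesis termwise. By definition, for every $\vecx \in \setX^K$ we have $\symmetrization{f}(\vecx) = \frac{1}{K!}\sum_{g \in \Sym_K} f(g(\vecx))$. The key observation is that the symmetry of $f$ is exactly the statement that each individual summand equals the same value: since $f(g(\vecx)) = f(\vecx)$ for every permutation $g \in \Sym_K$ by the definition of a symmetric function, the sum consists of $K!$ identical copies of $f(\vecx)$.

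Concretely, I would substitute $f(g(\vecx)) = f(\vecx)$ into the sum and factor out the constant, obtaining $\symmetrization{f}(\vecx) = \frac{1}{K!}\sum_{g \in \Sym_K} f(\vecx) = \frac{1}{K!}\cdot \card{\Sym_K} \cdot f(\vecx) = f(\vecx)$, using $\card{\Sym_K} = K!$. Since $\vecx$ was arbitrary, this establishes $\symmetrization{f} = f$ pointwise, which is the claim.

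There is essentially no obstacle here: the result is an immediate consequence of the two definitions and requires no auxiliary machinery. The only point worth stating carefully is that the definition of symmetry quantifies over \emph{all} $g \in \Sym_K$, which is precisely what licenses replacing every term of the average by $f(\vecx)$; this is the companion fact to the preceding lemma (that $\symmetrization{f}$ is always symmetric), and together they justify the remark in the text that symmetrization acts as the identity on symmetric functions.
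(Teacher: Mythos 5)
Your proof is correct and is essentially identical to the paper's: both unwind the definition of $\symmetrization{f}$, use the symmetry hypothesis to replace each summand $f(g(\vecx))$ by $f(\vecx)$, and cancel the factor $K!$. No further comment is needed.
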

\begin{proof}
    By simple computation:
\begin{align*}
    \symmetrization{f}(\vecx) = &\frac{1}{K!} \sum_{g \in \Sym_K} f(g(\vecx)) 
    = \frac{1}{K!} \sum_{g \in \Sym_K} f(\vecx) 
    = f(\vecx) .
\end{align*}
\end{proof}

\textbf{Normed policy space.}
In the proofs, we equip the policy space $\Pi$ with the norm $\|\cdot\|_1$ defined as
\begin{align*}
    \|\pi - \pi'\|_1 := \sup_{s\in\setS} \| \pi(s) - \pi'(s)\|_1,
\end{align*}
for any $\pi, \pi'\in\Pi$.
We present several useful results.

\begin{lemma}\label{lemma:cdotl1bound}
    Let $\pi,\pi' \in \Pi$ and $\mu, \mu' \in \Delta_\setS$ be arbitrary.
    Then,
    \begin{align*}
        \| \mu \cdot \pi - \mu' \cdot \pi' \|_1 \leq \| \mu - \mu' \|_1 + \|\pi - \pi' \|_1.
    \end{align*}
\end{lemma}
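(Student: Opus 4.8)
The plan is to reduce everything to the definition of the product measure $\mu\cdot\pi$ and then apply a standard add-and-subtract argument. Recall that by definition $(\mu\cdot\pi)(s,a) = \mu(s)\,\pi(a|s)$, so the left-hand side unfolds as
\[
\|\mu\cdot\pi - \mu'\cdot\pi'\|_1 = \sum_{s\in\setS}\sum_{a\in\setA} \bigl| \mu(s)\pi(a|s) - \mu'(s)\pi'(a|s) \bigr|.
\]
First I would insert the intermediate term $\mu(s)\pi'(a|s)$ inside each absolute value and invoke the triangle inequality, which gives the pointwise bound
\[
\bigl| \mu(s)\pi(a|s) - \mu'(s)\pi'(a|s) \bigr| \leq \mu(s)\,\bigl|\pi(a|s) - \pi'(a|s)\bigr| + \bigl|\mu(s) - \mu'(s)\bigr|\,\pi'(a|s),
\]
where I have used $\mu(s)\ge 0$ and $\pi'(a|s)\ge 0$ to pull the nonnegative factors out of the absolute values.

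Next I would sum this inequality over $s$ and $a$ and handle the two resulting groups separately. For the first group, summing over $a$ first yields $\sum_a |\pi(a|s)-\pi'(a|s)| = \|\pi(s)-\pi'(s)\|_1 \le \|\pi-\pi'\|_1$ by the definition of the supremum norm on $\Pi$; then $\sum_s \mu(s) = 1$ since $\mu\in\Delta_\setS$, leaving exactly $\|\pi-\pi'\|_1$. For the second group, summing over $a$ first uses $\sum_a \pi'(a|s) = 1$ because $\pi'(s)\in\Delta_\setA$, leaving $\sum_s |\mu(s)-\mu'(s)| = \|\mu-\mu'\|_1$. Adding the two bounds produces the claimed inequality.

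The argument is entirely elementary, so there is no genuine obstacle — only the bookkeeping of choosing the right intermediate term and correctly interchanging the order of summation over $s$ and $a$. The one point worth care is that adding and subtracting $\mu(s)\pi'(a|s)$ (rather than $\mu'(s)\pi(a|s)$) is precisely what makes the normalizations $\sum_a\pi'(a|s)=1$ and $\sum_s\mu(s)=1$ line up cleanly in the two groups; the symmetric choice works equally well and yields the same bound.
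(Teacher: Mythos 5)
Your proof is correct and follows essentially the same route as the paper's: both insert the intermediate term $\mu\cdot\pi'$ (the paper phrases this as two separate inequalities, one varying the policy and one varying the measure, which is exactly your triangle-inequality split after adding and subtracting $\mu(s)\pi'(a|s)$). The normalizations $\sum_a \pi'(a|s)=1$ and $\sum_s \mu(s)=1$ are used identically in both arguments.
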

\begin{proof}
    The lemma follows from the two inequalities
    \begin{align*}
        \| \mu \cdot \pi - \mu \cdot \pi ' \|_1 
        &\leq \sum_{s,a} | \mu(s)\pi(a|s) - \mu(s)\pi'(a|s)| \\
        &\leq \sum_{s} \mu(s) \sum_a |\pi(a|s) - \pi'(a|s)| \leq \| \pi - \pi' \|_1 ,
    \end{align*}
    and similarly:
    \begin{align*}
        \| \mu \cdot \pi - \mu' \cdot \pi \|_1 
        &\leq \sum_{s,a} | \mu(s)\pi(a|s) - \mu'(s)\pi(a|s)| \\
        &\leq \sum_{s} | \mu(s) - \mu'(s)|\sum_a \pi(a|s) = \| \mu - \mu' \|_1 .
    \end{align*}
\end{proof}

\begin{lemma}[Lemma B.2 of \cite{yardim2023policy}]
\label{lemma:expvector_inequality}
Assume $E$ a finite set, $g: E \rightarrow \mathbb{R}^p$ a vector value function, and $\nu, \mu$ two probability measures on $E$.
Then, 
\begin{align*}
    \left\|\sum_{e} g(e) \mu(e)-\sum_{e} g(e) \nu(e)\right\|_1 \leq \frac{\lambda_g}{2} \|\mu-\nu\|_{1},
\end{align*}
where $\lambda_g := \sup_{e, e'} \|g(e) - g(e')\|_1$.
\end{lemma}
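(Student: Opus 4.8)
The plan is to bound the $\ell_1$ norm of the signed combination $\sum_e g(e)(\mu(e)-\nu(e))$ by exploiting that $\mu,\nu$ are \emph{probability} measures, so that $\sum_e (\mu(e)-\nu(e)) = 0$. A direct application of the triangle inequality, $\|\sum_e g(e)(\mu(e)-\nu(e))\|_1 \le \sum_e \|g(e)\|_1\,|\mu(e)-\nu(e)|$, only yields the constant $\lambda_g$ (after centering $g$ at any fixed value $g(e_0)$, using $\|g(e)-g(e_0)\|_1 \le \lambda_g$), which is off from the claimed bound by a factor of two. The extra factor of $\tfrac12$ comes from a Hahn--Jordan decomposition of the signed measure $\mu-\nu$, which I would carry out as follows.

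First I would split $E = P \cup N$ with $P := \{e : \mu(e)\ge\nu(e)\}$ and $N := E\setminus P$, and set $t := \sum_{e\in P}(\mu(e)-\nu(e))$. Because $\mu$ and $\nu$ each sum to one, the positive and negative masses match, $\sum_{e\in P}(\mu(e)-\nu(e)) = \sum_{e\in N}(\nu(e)-\mu(e)) = t$, and consequently $t = \tfrac12\sum_e|\mu(e)-\nu(e)| = \tfrac12\|\mu-\nu\|_1$. Assuming $t>0$ (the case $\mu=\nu$ makes both sides zero), I would define two genuine probability measures $p(e) := (\mu(e)-\nu(e))/t$ supported on $P$ and $q(e) := (\nu(e)-\mu(e))/t$ supported on $N$, which lets me rewrite the target sum as
\[
\sum_e g(e)(\mu(e)-\nu(e)) = t\Big(\sum_{e\in P} g(e)p(e) - \sum_{e\in N} g(e)q(e)\Big).
\]

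The final step is a coupling argument for the bracketed difference. Since $\sum_{e'\in N} q(e') = 1$ and $\sum_{e\in P} p(e) = 1$, I would insert these normalizations to express the difference against the product measure $p\otimes q$,
\[
\sum_{e\in P} g(e)p(e) - \sum_{e\in N} g(e)q(e) = \sum_{e\in P}\sum_{e'\in N} p(e)q(e')\big(g(e)-g(e')\big),
\]
so that the triangle inequality together with $\|g(e)-g(e')\|_1 \le \lambda_g$ gives an upper bound of $\lambda_g \sum_{e,e'} p(e)q(e') = \lambda_g$. Multiplying back by $t = \tfrac12\|\mu-\nu\|_1$ yields the claim. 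The only genuinely delicate point is securing the sharp constant $\tfrac12$: a coordinatewise midpoint-centering of $g$ fails because the coordinates of $g$ attain their extrema at different indices, so the sum of coordinatewise ranges can exceed $\lambda_g$; the Hahn decomposition instead extracts exactly the factor $t = \tfrac12\|\mu-\nu\|_1$ for free, with the remaining averaged term bounded cleanly by $\lambda_g$.
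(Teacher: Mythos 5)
Your argument is correct. Note that the paper itself does not prove this lemma at all: it is imported verbatim as Lemma~B.2 of the cited prior work \cite{yardim2023policy}, so there is no in-paper proof to compare against. Your Hahn--Jordan decomposition plus coupling argument is a clean, self-contained derivation: the identity $t=\tfrac12\|\mu-\nu\|_1$ follows correctly from the matching of positive and negative masses (since both measures sum to one), the rewriting against the product measure $p\otimes q$ is valid because each of $p,q$ is normalized, and the triangle inequality then yields exactly $\lambda_g$ for the bracketed term. Your side remark is also accurate: a naive centering at a single reference point $g(e_0)$ loses the factor of two, and coordinatewise midpoint centering fails for the $\ell_1$ norm because the coordinatewise ranges need not be attained at a common pair $(e,e')$, so their sum can exceed $\lambda_g$; the decomposition you use is the standard way to secure the sharp constant.
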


To establish explicit upper bounds on the approximation rate, we will use standard concentration tools.

\begin{lemma}
\label{lemma:concentration_bound_iid}
    Let $x^1, \ldots, x^N$ be $N$ independent Bernoulli random variables taking values in the discrete set $\setX$, with $x^i$ taking the value $1$ with some probability $p^i \in [0,1]$ for all $i\in[N]$.
    It holds that
    \begin{align*}
        \Exop\left[ \left| \frac{1}{N}\sum_{i=1}^N (p^i - x^i) \right| \right] \leq \frac{1}{\sqrt{N}},
        \qquad \Exop\left[ \left( \frac{1}{N}\sum_{i=1}^N (p^i - x^i) \right)^2 \right] \leq \frac{1}{N}.
    \end{align*}
\end{lemma}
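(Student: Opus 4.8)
The plan is to prove the second (mean-square) inequality first by a direct variance computation, and then derive the first ($L^1$) inequality as an immediate consequence via Jensen's inequality. Write $S_N := \frac{1}{N}\sum_{i=1}^N (p^i - x^i)$ and note that each summand is centered: since $x^i$ is Bernoulli with $\Prob[x^i = 1] = p^i$, we have $\Exop[x^i] = p^i$ and hence $\Exop[p^i - x^i] = 0$.

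For the mean-square bound I would expand the square, $S_N^2 = \frac{1}{N^2}\sum_{i,j}(p^i - x^i)(p^j - x^j)$, and take expectations. The key step is that independence of the $x^i$ together with centering kills all cross terms: for $i \neq j$, $\Exop[(p^i - x^i)(p^j - x^j)] = \Exop[p^i - x^i]\,\Exop[p^j - x^j] = 0$. Only the diagonal survives, so $\Exop[S_N^2] = \frac{1}{N^2}\sum_{i=1}^N \Exop[(p^i - x^i)^2] = \frac{1}{N^2}\sum_{i=1}^N p^i(1-p^i)$. Bounding each variance term by $p^i(1-p^i) \leq \tfrac{1}{4} \leq 1$ then gives $\Exop[S_N^2] \leq \frac{1}{N}$, which is the claimed second inequality (in fact with the slack factor $\tfrac14$ to spare).

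For the first inequality I would apply Jensen's inequality to the convex map $t \mapsto t^2$ (equivalently Cauchy--Schwarz), yielding $\Exop[|S_N|] \leq \sqrt{\Exop[S_N^2]} \leq \sqrt{1/N} = 1/\sqrt{N}$, as required. The argument is entirely routine, and there is no genuine obstacle; the only point deserving care is the centering-and-independence step that eliminates the off-diagonal terms, since this is exactly where the Bernoulli (rather than merely bounded) structure and mutual independence are used. I note also that the stated constants are not tight — the diagonal-variance bound gives $\Exop[S_N^2]\le \tfrac{1}{4N}$ and hence $\Exop[|S_N|]\le \tfrac{1}{2\sqrt N}$ — so the looser constants in the statement leave ample room and require no sharpening.
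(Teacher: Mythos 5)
Your proof is correct and follows essentially the same route as the paper: expand the square, use independence and centering to kill the cross terms, bound the diagonal variances, and then obtain the $L^1$ bound from the mean-square bound via Jensen's inequality. Your additional observation that the constants can be sharpened to $\tfrac{1}{4N}$ and $\tfrac{1}{2\sqrt{N}}$ is a minor refinement the paper does not pursue.
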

\begin{proof}
Observing that $(x^i - p^i)$ are independent random variables,
\begin{align*}
    \Exop\left[ \left( \frac{1}{N}\sum_{i=1}^N (p^i - x^i) \right)^2 \right] = \frac{1}{N^2} \Exop\left[  \sum_{i=1}^N \left(p^i - x^i \right)^2 \right] \leq \frac{1}{N}.
\end{align*}
Furthermore, using Jensen's inequality,
\begin{align*}
    \Exop\left[ \left| \frac{1}{N}\sum_{i=1}^N (p^i - x^i) \right| \right] \leq &\Exop\left[ \sqrt{\left( \frac{1}{N}\sum_{i=1}^N (p^i - x^i) \right)^2} \right] \\
    \leq & \sqrt{\Exop\left[ \left( \frac{1}{N}\sum_{i=1}^N (p^i - x^i) \right)^2 \right] } \leq \frac{1}{\sqrt{N}}.
\end{align*}
\end{proof}

For controlling errors under stochasticity, the following simple lemma will be useful.

\begin{lemma}[Harmonic partial sum bound]\label{lemma:harmonic}
For any integers $s,\bar{s}$ such that $1 \leq \bar{s} < s$ and $p \neq -1$, it holds that
\begin{align*}
   \log s - \log \bar{s} + \frac{1}{s} &\leq \sum_{n = \bar{s}}^{s} \frac{1}{n} \leq \frac{1}{\bar{s}} + \log s - \log \bar{s}, \\
   \frac{s^{p+1}}{p+1} - \frac{\bar{s}^{p+1}}{(p+1)} + \bar{s}^p &\leq \sum_{n = \bar{s}}^{s} n^p \leq \frac{s^{p+1}}{p+1} - \frac{\bar{s}^{p+1}}{p+1} + s^p, \text{ if } p \geq 0 \\
   \frac{s^{p+1}}{p+1} - \frac{\bar{s}^{p+1}}{p+1} + s^p &\leq \sum_{n = \bar{s}}^{s} n^p \leq \frac{s^{p+1}}{p+1} - \frac{\bar{s}^{p+1}}{p+1} + \bar{s}^p, \text{ if } p \leq 0
\end{align*}
\end{lemma}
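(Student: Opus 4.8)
The plan is to derive every inequality from the elementary comparison between a monotone function and its integrals over unit-length intervals, applied to $\phi(x) = x^p$ on the positive reals. The two facts I will use repeatedly are: if $\phi$ is nonincreasing then $\int_n^{n+1}\phi(x)\,dx \le \phi(n) \le \int_{n-1}^{n}\phi(x)\,dx$, while if $\phi$ is nondecreasing both inequalities reverse. The relevant antiderivatives are $\int x^{-1}\,dx = \log x$ for the harmonic row and $\int x^{p}\,dx = x^{p+1}/(p+1)$ for the remaining rows, the latter being valid precisely because $p \neq -1$.

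For the harmonic bound, $\phi(x)=1/x$ is nonincreasing. For the lower bound I isolate the last term and write $\sum_{n=\bar{s}}^s \tfrac1n = \tfrac1s + \sum_{n=\bar{s}}^{s-1}\tfrac1n$; bounding each remaining summand below by $\int_n^{n+1}\tfrac1x\,dx$ and telescoping the integrals gives $\sum_{n=\bar{s}}^{s-1}\tfrac1n \ge \int_{\bar{s}}^{s}\tfrac1x\,dx = \log s - \log\bar{s}$, which yields the claimed $\log s - \log\bar{s} + \tfrac1s$. For the upper bound I instead isolate the first term, $\sum_{n=\bar{s}}^s \tfrac1n = \tfrac1{\bar{s}} + \sum_{n=\bar{s}+1}^s \tfrac1n$, and bound each remaining summand above by $\int_{n-1}^{n}\tfrac1x\,dx$, giving $\sum_{n=\bar{s}+1}^s \tfrac1n \le \log s - \log\bar{s}$ and hence the stated $\tfrac1{\bar{s}} + \log s - \log\bar{s}$.

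The two power rows follow the same template with $\phi(x)=x^p$, the only bookkeeping being which endpoint term to peel off. When $p \ge 0$, $\phi$ is nondecreasing: for the upper bound I isolate $s^p$ and bound $\sum_{n=\bar{s}}^{s-1} n^p \le \int_{\bar{s}}^{s} x^p\,dx = (s^{p+1}-\bar{s}^{p+1})/(p+1)$, while for the lower bound I isolate $\bar{s}^p$ and bound $\sum_{n=\bar{s}+1}^{s} n^p \ge \int_{\bar{s}}^{s} x^p\,dx$. When $p \le 0$ (and $p\neq -1$), $\phi$ is nonincreasing, so the roles of the two endpoint terms swap exactly as in the harmonic case: $s^p$ is peeled off for the lower bound and $\bar{s}^p$ for the upper bound. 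In each case the telescoped integral has the same closed form $(s^{p+1}-\bar{s}^{p+1})/(p+1)$, producing the four stated inequalities.

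There is no real obstacle here; the argument is routine integral comparison. The only point requiring care — and the single place where the precise correction terms $1/s$, $1/\bar{s}$, $s^p$, $\bar{s}^p$ are determined — is choosing, according to the monotonicity direction of $x^p$, whether to split off the first or the last summand before comparing the remainder to $\int_{\bar{s}}^s x^p\,dx$; matching this incorrectly would weaken the bound by one unit term.
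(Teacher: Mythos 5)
Your proof is correct and takes essentially the same route as the paper's: the paper simply cites the integral-comparison fact for monotone functions (that $\sum_{n=\bar{s}}^{s} f(n)$ lies between $\int_{\bar{s}}^{s} f(x)\,dx$ plus the appropriate endpoint value) and applies it to $1/x$ and $x^p$, whereas you additionally derive that fact via the unit-interval telescoping argument. All four endpoint correction terms are matched correctly, so there is nothing to fix.
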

\begin{proof}
    The proof follows from the the basic fact that if $f:[1,\infty) \rightarrow \mathbb{R}_{\geq 0}$ is a non-increasing function, then
    \begin{align*}
        \int_{\bar{s}}^s f(x) \mathrm{d}x + f(s)  \leq \sum_{n = \bar{s}}^s f(n) \leq \int_{\bar{s}}^s f(x) \mathrm{d}x + f(\bar{s}) ,
    \end{align*}
    and likewise for a non-decreasing function $f:[1,\infty) \rightarrow \mathbb{R}_{\geq 0}$, it holds that 
    \begin{align*}
        \int_{\bar{s}}^s f(x) \mathrm{d}x + f(\bar{s})  \leq \sum_{n = \bar{s}}^s f(n) \leq \int_{\bar{s}}^s f(x) \mathrm{d}x + f(s) .
    \end{align*}
\end{proof}

Finally, we slightly generalize the definition of MFG-NE (Definition~\ref{def:delta_mfg_ne}), as our approximation theorems are somewhat more general than what is stated in the main body of the paper: we consider approximate MFG-NE rather than only exact MFG-NE.

\begin{definition}[$\delta$-MFG-NE]\label{def:delta_mfg_ne}
    A policy sequence $\vecpi^* \in \Pi _ H$ is called a $\delta$-MFG-NE of the MFG $(\setS, \setA, \initpop, H, P, R)$ if it holds that
    \begin{align}
        & \Expfin (\{\pi^*_h\}_{h=0}^{H-1}) \leq \delta. \tag{$\delta$-MFG-NE}
    \end{align}
\end{definition}

\textbf{A remark on extension Lemma~\ref{lemma:kirszbraun} and $\Delta_{\setS}$.} 
For given $N > 0$ and map $\bar{P}:\Delta_{\setS\times\setA, N} \rightarrow \Delta_{\setS} $ with Lipschitz modulus $L$ on $\Delta_{\setS\times\setA, N}$, the Kirszbraun-Valentine Lemma (Lemma~\ref{lemma:kirszbraun}) only guarantees an $L$-Lipschitz extension $\lipext{\bar{P}} : \Delta_{\setS\times\setA, N} \rightarrow \mathbb{R}^{\setS}$.
However, we can trivially side-step this issue with a modified application of Kirszbraun-Valentine.
Let $\operatorname{Proj}_{\Delta_{\setS}}: \mathbb{R}^{\setS} \rightarrow \Delta_{\setS}$ be the projection operator to the convex set $\Delta_{\setS}$.
For any extension $\lipext{\bar{P}}$, $\operatorname{Proj}_{\Delta_{\setS}} \circ \lipext{\bar{P}}$ is also a valid $L$-Lipschitz extension that preserves $\bar{P}$ on the set $\Delta_{\setS\times\setA, N}$ as $\operatorname{Proj}_{\Delta_{\setS}}$ is non-expansive.
Moreover, $\operatorname{Proj}_{\Delta_{\setS}} \circ \lipext{\bar{P}}$ has image set contained in $\Delta_{\setS}$ as required.

\section{Extended Proofs on Approximation}

\subsection{Proof of Lemma~\ref{lemma:extension_bound}}

The proof relies on the properties of symmetrization and Lemma~\ref{lemma:expvector_inequality}.
Forn convenience we denote $K := N-1$ in this proof.

Firstly, we show that for any $i,j \in [N], s\in\setS, a\in\setA$, the functions $\symmetrization{P^i}(s,a, \cdot)$ and $\symmetrization{R^i}(s,a,\cdot)$ also satisfy the bounded variation and sparsity assumptions.
Assume that $\vecrho \in (\setS\times\setA)^K, (s',a')\in \setS\times\setA$ arbitrary,
Then, by definition,
\begin{align*}
    &\| \symmetrization{P^i}(s,a, \vecrho) - \symmetrization{P^i}(s,a, ((s',a'), \vecrho^{-j}))\|_2 \\
    &\leq \frac{1}{K!} \left\| \sum_{g \in \Sym_{K}} P^i(s,a, g(\vecrho)) -  \sum_{g \in \Sym_{K}} P^i(s,a,g((s',a'), \vecrho^{-j}))\right\|_2 \\
    &\leq \frac{1}{K!} \sum_{g \in \Sym_{K}} \left\| P^i(s,a, g(\vecrho)) - P^i(s,a,g((s',a'), \vecrho^{-j}))\right\|_2 \\
    &\leq \frac{1}{K!} \sum_{g \in \Sym_{K}} C_1 \leq C_1.
\end{align*}
Furthermore, assume $P^i(s,a, \cdot)$ is $\kappa$-sparse on some set $\setU_{s,a}\subset\setS\times\setA$ where $|\setU_{s,a}|\leq \kappa$.
Let $\vecrho,\vecrho'\in (\setS\times\setA)^{K}$ be two vectors agreeing in their entries in $\setU_{s,a}$, i.e., $p_{\setU_{s,a}}(\vecrho) = p_{\setU_{s,a}}(\vecrho')$.
Then,
\begin{align*}
    \symmetrization{P^i}(s,a, \vecrho) & = \frac{1}{K!}  \sum_{g \in \Sym_{K}} P^i(s,a, g(\vecrho)) \\
        & = \frac{1}{K!}  \sum_{g \in \Sym_{K}} P^i(s,a, g(\vecrho)) \\
        & = \symmetrization{P^i}(s,a, \vecrho'),
\end{align*}
since $g(\vecrho')$ agrees with $g(\vecrho)$ on its elements in $\setU_{s,a}$ as well, as $p_{\setU_{s,a}}(g(\vecrho)) = p_{\setU_{s,a}}(g(\vecrho'))$.
Therefore we conclude $\symmetrization{P^i}$ is also $\kappa$-sparse on $\setU_{s,a}$.
By similar computation,
\begin{align*}
    \left| \symmetrization{R^i}(s,a, \vecrho) - \symmetrization{R^i}(s,a, ((s',a'), \vecrho^{-j}))\right| \leq C_2,
\end{align*}
and $\symmetrization{R^i}$ is also $\kappa$-sparse.

Next, we establish that the lifted functions $\symmbar{P^i}(s,a,\cdot)$ only depend on $\mu(s)$ for $s\in \setU_{s,a}$, that is, we show that if $\mu, \mu'$ are such that $\mu(s',a') = \mu'(s',a')$ for all $(s',a')\in\setU_{s,a} $, then $\symmbar{P^i}(s,a,\mu) = \symmbar{P^i}(s,a,\mu')$.
Let $\mu, \mu'\in\Delta_{\setS\times\setA, K}$ be such that $\mu(s',a') = \mu'(s',a')$ for all $(s',a')\in\setU_{s,a} $.
Take arbitrary $\vecrho,\vecrho'$ such that $\empc{\vecrho} = \mu, \empc{\vecrho'} = \mu' $.
It holds that for some permutation $g' \in \Sym_{K}$ that $g'(\vecrho')$ agrees with $\vecrho$ on all entries taking values in $\setU_{s,a}$, as $\vecrho'$ and $\vecrho$ have the same count of elements in $\setU_{s,a}$.
Then
\begin{align*}
    \symmbar{P^i}(s,a,\mu) = & \frac{1}{K!}  \sum_{g \in \Sym_{K}} P^i(s,a, g(\vecrho)) \\
        = & \frac{1}{K!}  \sum_{g \in \Sym_{K}} P^i(s,a, g(g'(\vecrho'))) \\
        = & \frac{1}{K!}  \sum_{g \in \Sym_{K}} P^i(s,a, g(\vecrho')) =\symmbar{P^i}(s,a,\mu').
\end{align*}
A similar argument works for $\symmbar{R^i}(s,a,\cdot)$, allowing us to conclude that $\symmbar{P^i}(s,a,\cdot),\symmbar{R^i}(s,a,\cdot)$ only depend on $\mu(s',a')$ if $(s',a')\in\setU_{s,a} $.

Finally, we analyze the Lipschitz modulus of the lifted functions $\sfrac{1}{N}\sum_{i} \symmbar{P^i}, \sfrac{1}{N}\sum_{i} \symmbar{R^i}$.
Let $\mu_1,\mu_2\in\Delta_{\setS\times\setA, K}$ and $\vecrho_1 = \{\rho_1^i\}_{i=1}^K, \vecrho_2 = \{\rho_2^i\}_{i=1}^K \in (\setS\times\setA)^{K}$ be such that $\empc{\vecrho_1}=\mu_1, \empc{\vecrho_2}=\mu_2$.
Then,
\begin{align*}
        \|\symmbar{P^i}(s,a, \vecrho_1) - \symmbar{P^i}(s,a, \vecrho_2)\|_1 \leq C_1 \sum_{i\in [K]} \ind{\rho_1^i \neq \rho^i_2}.
\end{align*}
Taking the minimum over such $\vecrho_1, \vecrho_2$, we have that
\begin{align*}
    &\| \symmbar{P^i}(s,a, \mu_1) - \symmbar{P^i}(s,a, \mu_2) \|_1 \\
    &\leq \min_{\substack{\vecrho_1,\vecrho_2\in(\setS\times\setA)^K \\ \empc{\vecrho_1}=\mu_1, \, \empc{\vecrho_2}=\mu_2}} C_1 \sum_{i\in [K]} \ind{\rho_1^i \neq \rho^i_2} \leq C_1 K \| \mu_1 - \mu_2 \|_1,
\end{align*}
as $\vecrho_1, \vecrho_2$ can differ at a minimum at $K \| \mu_1 - \mu_2 \|_1$ coordinates.
Finally, as concluded from the arguments above, since $\symmbar{P^i}(s,a, \mu_1)$ only depends on $\mu_1(s',a')$ for $s',a'\in\setU_{s,a}$, one can choose $\bar{\mu}_1, \bar{\mu}_2 \in \Delta_{\setS\times\setA}$ such that $\bar{\mu}_1(s',a') = \mu_1(s',a')$ and $\bar{\mu}_2(s',a') = \mu_2(s',a')$ for all $(s',a')\in\setU_{s,a}$ and $\bar{\mu}_1(s'',a'') = \bar{\mu}_2(s'',a'')$ whenever $(s'',a'')\neq \setU_{s,a}$.
Then,
\begin{align*}
    \| \symmbar{P^i}(s,a, \mu_1) - \symmbar{P^i}(s,a, \mu_2) \|_1 = &\| \symmbar{P^i}(s,a, \bar{\mu}_1) - \symmbar{P^i}(s,a, \bar{\mu}_2) \|_1 \\
    = &C_1 K \|  \bar{\mu}_1 -  \bar{\mu}_2 \|_1 \\ 
    = & C_1 K \sum_{s',a' \in\setU_{s,a}} |\bar{\mu}_1(s',a') -  \bar{\mu}_2(s',a') | \\
    \leq & C_1 K \sqrt{\sum_{s',a' \in\setU_{s,a}} |\bar{\mu}_1(s',a') -  \bar{\mu}_2(s',a') |^2 } \sqrt{\kappa} \\
    \leq & C_1 K \sqrt{\kappa} \| \bar{\mu}_1 - \bar{\mu}_2\|_2 \leq  C_1 K \sqrt{\kappa} \| \mu_1 - \mu_2\|_2 ,
\end{align*}
thus proving Lipschitz bound on the set $\Delta_{\setS\times\setA, K}$.
By an identical argument, it holds that
\begin{align*}
    | &\symmbar{R^i}(s,a, \mu_1) - \symmbar{R^i}(s,a, \mu_2) | \leq C_2 (N-1) \sqrt{\kappa} \| \mu_1 - \mu_2 \|_2.
\end{align*}

The result follows from an application of Lemma~\ref{lemma:kirszbraun} to extend $\sfrac{1}{N}\sum_i \symmbar{R^i}(s,a,\cdot)$ and $\sfrac{1}{N}\sum_i \symmbar{P^i}(s,a,\cdot)$ from $\Delta_{\setS\times\setA, N}$ to $\Delta_{\setS\times\setA}$, as the norm equivalence $\|\cdot\|_2\leq\|\cdot\|_1$ holds.

\subsection{Population Flows are Lipschitz Continuous}

\begin{lemma}[Lipschitz continuity of $\Gamma$]
    Let $P: \setS\times\setA\times\Delta_{\setS\times\setA}\rightarrow \Delta_{\setS}$ be such that $P(s,a,\mu)$ is Lipschitz continuous in $\|\cdot\|_1$ norm with modulus $K_\mu >0$ and
    \begin{align*}
        K_s := \sup_{\substack{s,s'\\ a,\mu}}
        \left\|
        P(s,a,\mu)-P(s',a,\mu)
        \right\|_1, \qquad
        K_a := \sup_{\substack{a,a'\\ s,\mu}}
        \left\|
        P(s,a,\mu)-P(s,a',\mu)
        \right\|_1.
    \end{align*}
    Then it holds for all $\mu, \mu' \in \Delta_{\setS\times\setA}, \pi,\pi' \in \Pi$ that:
    \label{lemma:lipschitz_gpop}
    \[
    \|\gpop(\mu,\pi)
    -
    \gpop(\mu',\pi)\|_1
    \leq 
    \left(\frac{K_s + K_a}{2} + K_\mu\right) \|\mu-\mu'\|_1,
    \]
    for all $\pi\in\Pi$, $\mu,\mu' \in\Delta_{\setS\times\setA}$.
\end{lemma}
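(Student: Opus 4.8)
The plan is to expand the definition of $\gpop$, factor out the policy, and then split the resulting difference into a contribution from the $\mu$-dependence of the transition kernel $P$ and a contribution from the change in the measure weight. Writing out
\[
\gpop(\mu,\pi)(s',a') - \gpop(\mu',\pi)(s',a') = \Big(\sum_{s,a}\big[\mu(s,a)P(s'|s,a,\mu) - \mu'(s,a)P(s'|s,a,\mu')\big]\Big)\pi(a'|s'),
\]
I would first note that the inner sum depends only on $s'$ and that $\pi(\cdot|s')$ is a probability distribution, so $\sum_{a'}\pi(a'|s')=1$. Hence taking $|\cdot|$ and summing over $(s',a')$ collapses the policy factor, giving $\|\gpop(\mu,\pi)-\gpop(\mu',\pi)\|_1 = \sum_{s'}\big|\sum_{s,a}[\mu(s,a)P(s'|s,a,\mu)-\mu'(s,a)P(s'|s,a,\mu')]\big|$; in particular the bound is automatically independent of $\pi$, as required.

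Next I would apply the add-and-subtract decomposition $\mu(s,a)P(s'|s,a,\mu)-\mu'(s,a)P(s'|s,a,\mu') = \mu(s,a)[P(s'|s,a,\mu)-P(s'|s,a,\mu')] + [\mu(s,a)-\mu'(s,a)]P(s'|s,a,\mu')$, yielding two terms. For the first term, the triangle inequality over $s'$ together with the $K_\mu$-Lipschitz continuity of $P(s,a,\cdot)$ and the normalization $\sum_{s,a}\mu(s,a)=1$ gives a bound of $K_\mu\|\mu-\mu'\|_1$.

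The main step is the second term, which I would recognize as an instance of Lemma~\ref{lemma:expvector_inequality}: taking $E=\setS\times\setA$, the measures $\mu,\mu'$, and the vector-valued function $g(s,a):=P(\cdot|s,a,\mu')\in\reals^{\setS}$, this term equals $\big\|\sum_{s,a}g(s,a)\mu(s,a)-\sum_{s,a}g(s,a)\mu'(s,a)\big\|_1$, hence is bounded by $\tfrac{\lambda_g}{2}\|\mu-\mu'\|_1$ where $\lambda_g=\sup_{(s,a),(\tilde s,\tilde a)}\|P(\cdot|s,a,\mu')-P(\cdot|\tilde s,\tilde a,\mu')\|_1$. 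The remaining task is to bound this diameter by $K_s+K_a$: inserting the intermediate point $P(\cdot|\tilde s,a,\mu')$ and applying the triangle inequality bounds the two pieces by $K_a$ (change of action at fixed state) and $K_s$ (change of state at fixed action) respectively. Adding the two contributions produces the claimed constant $\tfrac{K_s+K_a}{2}+K_\mu$. I expect the only real subtlety to be the correct identification of the second term with Lemma~\ref{lemma:expvector_inequality}, which is what supplies the factor $\tfrac12$; the rest is routine triangle-inequality bookkeeping.
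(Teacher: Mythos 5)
Your proposal is correct and follows essentially the same route as the paper: the same add-and-subtract decomposition into a $K_\mu$-Lipschitz term and a measure-change term, with Lemma~\ref{lemma:expvector_inequality} supplying the factor $\tfrac{1}{2}$ and the diameter bounded by $K_s+K_a$ via an intermediate point. The only cosmetic difference is that you collapse the policy factor $\sum_{a'}\pi(a'|s')=1$ up front, whereas the paper carries $\pi$ through and disposes of it via Lemma~\ref{lemma:cdotl1bound}; both are equivalent.
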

\begin{proof}

The proof is inspired by \cite{yardim2023policy}, apart from the fact that in our case the population update operator is defined differently as:
\begin{align*}
    \gpop(\mu, \pi)(s',a') &:= \sum_{s \in \setS, a \in \setA} \mu(s,a) P(s'|s,a,\mu) \pi(a'|s')
\end{align*}
We will prove a slightly more general statement, that
\begin{align*}
    &\left\|\gpop(\mu, \pi) - \gpop(\mu', \pi')\right\|_1 \leq \|\mu - \mu' \|_1 \left(\frac{K_s + K_a}{2} + K_\mu\right) + \|\pi-\pi'\|_1.
\end{align*}

Firstly, we upper bound the Lipschitz modulus of the function with respect to $\mu$.
For any $\mu,\mu'\in\Delta_{\setS\times\setA}$, it holds that:
\begin{align*}
    &\left\|\gpop(\mu, \pi) - \gpop(\mu', \pi)\right\|_1 \\
    &\leq \sum_{s',a'} \left| \sum_{s,a} (\mu(s,a) P(s'|s,a,\mu) - \mu'(s,a) P(s'|s,a,\mu')) \pi(a'|s')\right| \\
    &\leq \sum_{s',a'} \left| \sum_{s,a} (\mu(s,a)  - \mu'(s,a) ) P(s'|s,a,\mu) \pi(a'|s')\right| \\
        & \quad + \sum_{s',a'} \left| \sum_{s \in \setS, a \in \setA} \mu'(s,a) (P(s'|s,a,\mu) -  P(s'|s,a,\mu')) \pi(a'|s')\right| \\
    &\leq   \left\| \sum_{s ,a} (\mu(s,a)  - \mu'(s,a)) (P(s,a,\mu) \cdot \pi) \right\|_1 \\
        & \quad +   \sum_{s , a} \mu'(s,a) \sum_{s'} |P(s'|s,a,\mu) -  P(s'|s,a,\mu')| \sum_{a'} \pi(a'|s') \\
    &\leq  \|\mu - \mu' \|_1 \frac{\max \|P(s,a,\mu) \cdot \pi - P(\widebar{s},\widebar{a},\mu) \cdot \pi\|_1}{2} + K_\mu \|\mu - \mu' \|_1 \\
    &\leq  \|\mu - \mu' \|_1 \frac{\max \|P(s,a,\mu) - P(\widebar{s},\widebar{a},\mu) \|_1}{2} + K_\mu \|\mu - \mu' \|_1.
\end{align*}
where the last two lines follow from Lemma~\ref{lemma:expvector_inequality} and Lemma~\ref{lemma:cdotl1bound}.
Since 
\begin{align*}
    \|P(s,a,\mu) - P(\widebar{s},\widebar{a},\mu) \|_1 \leq K_s + K_a,
\end{align*}
we have the claimed inequality
\begin{align*}
    \left\|\gpop(\mu, \pi) - \gpop(\mu', \pi)\right\|_1 \leq \|\mu - \mu' \|_1 \left(\frac{K_s + K_a}{2} + K_\mu\right).
\end{align*}

Finally, the Lipschitz constant for the policy $\pi$ is computed by:
\begin{align*}
    \left\|\gpop(\mu, \pi) - \gpop(\mu, \pi')\right\|_1 &\leq \sum_{s',a'} \left| \sum_{s,a} \mu(s,a) P(s'|s,a,\mu) \left(\pi(a'|s') - \pi'(a'|s')\right)\right| \\
    &\leq \sum_{s,a,s'}\mu(s,a) P(s'|s,a,\mu) \sum_{a'}  \left| \pi(a'|s') - \pi'(a'|s')\right| \\
    &\leq \| \pi - \pi' \|_1.
\end{align*}
\end{proof}

\subsection{Proof of Theorem~\ref{theorem:main_approximation}}\label{sec:proof_main_approx}

The main ideas of the approximation proof are similar to some arguments from MFG literature (e.g. see \cite{saldi2018markov}) with two major differences: 
(1) the dynamics of the finite player game are not exactly symmetric, and
(2) unlike some standard works the dynamics and rewards depend on the distribution of agents over state-action pairs, not just states.

For given $R$ and $P$ define the following constants:
    \begin{align*}
        L_s &= \sup_{s,s',a,\mu}
        \left|
        R(s,a,\mu)-R(s',a,\mu)
        \right|, \qquad
        L_a = \sup_{s,a,a',\mu}
        \left|
        R(s,a,\mu)-R(s,a',\mu)
        \right|, \\
        K_s &= \sup_{s,s',a,\mu}
        \left\|
        P(\cdot|s,a,\mu)-P(\cdot|s',a,\mu)
        \right\|_1, \qquad
        K_a = \sup_{s,a,a',\mu}
        \left\|
        P(\cdot|s,a,\mu)-P(\cdot|s,a',\mu)
        \right\|_1.
    \end{align*} 
We also introduce the shorthand notation for any $s\in \setS, u\in \Delta_\setA, \mu \in \Delta_{\setS\times\setA}$:
\begin{align*}
    P(\cdot|s,u,\mu)&= \sum_{a\in\setA}u(a)P(\cdot|s,a,\mu), \qquad
    R(s,u,\mu) = \sum_{a\in\setA}u(a)R(s,a,\mu).
\end{align*}
By \cite[Lemma~C.1]{yardim2023policy}, it holds that
\begin{align}
\|P(\cdot|s,u,\mu)
-
P(\cdot|s',u',\mu')\|_1
\leq &
K_\mu\|\mu-\mu'\|_1+K_s d(s,s') +\frac{K_a}{2}\|u-u'\|_1, \notag \\
|R(s,u,\mu)
-
R(s',u',\mu')|
\leq &L_\mu \|\mu-\mu'\|_1
+
L_s d(s,s') +\frac{L_a}{2}\|u-u'\|_1. \label{eq:lipschitz_pr_policy_avg}
\end{align}

We will define a new operator for tracking the evolution of the population distribution over finite time horizons for a time-varying policy $\forall \pi = \{ \pi_h \}_{h=0}^{H-1} \in \Pi$:
\begin{align*}
   \Gamma^h(\mu, \pi)&:= 
    \underbrace{\Gamma(\ldots \Gamma(\Gamma(\mu, \pi_{0}), \pi_{1}) \ldots, \pi_{h-1})}_{h\text{ times}}
\end{align*}
so that $\Gamma^0(\mu, \pi) := \mu$.
Lemma~\ref{lemma:lipschitz_gpop} yields the Lipschitz condition:
\begin{align}
        \|
        &\Gamma^n(\mu, \{\pi_{i}\}_{i=0}^{n-1} ) -\Gamma^n(\mu', \{\pi_{i}\}_{i=0}^{n-1} )
        \|_1 \notag\\
        &\leq
        \lpopmu 
        \|
        \Gamma^{n-1}(\mu, \{\pi_{i}\}_{i=0}^{n-2})-\Gamma^{n-1}(\mu', \{\pi'_{i}\}_{i=0}^{n-2})
        \|_1 +
        \|\pi_{n-1}-\pi_{n-1}'\|_1 \notag\\
        &\leq 
        \lpopmu^{n} \|\mu-\mu'\|_1
        +
        \sum_{i=0}^{n-1}\lpopmu^{n-1-i}\|\pi_{i}-\pi_{i}'\|_1,  \label{eq:gpop_composition_lip}
\end{align}
where $\lpopmu$ is the Lipschitz constant of $\Gamma$ in $\mu$.

We also define a useful function $\Xi: (\setS\times\setA)^N \times \Pi^N \rightarrow \Delta_{\setS\times\setA}$ such that for any $\vecrho = \{(s^i, a^i)\}_{i=1}^N$,
\begin{align*}
    \Xi( \vecrho, \refpol ) = \frac{1}{N}\sum_{i=1}^N  P(\cdot|s^i,a^i,\empc{\vecrho^{-i}}) \cdot \refpol.
\end{align*}
In other words, $\Xi$ is the average population flow expected under symmetrized dynamics and reference policy $\refpol$.

The proof will proceed in four steps:
\begin{itemize}
    \item \textbf{Step 1.} Bounding the expected deviation of the empirical population distribution from the mean-field distribution $\EE{\|\empdist{h}-\mu_h\|_1}$ for any given policy $\pi$.
    \item \textbf{Step 2. } Bounding total variation distance (or equivalently $\ell_1$ distance) between the marginal distributions $\Prob[s_h^1 = \cdot]$ in the $N$-player game and $\Prob[s_h=\cdot]$ in the mean-field game,
    \item \textbf{Step 3. } Bounding difference of $N$ agent value function $\JfinNi{1}$ and the infinite player value function $\Vfin$, when all the players except the first one play the same policy,
    \item \textbf{Step 4. } Bounding the exploitability of an agent when each of $N$ agents are playing the FH-MFG-NE policy.
\end{itemize}

\textbf{Step 1: Empirical distribution bound.}
Due to its relevance for a general connection between the FH-MFG and the $N$-player game, we state this result in the form of an explicit bound.
In this step, we will assume $N$ players of the FH-DG pursue policies $\{ \pi^i \}_i = \{\pi^i_h\}_{i,h} \in \Pi^N$, and random variables $f$
Furthermore, assume $\refpolvec = \{\refpol_h\}_h \in \Pi$ arbitrary, and induces population $\vecmu = \lpop(\refpolvec) = \{ \mu_h \}_h$.
We also define the quantity $\Delta_h := \frac{1}{N}\sum_{i=1}^N \| \pi^i_h - \refpol_h \|_1$ and $\widebar{\Delta} := \max_{h\in [H]} \Delta_h$.

The proof will proceed inductively over $h$.
First, for time $h=0$, we have
\begin{align*}
     \EE{
    \|\empdist{0}-\mu_0\|_1
    }
    \leq & \EE{\left\|\empdist{0}-\frac{1}{N}\sum_{i=1}^N \initpop \cdot \pi^i_0\right\|_1} + \EE{\left\|\frac{1}{N}\sum_{i=1}^N \initpop \cdot \pi^i_0-\mu_0\right\|_1} \\
    \leq &
    \sum_{ \substack{s\in\setS\\ a\in\setA}}
    \EE{
    \left|\frac{1}{N}\sum_{i=1}^N (\ind{s_0^i=s, a_0^i = a}-\initpop(s)\pi^i_0(a|s))\right|} \\
        & + \frac{1}{N} \EE{ \sum_{i=1}^N \| \pi^i_0 - \refpol_0 \|_1 } \\
    \leq &
    |\setS| |\setA| \frac{1}{\sqrt{N}} + \Delta_h,
\end{align*}
where the last line is due to \Cref{lemma:concentration_bound_iid} and the fact that $\ind{s_0^i=s, a_0^i = a}$ are independent, bounded random variables, and that we have $\EE{\ind{s_0^i=s, a_0^i = a}}= \initpop (s) \pi(s,a) = \mu_0(s,a)$.

Next, denoting the $\sigma$-algebra induced by the random variables $(\{s^i_{h}, a^i_{h} \})_{i, h' \leq h}$ as $\mathcal{F}_h$, we have that:
\begin{align}
     \EEc{\|
        \empdist{h+1}-\mu_{h+1}
        \|_1}{\cF_h} 
         \leq  & \underbrace{\EEc{\|
        \empdist{h+1}-\EEc{\empdist{h+1}}{\cF_h}
        \|_1}{\cF_h}}_{(\triangle)} \notag \\
        &+ 
        \underbrace{
            \EEc{\| \EEc{\empdist{h+1}}{\cF_h}- \Xi(\vecrho_h, \refpol_{h+1}) \|_1}{\cF_h}
        }_{(\square)} \notag \\
        &+ \underbrace{
            \EEc{\|
                \Xi(\vecrho_h, \refpol_{h+1})-\gpop(\empdist{h}, \refpol_h)
                \|_1}{\cF_h}
        }_{(\star)} \notag\\
         & + \underbrace{\EEc{\|
        \gpop(\empdist{h}, \widebar{\pi}_h)-\mu_{h+1}
        \|_1}{\cF_h} }_{(\heartsuit)} \label{eq:main_inductive_bound_lemma_dist}
\end{align}
We upper bound the four terms separately.
For $(\triangle)$, it holds that
\begin{align*}
    (\triangle) = &\EEc{
    \|
    \empdist{h+1}-
    \EEc{\empdist{h+1}}{\cF_h}
    \|_1
    }{\cF_h} \\    
    =&
    \sum_{s\in\setS, a\in\setA}
    \EEc{
    |
    \empdist{h+1}(s, a)-
    \EEc{\empdist{h+1}(s, a)}{\cF_h}
    |
    }{\cF_h} \\
    \leq & |\setS||\setA|\frac{1}{\sqrt{N}},
\end{align*}
since each $\empdist{h+1}(s)$ is an average of $N$ independent random variables (specifically $N$ independent Bernoulli random variables) given $\cF_h$, using Lemma~\ref{lemma:concentration_bound_iid}. 

Next, for the term $(\square)$,
\begin{align*}
    (\square) &= \EEc{\| \EEc{\empdist{h+1}}{\cF_h}- \Xi(\vecrho_h, \refpol_{h+1}) \|_1}{\cF_h} \\
    &\begin{aligned}
        = \frac{1}{N} \Exop\bigg[ &\bigg\|  \sum_{i=1}^N P^i(\cdot|s^i_h,a^i_h,\vecrho^{-i}_h) \cdot \pi^i_{h+1} -  \sum_{i=1}^N  P(\cdot|s^i_h,a^i_h,\empc{\vecrho^{-i}_h}) \cdot \refpol_{h+1} \bigg\|_1 \bigg| \cF_h \bigg]
    \end{aligned}\\
    &\leq \frac{1}{N} \Exop\bigg[ \sum_{i=1}^N \|P^i(\cdot|s^i_h,a^i_h,\vecrho^{-i}_h) - P(\cdot|s^i_h,a^i_h,\empc{\vecrho^{-i}_h})\|_1 \bigg| \cF_h \bigg] \\
    &\quad + \frac{1}{N} \sum_{i=1}^N \| \pi^i_{h+1} - \refpol_{h+1} \|_1.
\end{align*}
By the $\alpha$-symmetry condition, it follows that $(\square) \leq \alpha + \Delta_{h+1}$.
    
For $(\star)=\|\Xi(\vecrho_h, \refpol_h \}_i)-\gpop(\empdist{h}, \widebar{\pi}_h)\|_1$,
\begin{align*}
    (\star)
    &= \EEc{ \|\Xi(\vecrho_h, \refpol_{h+1})-\gpop(\empdist{h}, \refpol_{h+1})\|_1}{\cF_h} \\
    &\begin{aligned}
        = \Exop \bigg[ \bigg\| &\frac{1}{N}\sum_{i=1}^N P(\cdot|s^i_h,a^i_h,\empc{\vecrho^{-i}_h}) \cdot \refpol_{h+1} -
        \sum_{s',a'}\empdist{h}(s',a') P(\cdot|s',a',\empdist{h}) \cdot \refpol_{h+1} \bigg\|_1 \bigg| \cF_h \bigg].
    \end{aligned} \\
    &\begin{aligned}
        = \frac{1}{N} \Exop \bigg[ \bigg\| &\sum_{i=1}^N P(\cdot|s^i_h,a^i_h,\empc{\vecrho^{-i}_h}) \cdot \refpol_{h+1} -
        \sum_{i=1}^N P(\cdot|s^i_h,a^1_h,\empdist{h}) \cdot \refpol_{h+1} \bigg\|_1 \bigg| \cF_h \bigg].
    \end{aligned}
\end{align*}
The vectors $(N-1)\empc{\vecrho^{-i}_h}, N \empdist{h}$ can differ by only 1 in one component due to the $i$-th agent being excluded from the former, it holds that
\begin{align*}
    &\|N \empc{\vecrho^{-i}_h} - N \empdist{h} \|_1 \leq \|(N-1) \empc{\vecrho^{-i}_h} - N \empdist{h} \|_1 + \| \empc{\vecrho^{-i}_h} \|_1 \leq 3,
\end{align*}
therefore for any $s,a$,
\begin{align*}
    \|P(\cdot|s,a,\empc{\vecrho^{-i}_h} ) - P(\cdot|s,a,\empdist{h} ) \|_1 \leq \frac{3K_\mu}{N}
\end{align*}
almost surely and $(\star)$ is further upper bounded by $(\star) \leq \frac{3 K_\mu}{N}.$
    
Finally, the last term $(\heartsuit)$ can be bounded using:
\begin{align*}
    (\heartsuit) = &\EEc{\|
        \gpop(\empdist{h}, \widebar{\pi}_h)-
        \gpop(\mu_{h}^{\widebar{\pi}}, \widebar{\pi}_h)
        \|_1}{\cF_h} 
        \leq \lpopmu \|\empdist{h} - \mu_{h} \|_1.
\end{align*}
To conclude, merging the bounds on the three terms in Inequality~\eqref{eq:main_inductive_bound_lemma_dist} and taking the expectation on both sides, by the law of iterated expectations we obtain:
\begin{align*}
    \EE{\| \empdist{h+1} - \mu_{h+1}\|_1} \leq &\lpopmu \EE{\| \empdist{h} - \mu_{h}\|_1} + |\setS||\setA|\frac{1}{\sqrt{N}} + \frac{3K_\mu}{N} + \Delta_{h+1} + \alpha.
\end{align*}

Induction on $h$ yields the bound for all $h$:
\begin{align}
    &\EE{\| \empdist{h} - \mu_{h}\|_1} \notag \\
    &\leq \sum_{h'=0}^{h} \lpopmu^{h-h'} \left(\frac{|\setS||\setA|}{\sqrt{N}} + \frac{3K_\mu}{N} + \Delta_{h+1} + \alpha\right) \notag \\
    &\leq \frac{1 - \lpopmu^{h+1}}{1 - \lpopmu}\left(\frac{|\setS||\setA|}{\sqrt{N}} + \widebar{\Delta} + \frac{3K_\mu}{N} + \alpha\right), \label{ineq:popdistl1}
\end{align}
where we adopt the convenient shorthand $\frac{1 - \lpopmu^{h+1}}{1 - \lpopmu} := h$ if $\lpopmu = 1$.

\textbf{Step 2: Marginal state-action distributions.}
In this step, we analyze the distributions $\Prob[s_h^i = \cdot, a_h^i=\cdot]$.
For simplicity, assume each player $i\neq 1$ follows policy $\pi \in \Pi$,
player $1$ follows an arbitrary policy $\refpolvec \in \Pi$: we denote the induced random variables in the $N$-player game $\setG$ as $s_h^i, a_h^i, \empdist{h}$.
Assume that in the mean-field game $\inducedmfg{\setG}$, the representative player in $\inducedmfg{\setG}$ also follows policy $\refpolvec$, evaluated against distribution $\vecmu := \lpop(\pi)$: denote the induced random variables as $s_h, a_h$.
In this setting, we will inductively upper bound the quantity
\begin{align*}
    \| \Prob[s_h = \cdot] - \Prob[s_h^1 = \cdot] \|_1 .
\end{align*}

Firstly, for $h=0$, it holds by definition that
\begin{align*}
    \Prob[s_0 = \cdot] = \Prob[s^1_0 = \cdot] = \initpop,
\end{align*}
hence $\| \Prob[s_0 = \cdot] - \Prob[s_0^1 = \cdot] \|_1 = 0$.

Next, for the time step $h+1$,
\begin{align}
    \| &\Prob[s_{h+1} = \cdot] - \Prob[s^1_{h+1} = \cdot] \|_1 \notag \\
        \leq & \bigg\| \sum_{s,\vecrho} P^1(s, \refpol_h(s), \vecrho) \Prob[s_h^1=s, \vecrho_h^{-i}=\vecrho] - \sum_{s} P(s, \refpol_h(s), \mu_h) \Prob[s_h=s] \bigg \|_1 \notag \\
        \leq & \bigg\| \sum_{s,\vecrho} P(s, \refpol_h(s), \empc{\vecrho}) \Prob[s_h^1=s, \vecrho_h^{-i}=\vecrho] - \sum_{s} P(s, \refpol_h(s), \mu_h) \Prob[s_h=s] \bigg \|_1 \notag \\
            & + \bigg\|\sum_{s,\vecrho} [P^1(s, \refpol_h(s), \vecrho) - P(s, \refpol_h(s), \empc{\vecrho})] \Prob[s_h^1=s, \vecrho_h^{-i}=\vecrho]\bigg\|_1 \notag \\
        \leq & \bigg\| \sum_{s,\vecrho} P(s, \refpol_h(s), \empc{\vecrho}) \Prob[s_h^1=s, \vecrho_h^{-i}=\vecrho] - \sum_{s} P(s, \refpol_h(s), \mu_h) \Prob[s_h=s] \bigg \|_1 + \alpha, \label{ineq:approxproofsteptwoind}
\end{align}
where the last line follows from the $\alpha$-symmetry condition.
For the remaining term, we first observe the inequality:
\begin{align*}
    &\bigg\| \sum_{s,\vecrho} [P(s, \refpol_h(s), \empc{\vecrho}) - P(s, \refpol_h(s), \mu_h)] \Prob[s_h^1=s, \vecrho_h^{-i}=\vecrho]\bigg\|_1 \\
    &\leq \sum_{s,\vecrho} K_\mu \| \empc{\vecrho} - \mu_h \| _1 \Prob[s_h^1=s, \vecrho_h^{-i}=\vecrho] \\
     &\leq K_\mu \Exop[\| \empc{\vecrho}_h^{-i} - \mu_h \| _1] \\
    &\leq K_\mu \Exop[\| \empdist{h} - \mu_h \| _1] + \frac{3K_\mu}{N}
\end{align*}
as again $\|\empdist{h} - \empc{\vecrho^{-i}_h} \|_1 \leq \sfrac{3}{N}$ almost surely, as $N\empdist{h}$ and $(N-1)\empc{\vecrho^{-i}_h}$ differ by one in one coordinate only.
Then, applying the triangle inequality and marginalizing over $\vecrho$ in Inequality~\eqref{ineq:approxproofsteptwoind},
\begin{align*}
    \| &\Prob[s_{h+1} = \cdot] - \Prob[s^1_{h+1} = \cdot] \|_1 \\
    \leq & \bigg\| \sum_{s} P(s, \refpol_h(s), \mu_h) \Prob[s_h^1=s] - \sum_{s} P(s, \refpol_h(s), \mu_h) \Prob[s_h=s] \bigg \|_1 \\
        &+ \alpha + \frac{3}{N} + K_\mu \Exop[\| \empdist{h} - \mu_h \| _1] \\
    \leq & \| \Prob[s_h^1=s] - \Prob[s_h=s] \|_1 + \alpha + \frac{3K_\mu}{N} + K_\mu \Exop[\| \empdist{h} - \mu_h \| _1]
\end{align*}
where in the last line we used the fact that Markov kernels are non-expansive in $\ell_1$ norm, where in this case the Markov kernel is given by $P(\cdot|s, \refpol_h(s), \mu_h)$ for all $s\in\setS$.

Inductively applying the recursive bound above we obtain the inequality for all $h$:
\begin{align}
    &\| \Prob[s_h^1=\cdot] - \Prob[s_h=\cdot] \|_1 \leq (h-1)\left(\alpha + \frac{3K_\mu}{N}\right) + K_\mu \sum_{h=0}^{h-1} \Exop[\| \empdist{h} - \mu_h \| _1]. \label{ineq:marg_diff_l1_notsub}
\end{align}
By the result in Step 1 (Inequality~\ref{ineq:popdistl1}), since $\widebar{\Delta} \leq \sfrac{2}{N}$ in this case it holds that
\begin{align*}
    &\EE{\| \empdist{h} - \mu_{h}\|_1} \leq \frac{1 - \lpopmu^{h+1}}{1 - \lpopmu}\left(\frac{|\setS||\setA|}{\sqrt{N}} + \frac{5K_\mu}{N} + \alpha\right),
\end{align*}
for all $h=0,\ldots,N-1$.
Merging the two inequalities:
\begin{align}
    &\| \Prob[s_h^1=\cdot] - \Prob[s_h=\cdot]  \|_1 \notag \\
    &\leq \lpopmu \sum_{h'=0}^{h-1} \frac{1 - \lpopmu^{h'+1}}{1 - \lpopmu}\left(\frac{|\setS||\setA|}{N} + \frac{5K_\mu}{N} + \alpha\right) + (h-1)\left(\alpha + \frac{3K_\mu}{N}\right) \label{ineq:marg_diff_l1}.
\end{align}

\textbf{Step 3. Bounding value function.}
In this step, we bound the difference between the expected returns of a player in the $N$ player game and the induced MFG (denoted by functions  $J,V$ respectively).
Namely, we will upper bound the deviation
\begin{align*}
    |\JfinNi{1}(\refpolvec, \pi, \ldots, \pi) - \Vfin(\lpop(\pi), \refpolvec) |
\end{align*}
for any two policies $\refpolvec, \pi \in \Pi$.

As in step 1, assume each player $i\neq 1$ follows policy $\pi \in \Pi$, and
player $1$ follows policy $\refpolvec$: we denote the induced random variables in the $N$-player game $\setG$ as $s_h^i, a_h^i, \empdist{h}$.
Assume that in the mean-field game $\inducedmfg{\setG}$, the representative player in $\inducedmfg{\setG}$ also follows policy $\refpolvec$, evaluated against distribution $\vecmu := \lpop(\pi)$: denote the induced random variables as $s_h, a_h$.

By the results in Step 1,2 shown in inequalities~\eqref{ineq:popdistl1}, \eqref{ineq:marg_diff_l1_notsub}, since $\widebar{\Delta} \leq \sfrac{2}{N}$ in this case it holds that:
\begin{align*}
    &\EE{\| \empdist{h} - \mu_{h}\|_1} \leq \frac{1 - \lpopmu^{h+1}}{1 - \lpopmu}\left(\frac{|\setS||\setA|}{\sqrt{N}} + \frac{5K_\mu}{N} + \alpha\right) , \\
    &\| \Prob[s_h^1=\cdot] - \Prob[s_h=\cdot] \|_1  
     \leq (h-1)\left(\alpha + \frac{3K_\mu}{N}\right) + K_\mu \sum_{h=0}^{h-1} \Exop[\| \empdist{h} - \mu_h \| _1] ,
\end{align*}
for all $h=0,\ldots,N-1$.

At a fixed time step $h$, the expected one-step reward differences can be decomposed into four terms:
\begin{align*}
    & | \Exop[R(s_h, a_h, \mu_h)] - \Exop[R^1(s^1_h, a^1_h, 
    \vecrho^{-i})] | \\
    &\leq | \Exop[R(s_h, a_h, \mu_h)] - \Exop[R(s^1_h, a^1_h, 
    \mu_h)] | \\
    &\quad + | \Exop[R(s^1_h, a^1_h, \mu_h)] - \Exop[R(s^1_h, a^1_h, 
    \empdist{h})] | \\
    &\quad + | \Exop[R(s^1_h, a^1_h, 
    \empdist{h})] - \Exop[R(s^1_h, a^1_h, 
    \empc{\vecrho^{-i}})] | \\
    &\quad + | \Exop[R(s^1_h, a^1_h, 
    \empc{\vecrho^{-i}})] - \Exop[R^1(s^1_h, a^1_h, 
    \vecrho^{-i})] |,
\end{align*}
enumerating these terms as (I), (II), (III), and (IV), we upper bound each as follows:
\begin{align*}
    \text{(I)} \leq & \frac{L_a}{2} \| \Prob[s_h = \cdot, a_h = \cdot] - \Prob[s^1_h = \cdot, a^1_h = \cdot]\|_1  \\
        \leq & \frac{L_a}{2}  \| \Prob[s_h = \cdot] - \Prob[s^1_h = \cdot]\|_1 \\
    \text{(II)} \leq & L_\mu \Exop [ \|\empdist{h} - \mu_h\|_1 ] \\
    \text{(III)} \leq & L_\mu \Exop [ \|\empdist{h} - \empc{\vecrho^{-i}}\|_1 ] \leq L_\mu \frac{3}{N} \\
    \text{(IV)} \leq & \beta
\end{align*}
where the last line uses the $\alpha,\beta$ symmetry condition.

Then, summing up over the entire time horizon, we obtain the result
\begin{align}
    &|\JfinNi{1}(\refpolvec, \pi, \ldots, \pi) - \Vfin(\lpop(\pi), \refpolvec) | \notag \\
    &\leq \sum_{h=0}^{H-1} \left( L_\mu \Exop [ \|\empdist{h} - \mu_h\|_1 ]+\frac{L_a}{2} \| \Prob[s_h = \cdot] - \Prob[s^1_h = \cdot]\|_1\right) + \beta H + \frac{3HL_\mu}{N} \label{eq:JVdiff_nosubs}
\end{align}
or substituting the upper bounds established before,
\begin{align}
    &|\JfinNi{1}(\refpolvec, \pi, \ldots, \pi) - \Vfin(\lpop(\pi), \refpolvec) | \notag \\
    &\leq \sum_{h=0}^{H-1} \left( L_\mu \Exop [ \|\empdist{h} - \mu_h\|_1 ]+\frac{L_a K_\mu }{2}  \sum_{h'=0}^{h-1} \Exop [ \|\empdist{h'} - \mu_{h'}\|_1 ] \right) \notag\\
        &\quad + \beta H + \frac{3HL_\mu}{N} + H^2\left( \alpha + \frac{ 3K_\mu }{N} \right) \notag \\
    &\leq \sum_{h=0}^{H-1} \left( L_\mu E^{(h)} +\frac{L_a K_\mu }{2}  \sum_{h'=0}^{h-1} E^{(h')} \right) 
         + \beta H + \frac{3HL_\mu}{N} + H^2\left( \alpha + \frac{ 3K_\mu }{N} \right)
        \label{eq:JVdiff}
\end{align}
where we define the quantity
\begin{align*}
    E^{(h)} := \frac{1 - \lpopmu^{h+1}}{1 - \lpopmu}\left(\frac{|\setS||\setA|}{\sqrt{N}} + \frac{5K_\mu}{N} + \alpha\right).
\end{align*}

\textbf{Step 4. Bounding exploitability function.}
Finally, we use the results from the previous steps to upper bound the exploitability of the MFG-NE policy $\pi^*$ in the FH-DG.
Let $\vecmu^* = \lpop (\pi^*)$.
Let $\pi'$ be arbitrary.
\begin{align*}
    \JfinNi{1}( \pi', \pi^*, \ldots, \pi^* ) - \JfinNi{1} ( \pi^*, \ldots, \pi^* ) \leq &\Vfin( \lpop (\pi^*), \pi' ) - \Vfin ( \lpop ( \pi^*), \pi^* ) \\
        & + |\JfinNi{1}(\pi^*, \pi^*, \ldots, \pi^*) - \Vfin(\lpop(\pi^*), \pi^*) | \\
        & + |\JfinNi{1}(\pi', \pi^*, \ldots, \pi^*) - \Vfin(\lpop(\pi^*), \pi') | .
\end{align*}
The last two terms in this inequality can be bounded by Inequality~\eqref{eq:JVdiff} by choosing $\refpolvec = \pi^*$ and $\refpolvec = \pi'$ respectively, and the first term satisfies
\begin{align*}
    \Vfin( \lpop (\pi^*), \pi' ) - \Vfin ( \lpop ( \pi^*), \pi^* ) \leq \delta
\end{align*}
as $\pi$ is assumed to be a $\delta$-MFG-NE.

Then, the main statement of the theorem is obtained by observing
\begin{align*}
    \ExpfinNi{i} (\pi^*) &= \max_{\pi' \in \Pi} \JfinNi{i} \left(\pi', \pi^{*, -i} \right) - \JfinNi{i} \left( \pi^*, \ldots, \pi^* \right) \\
    &\leq \delta + 2\sum_{h=0}^{H-1} \left( L_\mu E^{(h)} +\frac{L_a K_\mu }{2}  \sum_{h'=0}^{h-1} E^{(h')} \right) 
         + 2\beta H + \frac{6HL_\mu}{N} + 2H^2\left( \alpha + \frac{ 3K_\mu }{N} \right),
\end{align*}
where once again
\begin{align*}
    E^{(h)} := \frac{1 - \lpopmu^{h+1}}{1 - \lpopmu}\left(\frac{|\setS||\setA|}{\sqrt{N}} + \frac{5K_\mu}{N} + \alpha\right).
\end{align*}

Namely, if $\lpopmu = 1$, then
\begin{align*}
    E^{(h)} = \cO\left( h \left(\alpha + \frac{1}{\sqrt{N}} \right) \right),
\end{align*}
if $\lpopmu < 1$, then
\begin{align*}
    E^{(h)} = \cO\left(\alpha + \frac{1}{\sqrt{N}} \right),
\end{align*}
and finally if $\lpopmu > 1,$ then
\begin{align*}
    E^{(h)} = \cO\left( \lpopmu^h \left(\alpha + \frac{1}{\sqrt{N}} \right) \right).
\end{align*}

\section{Extended Results on TD Learning}

We will make use of the following technical lemma.

\begin{lemma}\label{lemma:pop_error_squared}
    Let $\setG = (\setS, \setA, \initpop, N, H, \{ P^i\}_{i=1}^N, \{R^i\}_{i=1}^N)$ be a FH-DG which induces $\inducedmfg{\setG}=(\setS, \setA, N, H, P, R)$.
    Furthermore, assume that the population flow operator $\gpop$ of the induced MFG satisfies the Lipschitz condition
    \begin{align*}
        \| \gpop ( \mu, \pi) - \gpop ( \mu', \pi)\|_2 \leq \lpopmu \| \mu - \mu'\|_2,
    \end{align*}
    for all policies $\pi$ and $\mu,\mu'\in\Delta_{\setS\times\setA}$.
    Then, if all players in the FH-DG play policy $\pi \in \Pi$, it holds that
    \begin{align*}
        \EE{\| \empdist{h} - \mu_{h}\|_2^2} \leq C \frac{1 - \lpopmu^{2(h+1)}}{1 - \lpopmu^2}  \left(\frac{|\setS||\setA|}{N} + \frac{18 K_\mu^2 (H^2 + 2)}{N^2} + 2(H^2 + 2)\alpha^2\right), 
    \end{align*}
    for some absolute constant $C> 0$.
\end{lemma}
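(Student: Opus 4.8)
The plan is to mirror Step~1 of the proof of Theorem~\ref{theorem:main_approximation}, but to propagate the \emph{squared} $\ell_2$ error through the induction rather than the $\ell_1$ error, exploiting conditional orthogonality so that the finite-sample fluctuation contributes only through its variance. First I would reuse the four-term decomposition of $\empdist{h+1}-\mu_{h+1}$ from Inequality~\eqref{eq:main_inductive_bound_lemma_dist}, writing it as a conditionally mean-zero fluctuation $(\triangle)=\empdist{h+1}-\EEc{\empdist{h+1}}{\cF_h}$ plus an $\cF_h$-measurable remainder $W:=(\square)+(\star)+(\heartsuit)$. Since all agents follow the same $\pi$, the policy-mismatch term $\Delta_{h+1}$ vanishes, so the earlier $\ell_1$ estimates give $\|(\square)\|_1\le\alpha$ and $\|(\star)\|_1\le\sfrac{3K_\mu}{N}$ almost surely, while the $\ell_2$-Lipschitz hypothesis on $\gpop$ gives $\|(\heartsuit)\|_2=\|\gpop(\empdist{h},\pi)-\gpop(\mu_h,\pi)\|_2\le\lpopmu\|\empdist{h}-\mu_h\|_2$.

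The key first step is the conditional variance--bias split: because $W$ is $\cF_h$-measurable and $\EEc{(\triangle)}{\cF_h}=\bmzero$, the cross term drops and
\[
\EEc{\|\empdist{h+1}-\mu_{h+1}\|_2^2}{\cF_h}=\EEc{\|(\triangle)\|_2^2}{\cF_h}+\|W\|_2^2.
\]
For the fluctuation, given $\cF_h$ each coordinate $\empdist{h+1}(s,a)$ is an average of $N$ agent-wise independent Bernoulli indicators, so the second-moment bound of Lemma~\ref{lemma:concentration_bound_iid} applied coordinatewise yields $\EEc{\|(\triangle)\|_2^2}{\cF_h}\le\sfrac{|\setS||\setA|}{N}$. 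For the bias, using $\|\cdot\|_2\le\|\cdot\|_1$ and the triangle inequality, $\|W\|_2\le\lpopmu\|\empdist{h}-\mu_h\|_2+b$ with $b:=\alpha+\sfrac{3K_\mu}{N}$. Writing $e_h:=\|\empdist{h}-\mu_h\|_2$, this produces the scalar recursion $\EE{e_{h+1}^2}\le\sfrac{|\setS||\setA|}{N}+\EE{(\lpopmu e_h+b)^2}$.

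I would then close the recursion with a weighted Young step $(\lpopmu e_h+b)^2\le(1+\sfrac{1}{H})\lpopmu^2 e_h^2+(1+H)b^2$, chosen so that the fluctuation coefficient remains a universal constant across the horizon. Unrolling (with base case $\EE{e_0^2}\le\sfrac{|\setS||\setA|}{N}$) gives $\EE{e_h^2}\le\big(\sum_{k=0}^{h}\rho^k\big)\big(\sfrac{|\setS||\setA|}{N}+(1+H)b^2\big)$ with $\rho:=(1+\sfrac{1}{H})\lpopmu^2$, and the crucial estimate $(1+\sfrac{1}{H})^h\le(1+\sfrac{1}{H})^H\le e$ for $h\le H$ converts the geometric sum into $\sum_{k=0}^{h}\rho^k\le e\,\frac{1-\lpopmu^{2(h+1)}}{1-\lpopmu^2}$; bounding $(1+H)b^2\le 2(H^2+2)\alpha^2+\sfrac{18(H^2+2)K_\mu^2}{N^2}$ via $(x+y)^2\le 2x^2+2y^2$ and $1+H\le H^2+2$ then delivers the claim with $C=e$.

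The main obstacle is precisely this last recursion. A naive triangle-inequality (root-mean-square) unrolling produces the factor $\big(\frac{1-\lpopmu^{h+1}}{1-\lpopmu}\big)^2$, which is strictly weaker than the stated $\frac{1-\lpopmu^{2(h+1)}}{1-\lpopmu^2}$ near $\lpopmu=1$ and moreover inflates the fluctuation term $\sfrac{|\setS||\setA|}{N}$ by an extra factor $H$. Obtaining the advertised form requires keeping the variance contribution separate from the bias contribution and tuning the Young weight to $\sfrac{1}{H}$; the genuine subtlety is that $\rho$ may exceed $1$ even when $\lpopmu<1$, which is exactly what the bound $(1+\sfrac{1}{H})^H\le e$ is designed to control. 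Everything else — the elementary base case and the standard quadratic inequalities — is routine.
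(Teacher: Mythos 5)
Your proposal is correct and follows essentially the same route as the paper: the same four-term decomposition with the conditional variance--bias split killing the cross term, the per-coordinate second-moment bound $\sfrac{|\setS||\setA|}{N}$ for the fluctuation, the almost-sure bounds $\alpha$ and $\sfrac{3K_\mu}{N}$ on the symmetrization and leave-one-out terms, and a Young-weighted unrolling of the resulting scalar recursion. The only (immaterial) difference is the choice of Young weights — you use the constant weight $\sfrac{1}{H}$ and the bound $(1+\sfrac{1}{H})^H\le e$, while the paper uses time-varying weights $\delta_h=(1+h^2)^{-1}$ with $\prod_h(1+\delta_h)\le C<10$ — both yielding the stated bound with an absolute constant.
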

\begin{proof}
Due to its relevance for a general connection between the FH-MFG and the $N$-player game, we state this result in the form of an explicit bound.
In this step, we will assume $N$ players of the FH-DG pursue policies $\{ \pi^i \}_i = \{\pi^i_h\}_{i,h} \in \Pi^N$ such that $\pi^i = \pi$ for some $\pi\in\Pi$, and random variables $\{s^i_{h}, a^i_{h} \}_{i, h' \leq h}$, $\vecrho _ h \in (\setS\times\setA)^{N}$ are generated according to the finite player dynamics.

The proof will proceed inductively over $h$.
First, for time $h=0$, we have
\begin{align*}
     \EE{
    \|\empdist{0}-\mu_0\|_2^2
    }
    \leq & \EE{\left\|\empdist{0}-\initpop \cdot \pi_0\right\|_2^2}  \\
    \leq & \sum_{s,a} \EE{ \left(\frac{1}{N}\sum_{i=1}^N \ind{s_0^i=s, a_0^i = a} - (\initpop \cdot \pi_0)(s,a) \right)^2}  \\
    \leq & \sum_{s,a} \frac{1}{N^2}\EE{\sum_{i=1}^N \left(\ind{s_0^i=s, a_0^i = a} - (\initpop \cdot \pi_0)(s,a) \right)^2}  \\
    \leq &
     \frac{|\setS| |\setA|}{N},
\end{align*}
due to the fact that $\ind{s_0^i=s, a_0^i = a}$ are independent, bounded random variables, and that we have $\EE{\ind{s_0^i=s, a_0^i = a}}= \initpop (s) \pi(s,a) = \mu_0(s,a)$.

Next, denoting the $\sigma$-algebra induced by the random variables $\{s^i_{h}, a^i_{h} \}_{i, h' \leq h}$ as $\mathcal{F}_h$, we have that:
\begin{align}
     &\EEc{\|
        \empdist{h+1}-\mu_{h+1}
        \|_2^2}{\cF_h} \\
        &\leq \EEc{\|
        \empdist{h+1}-\EEc{\empdist{h+1}}{\cF_h}
        \|_2^2}{\cF_h} + \EEc{\|
        \mu_{h+1}-\EEc{\empdist{h+1}}{\cF_h}
        \|_2^2}{\cF_h} \\
        &\leq \EEc{\|
        \empdist{h+1}-\EEc{\empdist{h+1}}{\cF_h}
        \|_2^2}{\cF_h} \notag \\
            &\quad + (1 + \delta_h^{-1}) \EEc{\|
        \gpop(\empdist{h}, \pi_h)-\EEc{\empdist{h+1}}{\cF_h}
        \|_2^2}{\cF_h} + (1 + \delta_h) \EEc{\|
        \gpop(\empdist{h}, \pi_h)-\mu_{h+1}
        \|_2^2}{\cF_h} \label{ineq:young_squared_pop}
        \\
         \leq  & \underbrace{\EEc{\|
        \empdist{h+1}-\EEc{\empdist{h+1}}{\cF_h}
        \|_2^2}{\cF_h}}_{(\triangle)} + 2(1 + \delta_h^{-1})
        \underbrace{
            \EEc{\| \EEc{\empdist{h+1}}{\cF_h}- \Xi(\vecrho_h, \pi_h) \|_2^2}{\cF_h}
        }_{(\square)} \notag \\
        &+ 2(1 + \delta_h^{-1}) \underbrace{
            \EEc{\|
                \Xi(\vecrho_h, \pi_h)-\gpop(\empdist{h}, \pi_h)
                \|_2^2}{\cF_h}
        }_{(\star)} + (1 + \delta_h) \underbrace{\EEc{\|
        \gpop(\empdist{h}, \pi_h)-\mu_{h+1}
        \|_2^2}{\cF_h} }_{(\heartsuit)} \label{eq:squared_pop_main_inductive_bound_lemma_dist}
\end{align}
where inequalities~\ref{ineq:young_squared_pop} and \ref{eq:squared_pop_main_inductive_bound_lemma_dist} follow from applications of Young's inequality, where $\delta_h > 0$ is a positive value to be determined later.
We upper bound the four terms separately as in the proof of Theorem~\ref{theorem:main_approximation}.
For $(\triangle)$, it holds that
\begin{align*}
    (\triangle) = &\EEc{
    \|
    \empdist{h+1}-
    \EEc{\empdist{h+1}}{\cF_h}
    \|_2^2
    }{\cF_h} \leq \frac{|\setS||\setA|}{N},
\end{align*}
since each $\empdist{h+1}(s)$ is an average of independent subgaussian random variables given $\cF_h$. 
Specifically, each indicator is bounded $\ind{s^i_{h+1}= s, a^i_{h+1}= a}\in [0,1]$ almost surely. 

Next, for the term $(\square)$,
\begin{align*}
    (\square) &= \EEc{\| \EEc{\empdist{h+1}}{\cF_h}- \Xi(\vecrho_h, \refpol_{h+1}) \|_2^2}{\cF_h} \\
    &\begin{aligned}
        \leq \frac{1}{N^2} \Exop\bigg[ &\bigg\|  \sum_{i=1}^N P^i(\cdot|s^i_h,a^i_h,\vecrho^{-i}) \cdot \pi_{h+1} -  \sum_{i=1}^N  P(\cdot|s^i_h,a^i_h,\empc{\vecrho^{-i}}) \cdot \pi_{h+1} \bigg\|_1^2 \bigg| \cF_h \bigg]
    \end{aligned}\\
    &\leq \frac{1}{N^2} \Exop\bigg[ \left(\sum_{i=1}^N \|P^i(\cdot|s^i_h,a^i_h,\vecrho^{-i}) - P(\cdot|s^i_h,a^i_h,\empc{\vecrho^{-i}})\|_1 \right)^2 \bigg| \cF_h \bigg]
\end{align*}
By the $\alpha$-symmetry condition, it follows that $(\square) \leq \alpha^2$.
    
For $(\star)=\Exop[\|\Xi(\vecrho_h, \pi_h \}_i)-\gpop(\empdist{h}, \pi_h)\|_2^2|\cF_h]$,
\begin{align*}
    (\star)
    &\leq \EEc{ \|\Xi(\vecrho_h, \refpol_{h+1})-\gpop(\empdist{h}, \refpol_{h+1})\|_1^2}{\cF_h} \\
    &\begin{aligned}
        = \Exop \bigg[ \bigg\| &\frac{1}{N}\sum_{i=1}^N P(\cdot|s^i_h,a^i_h,\empc{\vecrho^{-i}_h}) \cdot \refpol_{h+1} -
        \sum_{s',a'}\empdist{h}(s',a') P(\cdot|s',a',\empdist{h}) \cdot \refpol_{h+1} \bigg\|_1^2 \bigg| \cF_h \bigg]
    \end{aligned} \\
    &\begin{aligned}
        = \frac{1}{N^2} \Exop \bigg[ \bigg\| &\sum_{i=1}^N P(\cdot|s^i_h,a^i_h,\empc{\vecrho^{-i}_h}) \cdot \refpol_{h+1} -
        \sum_{i=1}^N P(\cdot|s^i_h,a^1_h,\empdist{h}) \cdot \refpol_{h+1} \bigg\|_1 ^2 \bigg| \cF_h \bigg].
    \end{aligned}
\end{align*}
The vectors $(N-1)\empc{\vecrho^{-i}_h}, N \empdist{h}$ can differ by only 1 in one component due to the $i$-th agent being excluded from the former, it holds that
\begin{align*}
    &\|N \empc{\vecrho^{-i}_h} - N \empdist{h} \|_1 \leq \|(N-1) \empc{\vecrho^{-i}_h} - N \empdist{h} \|_1 + \| \empc{\vecrho^{-i}_h} \|_1 \leq 3,
\end{align*}
therefore for any $s,a$,
\begin{align*}
    \|P(\cdot|s,a,\empc{\vecrho^{-i}_h} ) - P(\cdot|s,a,\empdist{h} ) \|_1 \leq \frac{3K_\mu}{N}
\end{align*}
almost surely and $(\star)$ is further upper bounded by $(\star) \leq \frac{9 K_\mu^2}{N^2}.$
    
Finally, the last term $(\heartsuit)$ can be upper bounded using the Lipschitz condition on $\gpop$, namely:
\begin{align*}
    (\heartsuit) = &\EEc{\|
        \gpop(\empdist{h}, \pi_h)-
        \gpop(\mu_{h}^{\pi}, \pi_h)
        \|_2^2}{\cF_h} 
        \leq \lpopmu^2 \|\empdist{h} - \mu_{h} \|_2^2.
\end{align*}
To conclude, merging the bounds on the three terms in Inequality~\eqref{eq:main_inductive_bound_lemma_dist} and taking the expectations we obtain:
\begin{align*}
    \EE{\| \empdist{h+1} - \mu_{h+1}\|_2^2} \leq & (1 + \delta_h) \lpopmu^2 \EE{\| \empdist{h} - \mu_{h}\|_2^2} + \frac{|\setS||\setA|}{N} \\ 
    &+ 2(1 + \delta_h^{-1}) \frac{9K_\mu^2}{N^2} + 2(1 + \delta_h^{-1}) \alpha^2.
\end{align*}

Induction on $h$ yields the bound for all $h$:
\begin{align}
    &\EE{\| \empdist{h} - \mu_{h}\|_2^2} \notag \\
    &\leq \sum_{h'=0}^{h} \lpopmu^{2(h-h')} \left( \prod_{h'' = h'+1}^{h} (1 + \delta_{h''}) \right) \left(\frac{|\setS||\setA|}{N} + 2(1 + \delta_{h'}^{-1})\frac{9K_\mu^2}{N^2} + 2(1 + \delta_{h'}^{-1})\alpha^2\right). \notag 
\end{align}
Here, we specify $\delta_h := (1 + h^2)^{-1}$, for which it holds that 
\begin{align*}
    \prod_{h = 0}^{\infty} (1 + \delta_{h}) = \prod_{h = 0}^{\infty} (1 + (1+h^2)^{-1}) \leq C,
\end{align*}
for an absolute constant $C < 10$.
Then,
\begin{align}
    &\EE{\| \empdist{h} - \mu_{h}\|_2^2} \notag \\
    &\leq \sum_{h'=0}^{h} \lpopmu^{2(h-h')} C \left(\frac{|\setS||\setA|}{N} + 2(H^2 + 2)\frac{9K_\mu^2}{N^2} + 2(H^2 + 2)\alpha^2\right). \notag 
\end{align}
which yields the stated upper bound of the lemma 
\begin{align}
    &\EE{\| \empdist{h} - \mu_{h}\|_2^2} \leq \frac{1 - \lpopmu^{2(h+1)}}{1 - \lpopmu^2} C \left(\frac{|\setS||\setA|}{N} + \frac{18 K_\mu^2 (H^2 + 2)}{N^2} + 2(H^2 + 2)\alpha^2\right),  
\end{align}
where we adopt the convenient shorthand $\frac{1 - \lpopmu^{2(h+1)}}{1 - \lpopmu^2} := h$ if $\lpopmu = 1$.
\end{proof}

Note that while the Lipschitz modulus used in Lemma~\ref{lemma:pop_error_squared} is with respect to the $\|\cdot\|_2$ norm, Lemma~\ref{lemma:extension_bound} readily guarantees that this will hold.

\subsection{Extended Proof of Theorem~\ref{theorem:td}}\label{sec:appendix_td}

Let $\vecmu = \{ \mu_h \}_h:= \lpop(\pi)$, and note that by the proof of Theorem~\ref{theorem:main_approximation}, it holds that (Inequality~\ref{ineq:marg_diff_l1})
\begin{align*}
    A_h & := \| \Prob[s_h^1=\cdot, a_h^1=\cdot] - \Prob[s_h=\cdot, a_h=\cdot]  \|_1 \notag \\
    &\leq \| \Prob[s_h^1=\cdot] - \Prob[s_h=\cdot]  \|_1 \notag \\
    &\leq \lpopmu \sum_{h'=0}^{h-1} \frac{1 - \lpopmu^{h'+1}}{1 - \lpopmu}\left(\frac{|\setS||\setA|}{N} + \frac{5K_\mu}{N} + \alpha\right) + (h-1)\left(\alpha + \frac{3K_\mu}{N}\right)  .
\end{align*}
Likewise, by Lemma~\ref{lemma:pop_error_squared},
\begin{align*}
    B_h := &\Exop[ \| \mu_h - \empc{\vecrho^{-1}_{m,h}}\|_1^2] \\
    := &\Exop[ \| \mu_h - \empc{\vecrho^{-1}_{m,h}}\|_2^2] |\setS||\setA| \\
    \leq & \frac{1 - \lpopmu^{2(h+1)}}{1 - \lpopmu^2} 2C \left(\frac{|\setS||\setA|}{N} + \frac{18 K_\mu^2 (H^2 + 2)}{N^2} + 2(H^2 + 2)\alpha^2\right)|\setS||\setA| + \frac{18 |\setS||\setA|}{N^2}.
\end{align*}    
Will will commonly utilize the bounds $\Qest{m}{h} \in [0,\Qmax ], \Qpi{h} \in [0,\Qmax]$ almost surely for $\Qmax := H (1 + \log |\setA|)$, as the one-step rewards are bounded in range $[0,1]$ and the policy entropy has trivial upper bound $\log |\setA|$.
Denote the marginal probabilities of $s^1_{m,h}, a^1_{m,h}$ (which is i.i.d. for all $m$) as $p_h \in \Delta_{\setS\times\setA}$, which clearly does not depend on epoch $m$ as the same policies are deployed at each TD learning round.
We also define the quantity
\begin{align*}
    \qpi{h}(s,a) := \Qpi{h} (s,a) - \tau \entropy(\pi_h(\cdot|s)),
\end{align*}
which corresponds to the more standard entropy regularized value function in some works.

We outline the proof strategy into different steps as follows:
\begin{itemize}
    \item \textbf{Step 1. } Analyze the algorithm for the $Q$-values at time step $H-1$, that is, show that in expectation $\| \Qpi{H-1} - \Qest{m}{H-1} \|_{p_{H-1}}^2$ decreases with $\cO(\sfrac{1}{m})$ over epochs, up to a small bias term.
    \item \textbf{Step 2. } Assuming that the error at some time $h$ decreases with rate $\cO(\sfrac{1}{m})$, show that the error $\| \Qpi{h-1} - \Qest{m}{h-1} \|_{p_{H-1}}^2$ also decreases with rate $\cO(\sfrac{1}{m})$, showing that the magnification in the constants are not too large.
    \item \textbf{Step 3. } Conclude the statement of the theorem by induction.
\end{itemize}

\textbf{Step 1. }
We will first analyze the evolution of $\Qest{m}{H-1}$.
By definition, it holds that 
\begin{align*}
    \Qpi{H-1}(s, a) = R(s,a,\mu_{H-1}) + \tau \entropy(\pi_{H-1}(s)).
\end{align*}
In other words, there is no bootstrapping and the stochastic error does not have a dependence on future biased estimates.
Firstly, if $s^1_{m,H-1}=s, a^1_{m,H-1}=a, \vecrho_{m,H-1} = \vecrho$, then it holds almost surely that
\begin{align*}
    &\Qpi{H-1}(s,a) - \Qest{m+1}{H-1} (s,a) \\
    &=\Qpi{H-1}(s,a) - \Qest{m}{H-1}(s,a) - \lr{m}{h} ( r_{m,H-1}^i + \tau\entropy(\pi_{H-1}(s)) - \Qest{m}{h}(s,a) )  \\
    &= (1 - \lr{m}{H-1}  ) (\Qpi{H-1}(s,a) - \Qest{m}{H-1}(s,a)) - \lr{m}{h} ( r_{m,H-1}^i + \tau\entropy(\pi_{H-1}(\cdot|s)) - \Qpi{H-1}(s,a) ) \\ 
    &= (1 - \lr{m}{H-1}  ) (\Qpi{H-1}(s,a) - \Qest{m}{H-1}(s,a)) - \lr{m}{h} ( ( R^1(s,a,\vecrho_{m,H-1}^{-1}) - R(s,a,\mu_{H-1}) ) ,
\end{align*}
as the entropy term $\entropy(\pi_{H-1}(\cdot|s))$ cancels out.
Denote the $\sigma$-algebra 
\begin{align*}
    \cF^m_{s,a} := \cF \{ \{s_{m',h}, a_{m',h}\}_{m'<m}, s_{m,H-1}=s, a_{m,H-1}=a \}
\end{align*}
for any fixed $s,a$.
Then, noting that $\Qest{m}{H-1} (s,a)$ is $\cF^m_{s,a}$-measurable, we have the inequalities
\begin{align*}
    &\Exop[ ( \Qpi{H-1}(s, a) - \Qest{m+1}{H-1} (s,a) )^2 | \cF^m_{s,a} ]\\
    &= \Exop\bigg[ \bigg( (1-\lr{m}{H-1}) ( \Qpi{H-1}(s, a) - \Qest{m}{H-1} (s,a) ) - \lr{m}{H-1} ( R^1(s,a,\vecrho_{m,H-1}^{-1}) - R(s,a,\mu_{H-1}) ) \bigg)^2 \bigg| \cF^m_{s,a} \bigg]\\
    &= (1-\lr{m}{H-1})  ^ 2  ( \Qpi{H-1}(s, a) - \Qest{m}{H-1} (s,a) )^2   \\
        &\qquad + 2 (1-\lr{m}{H-1}) \lr{m}{H-1} ( \Qpi{H-1}(s, a) - \Qest{m}{H-1} (s,a) ) \Exop[   R^1(s,a,\vecrho_{m,H-1}^{-1}) - R(s,a,\mu_{H-1})   | \cF^m_{s,a} ] \\
        &\qquad + \lr{m}{H-1}^2 \Exop[ ( R^1(s,a,\vecrho_{m,H-1}^{-1}) - R(s,a,\mu_{H-1}) )^2 | \cF^m_{s,a} ] 
\end{align*}
We use Young's inequality and using the fact that rewards are bounded in $[0,1]$, 
\begin{align*}
    &\Exop[ ( \Qpi{H-1}(s, a) - \Qest{m+1}{H-1} (s,a) )^2 | \cF^m_{s,a} ]\\
    &\leq (1-2\lr{m}{H-1})   ( \Qpi{H-1}(s, a) - \Qest{m}{H-1} (s,a) )^2  + \lr{m}{H-1}  ( \Qpi{H-1}(s, a) - \Qest{m}{H-1} (s,a) )^2 \\
        &\qquad+\lr{m}{H-1}\Exop[  ( R^1(s,a,\vecrho_{m,H-1}^{-1}) - R(s,a,\mu_{H-1}) )^2   | \cF^m _{s,a} ] + (1 + \Qmax^2)\lr{m}{H-1}^2 \\
    &\leq (1-\lr{m}{H-1})  ( \Qpi{H-1}(s, a) - \Qest{m}{H-1} (s,a) )^2 +\lr{m}{H-1}\Exop[ 2\beta^2 + 2 L_\mu^2\|\empc{\vecrho_{m,H-1}^{-1}} -\mu_{H-1}\|_2^2 | \cF^m_{s,a} ] \\
        & \qquad + (1 + \Qmax^2)\lr{m}{H-1}^2 
\end{align*}
We then take expectations and use the law of total expectation to obtain the bound:
\begin{align}
    &\Exop[ ( \Qpi{H-1}(s,a) - \Qest{m+1}{H-1} (s,a) )^2  ] \notag \\
    &\leq (1-\lr{m}{H-1}) p_{H-1}(s,a) \Exop[ ( \Qpi{H-1}(s,a) - \Qest{m}{H-1} (s,a))^2 ]  \notag \\
        &\qquad + 2\lr{m}{H-1} p_{H-1}(s,a) \Exop[  \beta^2 +  L_\mu^2\|\empc{\vecrho_{m,H-1}^{-1}} -\mu_{H-1}\|_2^2  | s_{m,H-1}=s, a_{m,H-1}=a ] \notag \\
        &\qquad + p_{H-1}(s,a) (1 + \Qmax^2) \lr{m}{H-1}^2  \notag \\
        &\qquad + (1-p_{H-1}(s,a)) \Exop[ ( \Qpi{H-1}(s,a) - \Qest{m+1}{H-1} (s,a) )^2 ] \notag \\ 
    &\leq (1-p_{H-1}(s,a)\lr{m}{H-1})  \Exop[ ( \Qpi{H-1}(s,a) - \Qest{m}{H-1} (s,a))^2 ]  \notag \\
        &\qquad + \lr{m}{H-1} p_{H-1}(s,a) \Exop[  2\beta^2 + 2 L_\mu^2\|\empc{\vecrho_{m,H-1}^{-1}} -\mu_{H-1}\|_2^2  | s_{m,H-1}=s, a_{m,H-1}=a ] \notag \\
        &\qquad + p_{H-1}(s,a) (1 + \Qmax^2) \lr{m}{H-1}^2  \notag
\end{align}
Summing this inequality over all state-action pairs with weight $p_{H-1}$, we obtain
\begin{align*}
    &\Exop[ \| \Qpi{H-1} - \Qest{m+1}{H-1} \|^2_{p_{H-1}}  ] \\
        &\leq (1-\delta\lr{m}{H-1})  \Exop[ \| \Qpi{H-1} - \Qest{m}{H-1}\|_{p_{H-1}}^2 ] \\
            &\qquad + \sum_{s,a} \lr{m}{H-1} p_{H-1}(s,a) \Exop[  2\beta^2 + 2 L_\mu^2\|\empc{\vecrho_{m,H-1}^{-1}} -\mu_{H-1}\|_2^2    | s_{m,H-1}=s, a_{m,H-1}=a ] \\
            &\qquad + (1 + \Qmax^2) \lr{m}{H-1}^2 \\
        &= (1-\delta\lr{m}{H-1})  \Exop[ \| \Qpi{H-1} - \Qest{m}{H-1}\|_{p_{H-1}}^2 ] + (1 + \Qmax^2) \lr{m}{H-1}^2  \\
            &\qquad +  2\lr{m}{H-1} \beta^2 +  2\lr{m}{H-1} L_\mu^2 \Exop[ \| \mu_{H-1} - \empc{\vecrho_{m,H-1}^{-1}}\|_2^2   ] \\
        &\leq (1-\delta\lr{m}{H-1})  \Exop[ \| \Qpi{H-1} - \Qest{m}{H-1}\|_{p_{H-1}}^2 ] + (1 + \Qmax^2) \lr{m}{H-1}^2 +  2\lr{m}{H-1} (\beta^2 + L_\mu^2 B_{H-1}).
\end{align*}

We expand this recursive inequality as follows. 
Define the shorthand notation $\Pi_{m}^{m'} := \prod_{k=m}^{m'} (1 - \delta \lr{k}{H-1})$.
Then, for any $M>0$,
\begin{align}
    &\Exop[\| \Qpi{H-1} - \Qest{M}{H-1} \|^2_{p_{H-1}} ] \notag \\
    &\leq \Pi_{0}^{M-1} + (1 + 2\Qmax^2) \sum_{m=0}^{M-1} \lr{m}{H-1}^2 \Pi_{m+1}^{M-1} + 2\sum_{m=0}^{M-1} \lr{m}{H-1} (\beta^2 + L_\mu^2 B_{H-1}) \Pi_{m+1}^{M-1} \notag \\
    &\leq \Pi_{0}^{M-1} + (1 + 2\Qmax^2)\lr{M-1}{H-1}^2 + 2\lr{M-1}{H-1} (\beta^2 + L_\mu^2 B_{H-1}) \notag \\
        &\qquad + (1 + 2\Qmax^2) \sum_{m=0}^{M-2} \lr{m}{H-1}^2 \Pi_{m+1}^{M-1} 
        + 2\sum_{m=0}^{M-2} \lr{m}{H-1} (\beta^2 + L_\mu^2 B_{H-1}) \Pi_{m+1}^{M-1} .
        \label{ineq:Mexpanded}
\end{align}
We bound the multiplicative terms $\Pi_{m}^{m'}$.
Assuming that $\lr{m}{H-1}$ is of the form $\lr{m}{H-1} = \frac{u}{v + m}$, for any $m \leq m'$, we have that
\begin{align*}
    \Pi_{m}^{m'} = &\prod_{k=m}^{m'} (1 -  \delta \lr{k}{H-1}) \leq \exp\{ - \delta \sum_{k=m}^{m'} \lr{k}{H-1}) \} \\
        \leq & \exp\{ -\delta\sum_{k=m}^{m'} \frac{u}{v + k} ) \} \leq  \exp\{ - \delta u \log \frac{m' + v }{ m + v - 1}  \} \\
        \leq &\left( \frac{m + v - 1  }{ m' + v } \right)^{\delta u} 
\end{align*}
using Lemma~\ref{lemma:harmonic}.
Taking the values $u=v=2\delta^{-1}$, this reduces to
\begin{align*}
    \Pi_{m}^{m'} \leq \left( \frac{m + u - 1  }{ m' + u } \right)^2,
\end{align*}
Placing this in Inequality~\ref{ineq:Mexpanded} for the two terms appearing $\Pi_{m+1}^{M-1}, \Pi_{0}^{M-1}$, we obtain
\begin{align*}
     &\Exop[ \| \Qpi{H-1} - \Qest{M}{H-1} \|^2_{p_{H+1}}  ]  \\
    &\leq \left( \frac{u - 1  }{ M + u - 1 } \right)^2 + (1 + 2\Qmax^2)\left(\frac{u}{M+u-1}\right)^2 + 2\left(\frac{u}{M+u-1}\right) (\beta^2 + L_\mu^2 B_{H-1})  \\
        &\qquad + (1 + 2\Qmax^2)\sum_{m=0}^{M-2} \left( \frac{u}{m+u}\right)^2 \left( \frac{m + u }{ M + u - 1} \right)^2 \\
        &\qquad + 2(\beta^2 + L_\mu^2 B_{H-1}) \sum_{m=0}^{M-2} \left( \frac{u}{m+v}\right) \left( \frac{m +u }{ M + u - 1} \right)^2 \\
    &\leq \frac{C_1 u}{(M + u - 1)} + C_2 u (\beta^2 + L_\mu^2 B_{H-1}),
\end{align*}
for some absolute constants $C_1, C_2$.
This inequality concludes the convergence result for the $Q$ values at time step $H-1$, showing a rate of convergence $\cO(\sfrac{1}{M})$ over $M$ epochs up to a bias of $\cO(\beta^2 + \alpha^2 + \sfrac{1}{N})$.

\textbf{Step 2. }
Next, we analyze the case $h < H-1$.
Again under the observation that if $s^1_{m,h}=s, a^1_{m,h}=a, s^1_{m,h+1}=s', a^1_{m,h+1}=a'$, then it holds almost surely that
\begin{align*}
    &\Qpi{h}(s,a) - \Qest{m+1}{h} (s,a) \\
    &=\Qpi{h}(s,a) - \Qest{m}{h}(s,a) - \lr{m}{h} ( \Qest{m}{h+1}(s',a') + r_{m,h}^1 + \tau\entropy( \pi_h(\cdot|s)) - \Qest{m}{h}(s,a) )  \\
    &= (1 - \lr{m}{h}  ) (\Qpi{h}(s,a) - \Qest{m}{h}(s,a)) - \lr{m}{h} ( \Qest{m}{h+1}(s',a')  + r_{m,h}^1 + \tau\entropy( \pi_h(\cdot|s)) - \Qpi{h}(s,a) ), \\
    &= (1 - \lr{m}{h}  ) (\Qpi{h}(s,a) - \Qest{m}{h}(s,a)) - \lr{m}{h} ( \Qest{m}{h+1}(s',a')  + r_{m,h}^1  - \qpi{h}(s,a) ), 
\end{align*}
since again as the entropy terms $\entropy(\pi_h (\cdot|s))$ cancel.
Defining the induced $\sigma$-algebra
\begin{align*}
    \cF^m_h := \cF \{ \{s_{m',h}, a_{m',h}\}_{m'<m}, s_{m,h}=s, a_{m,h}=a\}.
\end{align*}
Note that with respect to $\cF^m_h$, $\Qest{m}{h'}$ is measurable for any $h'$ as $\Qest{m}{h'}$ only depends on episodes previous.
we have the upper bound:
\begin{align}
    &\Exop[ ( \Qpi{h}(s, a) - \Qest{m+1}{h} (s,a) )^2 | \cF^m_h ] \notag \\
    &= (1 - \lr{m}{h}  )^2 \Exop[ (\Qpi{h}(s,a) - \Qest{m}{h}(s,a))^2 | \cF^m_h ] \notag \\
        & \qquad + (\lr{m}{h})^2 \Exop[ ( \Qest{m}{h+1}(s',a')  + r_{m,h}^1 - \qpi{h}(s,a) )^2| \cF^m_h ] \notag \\
        &\qquad - 2 \lr{m}{h}  (\Qpi{h}(s,a) - \Qest{m}{h}(s,a))  \Exop[ \underbrace{\Qest{m}{h+1}(s^1_{m,h+1}, a^1_{m,h+1})  + r_{m,h}^1 - \qpi{h}(s,a) }_{\triangle} | \cF^m_h ]  \notag \\
    &= (1 - \lr{m}{h}  )^2 (\Qpi{h}(s,a) - \Qest{m}{h}(s,a))^2  + \lr{m}{h}^2 \Qmax^2 \notag \\
        &\qquad - 2 \lr{m}{h} (\Qpi{h}(s,a) - \Qest{m}{h}(s,a)) \Exop[ \triangle | \cF^m_h ] , \notag \\
    &\leq (1 - 2\lr{m}{h}  ) (\Qpi{h}(s,a) - \Qest{m}{h}(s,a))^2  + 2\lr{m}{h}^2 \Qmax^2 \notag \\
        &\qquad + \lr{m}{h} (\Qpi{h}(s,a) - \Qest{m}{h}(s,a))^2 +  \lr{m}{h}\Exop[ \triangle | \cF^m_h ]^2, \notag \\
    &\leq (1 - \lr{m}{h}  ) (\Qpi{h}(s,a) - \Qest{m}{h}(s,a))^2  + 2\lr{m}{h}^2 \Qmax^2 +  \lr{m}{h}\Exop[ \triangle | \cF^m_h ]^2, \label{ineq:qpimh_inter}
\end{align}
as $\Qest{m}{h}$ is $\cF_h^m$ measurable, using Young's inequality in the last line.
The last bias term due to bootstrapping and finite population bias we bound separately.
We decompose $(\triangle)$ as follows using Young's inequality.
\begin{align*}
\Exop[ \triangle | \cF^m_h ]^2 \leq &\left(1+\frac{1}{(H-h+1)^2}\right) \Exop[ \Qest{m}{h+1}(s',a') - \Qpi{h+1}(s',a') \, | \cF^m_h ]^2  \\
        & + 2 (H-h+1)^2 \big| \Exop[ \Qpi{h+1}(s',a')  \, | \cF^m_h ] - \sum_{\widebar{s},\widebar{a}}P(\widebar{s}, \widebar{a}, \mu_h) \Qpi{h+1}(\widebar{s}, \widebar{a}) \big|^2 \\
        & + 2 (H-h+1)^2 | \Exop[ r_{m,h}^1 - R(s,a,\mu_h)  \, | \cF^m_h ] |^2
\end{align*}
The three terms are upper-bounded by the inequalities in expectation:
\begin{align*}
     &| \Exop[ \Qest{m}{h+1}(s^1_{m,h+1}, a^1_{m,h+1}) - \Qpi{h+1}(s^1_{m,h+1}, a^1_{m,h+1}) \, | \cF^m_h ] |^2  \\
        &\qquad \leq  \Exop[| \Qest{m}{h+1}(s^1_{m,h+1}, a^1_{m,h+1}) - \Qpi{h+1}(s^1_{m,h+1}, a^1_{m,h+1})|^2 | \cF^m_h  ] \\
        &\qquad \leq  \| \Qest{m}{h+1} - \Qpi{h+1} \|_{p_{h+1}(\cdot|s,a)}^2  \\
     &  \bigg|\Exop[\Qpi{h+1}(s^1_{m,h+1}, a^1_{m,h+1})  \, | \cF^m_h ] -  \sum_{\widebar{s},\widebar{a}}P(\widebar{s}, \widebar{a}|s,a ,\mu_h) \Qpi{h+1}(\widebar{s}, \widebar{a}) \bigg|^2  \\
        & \qquad \leq \frac{\Qmax^2}{4} \Exop[ 2\alpha^2 + 2K_\mu ^2 \| \mu_h - \empc{\vecrho^{-1}_{m,h}} \|_1^2  | \cF^m_h ]  \\
    &  \bigg| \Exop[ r_{m,h}^1 - R(s,a,\mu)  \, | \cF^m_h ] \bigg|^2  \\
        & \qquad = 2| \Exop[  R^1(s,a,\vecrho^{-1}_{m,h}) - R(s,a,\empc{\vecrho^{-1}_{m,h}}) \, | \cF^m_h ] |^2 + 2| \Exop[  R(s,a,\empc{\vecrho^{-1}_{m,h}}) - R(s,a,\mu_h)  \, | \cF^m_h ] |^2 \\
        &\qquad \leq 2\beta^2 + 2L_\mu^2 \Exop[ \| \mu_h - \empc{\vecrho^{-1}_{m,h}} \|_1^2 | \cF^m_h] 
\end{align*}
Therefore, we conclude by an application of Young's inequality that almost surely,
\begin{align*}
    \Exop[ \triangle | \cF^m_h ]^2 \leq &\left(1+\frac{1}{(H-h+1)^2}\right) \| \Qest{m}{h+1} - \Qpi{h+1} \|_{p_{h+1}(\cdot|s,a)}^2  \\
        & + (H-h+1)^2 \Qmax^2 [\alpha^2 + \beta^2 + (2K_\mu^2 + 4L_\mu^2) \Exop[ \| \mu_h - \empc{\vecrho^{-1}_{m,h}} \|_1^2 | \cF^m_h]  ].
\end{align*}

We place this result in Inequality~\ref{ineq:qpimh_inter} to obtain
\begin{align*}
    \Exop[ &( \Qpi{h}(s, a) - \Qest{m+1}{h} (s,a) )^2 | \cF^m_h ] \\
        \leq &(1 - \lr{m}{h}  ) (\Qpi{h}(s,a) - \Qest{m}{h}(s,a))^2  + 2\lr{m}{h}^2 \Qmax^2 \\
        & +  \lr{m}{h} \left(1+\frac{1}{(H-h+1)^2}\right) \| \Qest{m}{h+1} - \Qpi{h+1} \|_{p_{h+1}(\cdot|s,a)}^2  \\
        & + \lr{m}{h} (H-h+1)^2 \Qmax^2 [\alpha^2 + \beta^2 + (2K_\mu^2 + 4L_\mu^2) \Exop[ \| \mu_h - \empc{\vecrho^{-1}_{m,h}} \|_1^2 | s_{m,h}=s, a_{m,h}=a]  ],
\end{align*}
as $ \| \mu_h - \empc{\vecrho^{-1}_{m,h}} \|_1^2 $ is independent of the estimates.
Then, taking expectations on both sides for any $p_h(s,a) > 0$, and noting that $p_h(s,a) \geq \delta$,
\begin{align*}
    \Exop&\left[ \left( \Qpi{h}(s,a) - \Qest{m+1}{h} (s,a) \right)^2  \right] \notag \\
        \leq & (1 - \delta \lr{m}{h}  ) \Exop[\Qpi{h}(s,a) - \Qest{m}{h}(s,a))^2 ] + 2\lr{m}{h}^2 \Qmax^2 \\
            & + \lr{m}{h} \left(1+\frac{1}{(H-h+1)^2}\right) \Exop [ \| \Qest{m}{h+1} - \Qpi{h+1} \|_{p_{h+1}}^2 ] \\
            & + \lr{m}{h} (H-h+1)^2 \Qmax^2 (\alpha^2 + \beta^2  ) \\
            & + \lr{m}{h} (H-h+1)^2 \Qmax^2 (2K_\mu^2 + 4L_\mu^2)\Exop[ \| \mu_h - \empc{\vecrho^{-1}_{m,h}} \|_1^2 | s_{m,h}=s, a_{m,h}=a ]
\end{align*}
Finally, taking a weighted sum of both sides of the inequality over $s,a$ with weights $p_{h}$,
\begin{align*}
    \Exop\left[ \| \Qpi{h} - \Qest{m+1}{h} \|_{p_{h}}^2  \right] \leq & (1 - \delta \lr{m}{h}  ) \Exop[\| \Qpi{h} - \Qest{m+1}{h} \|_{p_{h}}^2 ] + 2\lr{m}{h}^2 \Qmax^2 \\
            & + \lr{m}{h} \left(1+\frac{1}{(H-h+1)^2}\right) \Exop [ \| \Qest{m}{h+1} - \Qpi{h+1} \|_{p_{h+1}}^2 ] \\
            & + \lr{m}{h} (H-h+1)^2 \Qmax^2 [\alpha^2 + \beta^2 + (2K_\mu^2 + 4L_\mu^2) B_h  ]
\end{align*}
Expanding this recursive inequality and using the same notation as in Step 1 for the multiplicative terms, also taking the inductive assumption that $\Exop [ \| \Qest{m}{h+1} - \Qpi{h+1} \|_{p_{h+1}}^2 ] \leq \frac{G_1}{2\delta^{-1} + m - 1} + G_2$ 
for some $G_1,G_2$ which depends on problem parameters but not on $m$, we have the final inequality:
\begin{align*}
    \Exop\left[ \| \Qpi{h} - \Qest{m+1}{h} \|_{p_{h}}^2  \right] \leq & \Pi_{m=0}^{M-1} \Exop[\| \Qpi{h} - \Qest{0}{h} \|_{p_{h}}^2 ] + \frac{C_3 u}{M-2 + v} \\
            & + 2\sum_{m=0}^{M-2}(\lr{m}{h})^2 \Qmax^2 \Pi_{m+1}^{M-1} \\
            & + \sum_{m=0}^{M-2}\lr{m}{h}(H-h+1)^2 \Qmax^2 [\alpha^2 + \beta^2 + (2K_\mu^2 + 4L_\mu^2) B_h  ] \Pi_{m+1}^{M-1} \\
            & + \sum_{m=0}^{M-2} \lr{m}{h} \left(1+\frac{1}{(H-h+1)^2}\right) \left(\frac{G_1}{2\delta^{-1} + m - 1} + G_2\right) \Pi_{m+1}^{M-1}
\end{align*}
Once again as in Step 1, fixing the values $u=v=2\delta^{-1}$,
\begin{align*}
    \Exop\left[ \| \Qpi{h} - \Qest{m+1}{h} \|_{p_{h}}^2  \right] \leq & \Qmax^2 \left( \frac{ u - 1  }{ M + u - 1 } \right)^2 + \frac{C_3 u}{M-2 + v} + \sum_{m=0}^{M-2}\Qmax^2 \frac{ 8\delta^{-2} }{ (M + u - 1)^2}  \\
            & + \sum_{m=0}^{M-2} (H-h+1)^2 \Qmax^2 [\alpha^2 + \beta^2 + (2K_\mu^2 + 4L_\mu^2) B_h  ]  \frac{ 2(m +u) \delta^{-1} }{ (M + u - 1)^2} \\
            & + \sum_{m=0}^{M-2} \left(1+\frac{1}{(H-h+1)^2}\right) \left(\frac{G_1}{2\delta^{-1} + m - 1} + G_2\right)  \frac{2(m +u) \delta^{-1} }{ (M + u - 1)^2 } 
\end{align*}
Using the fact that $\sum_{m=1}^M (m+u) \approx M^2 $ and $\sum_{m=1}^M c = cM$, for some absolute constants we have that
\begin{align}
    \Exop\left[ \| \Qpi{h} - \Qest{m+1}{h} \|_{p_{h}}^2  \right] \leq &  \left( 2\frac{ \Qmax^2 \delta^{-1}  }{ M + 2\delta^{-1} - 1 } \right)^2 + \frac{C_3 \delta^{-1}}{M - 2 + 2\delta^{-1}} + \frac{  C_4 \Qmax^2 \delta^{-2} }{ M + \delta^{-1} - 1} \notag \\
            & + C_5 (H-h+1)^2 \Qmax^2 [\alpha^2 + \beta^2 + (2K_\mu^2 + 4L_\mu^2) B_h  ]  \delta^{-1} \notag \\
            & + \left(1+\frac{1}{(H-h+1)^2}\right) \left(\frac{C_6 \delta^{-1} G_1}{2\delta^{-1} + m - 1} + C_7 G_2 \delta^{-1} \right) \label{ineq:inductive_result}
\end{align}

\textbf{Step 3. } 
Finally, we conclude with the proof using Steps 1 and 2.
By using Inequality~\ref{ineq:inductive_result}, it readily follows that 
\begin{align*}
    \Exop\left[ \| \Qpi{h} - \Qest{m+1}{h} \|_{p_{h}}^2 \right] = \cO\left(\frac{1}{M} + \alpha^2 + \beta^2 + \frac{1}{N}\right),
\end{align*}
for all $h$, as the bound in Step 1 established the rate for time step $H-1$.
We comment on the constants: Inequality~\ref{ineq:inductive_result} shows that in the worst case, there might be an exponential dependence on $H$, which might be fundamental.

\section{Extended Results on Monotonicity and Learning NE}

\subsection{Example: Asymmetric Congestion Games}\label{sec:appendix_monotone_congestion}

Note that by symmetry in arguments, it follows that $\symmetrization{R^i(s,a,\cdot)} = R^i(s,a,\cdot)$ for any $s,a$, as
\begin{align*}
    \symmetrization{R^i(s,a,\cdot)}(\vecrho) = &\frac{1}{(N-1)!} \sum_{f\in\Sym_{N-1}} R^i(s,a,g(\vecrho)) \\
    = &\frac{1}{(N-1)!} \sum_{f\in\Sym_{N-1}} \left( h_i(s,a,\sum_{j=1}^N \ind{g(\vecrho)_j = (s,a)}) + r_i(s,a) \right)\\
    = & h_i\left(s,a,\sum_{j=1}^N \ind{\rho_j = (s,a)} \right) + r_i(s,a) \\
    = &R^i(s,a,\vecrho)
\end{align*}
By simple computation, the population lifted rewards $\symmbar{R^i(s,a\cdot)}$ are given by
\begin{align*}
    \symmbar{R^i(s,a\cdot)}(\mu) = h_i(s,a,N\mu(s,a)) + r_i(s,a), \quad \forall \mu\in \Delta_{\setS\times\setA, N-1}.
\end{align*}
We provide an extension to the continuum $\Delta_{\setS\times\setA}$ via linear interpolation in this case, while many other alternatives are possible.
Take the function $\widetilde{h}_i:\setS\times\setA\times[0,1] \rightarrow [0,1]$ such that
\begin{align*}
    \widetilde{h}_i(s,a,u) := (Nu - \lfloor N u \rfloor) h_i(s,a,\lfloor Nu \rfloor) + (\lceil Nu \rceil - Nu) h_i(s,a,\lceil Nu \rceil ). 
\end{align*}
The function is clearly monotonically decreasing in $u$.
Furthermore, it is also Lipschitz continuous in $u$, as for any $u_1 > u_2$,
\begin{align*}
    \widetilde{h}_i(s,a,u_1) - \widetilde{h}_i(s,a,u_2) \leq |u_1 - u_2|.
\end{align*}

Finally, the asymmetry due to rewards can be upper bounded by
\begin{align*}
    \beta \leq \sup_{s,a} \sup_{i,j} \sup_{k\in[N]} |h_i(s,a, k) - h_j(s,a, k)|.
\end{align*}

\subsection{Preliminaries for Learning Regularized Monotone MFG}

We present several results required to establish convergence under monotonicity.
We define the (entropy regularized) MFG value functions for an arbitrary population flow $\vecmu \in \Delta_{\setS\times\setA}$ and policy $\pi\in\Pi$ as
\begin{align*}
    \Vfin^\tau_h  \left( s | \vecmu, \pi \right) & := \Exop \left[ \sum_{h'=h}^{H-1} R(s_{h'}, a_{h'}, \mu_{h'}) + \tau  \entropy (\pi_{h'}(\cdot|s_{h'})) \middle| \substack{s_0 = s , \quad a_{h'} \sim \pi_{h'}(s_{h'})\\ s_{h'+1} \sim P(s_{h'}, a_{h'}, \mu_{h'})} \right] \\
    Q^\tau_h (s,a |\vecmu, \pi) &:= \Exop \left[ \sum_{h'=h}^{H-1} R(s_{h'}, a_{h'}, \mu_{h'}) + \tau \entropy( \pi_{h'}(\cdot|s_{h'})) \middle| \substack{s_h = s, \, a_h = a , \, s_{h'+1} \sim P(s_{h'}, a_{h'}, \mu_{h'}), \\   a_{h'} \sim \pi_{h'+1}(s_{h'+1}), \forall h' \geq h} \right].
\end{align*}
We define the regularized value function of the game similarly:
\begin{align*}
    \Vfin^\tau \left( \vecmu, \pi \right) & := \Exop \left[ \sum_{h=0}^{H-1} R(s_h, a_h, \mu_h) + \tau \entropy( \pi_{h}(\cdot|s_{h}))\middle| \substack{s_0 \sim \initpop , \quad a_h \sim \pi_h(s_h)\\ s_{h+1} \sim P(s_h, a_h, \mu_h)} \right].
\end{align*}
As expected, with these definitions it holds that
\begin{align*}
    \Vfin^\tau_h  \left( s | \vecmu, \pi \right) &= \sum_{a\in\setA} \pi_h(a|s) Q^\tau_h(s,a|\vecmu, \pi), \\
    \Vfin^\tau \left( \vecmu, \pi \right) &= \sum_{s\in\setS} \initpop(s) \Vfin^\tau_0  \left( s | \vecmu, \pi \right).
\end{align*}
We also define the quantity
\begin{align*}
    q^\tau_{h}(s,a|\vecmu,\pi) := Q^\tau_h (s,a |\vecmu, \pi) - \tau \entropy(\pi_h(\cdot|s)),
\end{align*}
which corresponds to the more standard entropy regularized value function.
Firstly, we provide several useful lemmas and definitions.

\begin{definition}[Entropy regularized MFG-NE]\label{def:entropy_reg_mfg_ne}
    For a given MFG $(\setS, \setA, \initpop, H, P, R)$, a policy $\pi_\tau^*\in\Pi$ is called the $\tau$-entropy regularized MFG-NE if it holds that
    \begin{align}
        \max_{\pi' \in \Pi} \Vfin^\tau( \lpop (\pi_\tau^*), \pi') - \Vfin^\tau ( \lpop ( \pi_\tau^*), \pi_\tau^* ). \tag{Regularized MFG-NE}
        \label{eq:def_tau_mfg_ne}
    \end{align}
\end{definition}

While entropy regularization will enable the convergence of our algorithm, it will also introduce a bias in terms of the original (unregularized) MFG.
The next lemma quantifies this bias.

\begin{lemma}[Regularization bias]\label{lemma:reg_bias}    Let $\vecmu \in \Delta_{\setS\times\setA}$ and policy $\pi\in\Pi$ be arbitrary.
Then, it holds that
\begin{align*}
    |\Vfin^\tau \left( \vecmu, \pi \right) - \Vfin(\vecmu, \pi)| \leq \tau H \log |\setA|.
\end{align*}
Furthermore, if $\pi^*_\tau$ is a $\tau$-entropy regularized MFG-NE, then it is a $2\tau H \log |\setA|$-MFG-NE, that is,
\begin{align*}
    \Expfin (\pi^*_\tau) \leq 2 \tau H \log |\setA|.
\end{align*}
\end{lemma}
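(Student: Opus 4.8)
The plan is to establish the two claims in sequence, deriving the second from the first. For the first claim, I would observe that $\Vfin^\tau(\vecmu,\pi)$ and $\Vfin(\vecmu,\pi)$ are expectations over the \emph{same} trajectory law — the one generated by $s_0\sim\initpop$, $a_h\sim\pi_h(s_h)$, $s_{h+1}\sim P(s_h,a_h,\mu_h)$, which does not depend on $\tau$. Hence the reward terms $R(s_h,a_h,\mu_h)$ cancel identically and the difference collapses to the expected cumulative entropy bonus:
\begin{align*}
\Vfin^\tau(\vecmu,\pi) - \Vfin(\vecmu,\pi) = \tau\, \Exop\left[\sum_{h=0}^{H-1} \entropy(\pi_h(\cdot|s_h))\right].
\end{align*}
It then suffices to use the pointwise bound $0 \leq \entropy(u) \leq \log|\setA|$ for every $u\in\Delta_\setA$ (the upper bound attained at the uniform law), so each summand lies in $[0,\tau\log|\setA|]$; summing $H$ terms and taking expectations confines the difference to $[0,\tau H\log|\setA|]$, which gives the claimed bound.

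For the second claim, I would abbreviate $\mu^* := \lpop(\pi^*_\tau)$ and use that the regularized-NE property of Definition~\ref{def:entropy_reg_mfg_ne} yields, for every competitor $\pi'\in\Pi$, the inequality $\Vfin^\tau(\mu^*,\pi') \leq \Vfin^\tau(\mu^*,\pi^*_\tau)$. The strategy is a three-term telescoping decomposition of the unregularized advantage:
\begin{align*}
\Vfin(\mu^*,\pi') - \Vfin(\mu^*,\pi^*_\tau)
&= \big[\Vfin(\mu^*,\pi') - \Vfin^\tau(\mu^*,\pi')\big] \\
&\quad + \big[\Vfin^\tau(\mu^*,\pi') - \Vfin^\tau(\mu^*,\pi^*_\tau)\big] \\
&\quad + \big[\Vfin^\tau(\mu^*,\pi^*_\tau) - \Vfin(\mu^*,\pi^*_\tau)\big].
\end{align*}
The middle bracket is nonpositive by the regularized-NE property, and the first and third brackets are each controlled in absolute value by $\tau H\log|\setA|$ via the first claim. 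Summing and then taking the maximum over $\pi'$ gives $\Expfin(\pi^*_\tau) \leq 2\tau H\log|\setA|$.

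I do not anticipate a genuine obstacle: the argument is a short sandwiching computation, and the only point requiring care is tracking the direction of each inequality. In particular, the entropy bonus makes $\Vfin^\tau \geq \Vfin$ everywhere, so the first bracket is in fact nonpositive; a sharper accounting that exploits this sign would yield the single-factor bound $\tau H\log|\setA|$, but the factor of two stated in the lemma follows at once from the cruder symmetric absolute-value estimate and is entirely sufficient for the downstream convergence analysis in Theorem~\ref{theorem:pmd}.
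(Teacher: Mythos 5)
Your proof is correct and follows essentially the same route as the paper: the first claim is obtained by cancelling the reward terms and bounding the expected entropy bonus by $\tau H\log|\setA|$, and the second by telescoping through $\Vfin^\tau$ and invoking the regularized-NE property for the middle term, exactly as the paper does. Your side observation that the sign of the entropy bonus actually yields the sharper single-factor bound $\tau H\log|\setA|$ is also correct, though the paper settles for the factor of two.
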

\begin{proof}
    \begin{align*}
        |\Vfin^\tau \left( \vecmu, \pi \right) - \Vfin(\vecmu, \pi)| = &\left| \Exop \left[ \sum_{h=0}^{H-1} \tau \entropy(\pi(\cdot|s_h)) \middle| \substack{s_0 \sim \initpop , \quad a_h \sim \pi_h(s_h)\\ s_{h+1} \sim P(s_h, a_h, \mu_h)} \right] \right| \\
        \leq & \Exop \left[ \sum_{h=0}^{H-1} \tau\left| \entropy(\pi(\cdot|s_h)) \right| \middle| \substack{s_0 \sim \initpop , \quad a_h \sim \pi_h(s_h)\\ s_{h+1} \sim P(s_h, a_h, \mu_h)} \right] 
        \leq & H \tau \log |\setA|,
    \end{align*}
    since entropy is upper bounded by $|\entropy(\pi(\cdot|s_h))| \leq \log |\setA|$.
    The bound for exploitability follows from:
    \begin{align*}
        \Expfin (\pi^*_\tau) = &\max_{\pi' \in \Pi} \Vfin( \lpop (\pi^*_\tau), \pi') - \Vfin ( \lpop ( \pi^*_\tau), \pi^*_\tau ) \\
        = &\max_{\pi' \in \Pi} \Vfin( \lpop (\pi^*_\tau), \pi') - \Vfin^\tau( \lpop (\pi^*_\tau), \pi') + \Vfin^\tau( \lpop (\pi^*_\tau), \pi') - \Vfin ( \lpop ( \pi^*_\tau), \pi^*_\tau ) \\
        \leq & \tau H \log|\setA| + \max_{\pi' \in \Pi} \Vfin^\tau( \lpop (\pi^*_\tau), \pi') - \Vfin ( \lpop ( \pi^*_\tau), \pi^*_\tau ) \\
        \leq & \tau H \log|\setA| + \max_{\pi' \in \Pi} \Vfin^\tau( \lpop (\pi^*_\tau), \pi') - \Vfin^\tau( \lpop (\pi^*_\tau), \pi^*_\tau) + \Vfin^\tau( \lpop (\pi^*_\tau), \pi^*_\tau) - \Vfin ( \lpop ( \pi^*_\tau), \pi^*_\tau ) \\
        \leq & 2\tau H \log|\setA| + \max_{\pi' \in \Pi} \Vfin^\tau( \lpop (\pi^*_\tau), \pi^*_\tau) - \Vfin^\tau ( \lpop ( \pi^*_\tau), \pi^*_\tau ) \\
        = & 2\tau H \log|\setA|.
    \end{align*}
\end{proof}

We note that in our setting, a monotone MFG exhibits a unique MFG-NE, in fact, a unique regularized MFG-NE for any value of $\tau$ \cite{zhang2023learning}.
The above lemma shows that the bias introduced due to entropy regularization is of the order $\mathcal{O}(\tau)$ as expected.

Finally, we state several standard facts about monotone MFG, adapted from various works \cite{zhang2023learning, perrin2020fictitious, perolat2022scaling}.

\begin{lemma}[Monotone improvement lemma]\label{ref:monotone_improvement_lemma}
    Let $\vecmu, \widetilde{\vecmu} \in \Delta_{\setS\times\setA} \in \Delta_{\setS,\setA}^H$ which are induced by policies $\pi, \widetilde{\pi} \in \Pi$ respectively, that is $\lpop(\widetilde{\pi}) = \widetilde{\vecmu}$ and $\lpop({\pi}) = {\vecmu}$.
    If the MFG is monotone, that is, if Definition~\ref{def:monotone_mfg} is satisfied, then it holds that:
    \begin{align*}
        \Vfin^\tau\left( \vecmu, \pi \right)+\Vfin^\tau \left(\widetilde{\vecmu}, \widetilde{\pi} \right)-\Vfin^\tau\left(\vecmu, \widetilde{\pi}\right)-\Vfin^\tau\left( \widetilde{\vecmu}, \pi \right) \leq 0.
    \end{align*}
\end{lemma}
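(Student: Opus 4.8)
The plan is to exploit the defining feature of a monotone MFG from Definition~\ref{def:monotone_mfg}, namely that the transition kernel $P$ does \emph{not} depend on the population argument $\mu$. First I would observe that, because $P(\cdot|s,a,\mu)$ is $\mu$-independent, the law of the trajectory $(s_h,a_h)_h$ generated inside $\Vfin^\tau(\vecmu,\pi)$ is determined \emph{solely} by the policy $\pi$ and not by the population $\vecmu$ appearing in the reward. Consequently the state-action marginal at step $h$ equals $\lpop(\pi)_h$ exactly, so for $\lpop(\pi)=\vecmu$ this marginal is $\mu_h$, and crucially $\Vfin^\tau(\widetilde{\vecmu},\pi)$ uses the \emph{same} marginal $\mu_h$ while evaluating the reward at $\widetilde{\mu}_h$.

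With this in hand, I would split each regularized value into a reward part and an entropy part,
\begin{align*}
\Vfin^\tau(\vecnu,\pi)=\sum_{h=0}^{H-1}\sum_{s,a}\lpop(\pi)_h(s,a)\,R(s,a,\nu_h)+\mathcal{E}(\pi),
\end{align*}
where $\mathcal{E}(\pi):=\tau\sum_{h=0}^{H-1}\sum_s d_h^\pi(s)\,\entropy(\pi_h(\cdot|s))$ collects the entropy contribution and $d_h^\pi$ is the state marginal under $\pi$. The key point is that both $\lpop(\pi)_h$ and $\mathcal{E}(\pi)$ depend only on $\pi$, not on the population $\vecnu$ fed into the reward. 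Substituting this decomposition into the four-term expression, the entropy pieces cancel in pairs: $\Vfin^\tau(\vecmu,\pi)$ and $-\Vfin^\tau(\widetilde{\vecmu},\pi)$ each carry $\pm\mathcal{E}(\pi)$, while $\Vfin^\tau(\widetilde{\vecmu},\widetilde{\pi})$ and $-\Vfin^\tau(\vecmu,\widetilde{\pi})$ each carry $\pm\mathcal{E}(\widetilde{\pi})$.

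Using $\lpop(\pi)_h=\mu_h$ and $\lpop(\widetilde{\pi})_h=\widetilde{\mu}_h$, the surviving reward terms collapse, after expanding the product $(\mu_h-\widetilde{\mu}_h)(R(\cdot,\mu_h)-R(\cdot,\widetilde{\mu}_h))$, into
\begin{align*}
\sum_{h=0}^{H-1}\sum_{s,a}\big(\mu_h(s,a)-\widetilde{\mu}_h(s,a)\big)\big(R(s,a,\mu_h)-R(s,a,\widetilde{\mu}_h)\big).
\end{align*}
Finally I would invoke the monotonicity inequality of Definition~\ref{def:monotone_mfg} once for each $h$, applied to the pair $\mu_h,\widetilde{\mu}_h$, which shows every summand is $\leq 0$; hence the whole sum is $\leq 0$, completing the proof.

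The only delicate point, and the step I would be most careful about, is the first one: justifying that the visitation distribution in $\Vfin^\tau(\widetilde{\vecmu},\pi)$ coincides with that in $\Vfin^\tau(\vecmu,\pi)$. This is precisely where $\mu$-independence of $P$ is indispensable. Were $P$ to depend on the population, the trajectory law would shift with the evaluating population, the entropy terms would no longer cancel, and the reward terms would not telescope into the clean bilinear form to which monotonicity applies. Everything after that observation is routine bookkeeping.
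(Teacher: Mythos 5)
Your proof is correct and follows essentially the same route as the paper's: both exploit the $\mu$-independence of $P$ to (i) cancel the entropy terms pairwise, (ii) identify the visitation marginals with $\lpop(\pi)_h=\mu_h$ and $\lpop(\widetilde{\pi})_h=\widetilde{\mu}_h$ so the four reward terms collapse into the bilinear form $\sum_h\sum_{s,a}(\mu_h-\widetilde{\mu}_h)(R(\cdot,\mu_h)-R(\cdot,\widetilde{\mu}_h))$, and (iii) apply the monotonicity condition termwise in $h$. The "delicate point" you flag is exactly the step the paper also relies on, so no gap remains.
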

\begin{proof}
Let $\vecmu = \{ \mu_h \}_{h=0}^{H-1}, \widetilde{\vecmu} = \{ \widetilde{\mu}_h \}_{h=0}^{H-1}$.
The proof follows \cite{zhang2023learning}, except for the absence of a graphon.
For monotone MFG, due to the assumption that $P$ does not depend on $\mu$, it holds that
\begin{align*}
    \Vfin^\tau\left( \vecmu, \pi \right) - \Vfin^\tau\left( \widetilde{\vecmu}, \pi \right) = \Vfin\left( \vecmu, \pi \right) - \Vfin\left( \widetilde{\vecmu}, \pi \right) , \\
    \Vfin^\tau\left( \widetilde{\vecmu}, \widetilde{\pi} \right) - \Vfin^\tau\left( \vecmu, \widetilde{\pi} \right) = \Vfin\left( \widetilde{\vecmu}, \widetilde{\pi} \right) - \Vfin \left( \vecmu, \widetilde{\pi} \right) . 
\end{align*}
Furthermore, it holds that
\begin{align*}
    \Vfin\left( \widetilde{\vecmu}, \widetilde{\pi} \right) - \Vfin \left( \vecmu, \widetilde{\pi} \right) = \sum_h\sum_{s,a}\widetilde{\mu}_h(s,a) (R(s,a,\widetilde{\mu}_h) - R(s,a,\mu_h)), \\
    \Vfin\left( {\vecmu}, {\pi} \right) - \Vfin \left( \widetilde{\vecmu}, {\pi} \right) = \sum_h \sum_{s,a}\mu_h(s,a) (R(s,a,\mu_h) - R(s,a,\widetilde{\mu}_h)). 
\end{align*}
Then, using the monotonicity assumption on the rewards, it holds that
\begin{align*}
    &\Vfin^\tau\left( \vecmu, \pi \right)+\Vfin^\tau \left(\widetilde{\vecmu}, \widetilde{\pi} \right)-\Vfin^\tau\left(\vecmu, \widetilde{\pi}\right)-\Vfin^\tau\left( \widetilde{\vecmu}, \pi \right) \\
    &= \Vfin\left( \vecmu, \pi \right)+\Vfin \left(\widetilde{\vecmu}, \widetilde{\pi} \right)-\Vfin\left(\vecmu, \widetilde{\pi}\right)-\Vfin\left( \widetilde{\vecmu}, \pi \right) \\
    &= \sum_h\sum_{s,a}\widetilde{\mu}_h(s,a) (R(s,a,\widetilde{\mu}_h) - R(s,a,\mu_h)) + \sum_h \sum_{s,a}\mu_h(s,a) (R(s,a,\mu_h) - R(s,a,\widetilde{\mu}_h)) \\
    & = \sum_h \sum_{s,a}(\widetilde{\mu}_h(s,a) - \mu_h(s,a) )(R(s,a,\widetilde{\mu}_h) - R(s,a,\mu_h)) \leq 0.
\end{align*}
\end{proof}

The next lemma is simply an adaptation of the standard MFG performance difference lemma in single-agent RL to the MFG setting.

\begin{lemma}[Performance difference lemma]\label{lemma:perf_diff}
    For an arbitrary MFG, let $\pi, \widetilde{\pi}\in\Pi$ and $\vecmu = \lpop(\pi)$.
    \begin{align*}
        & V_0^{\tau}\left(s | \vecmu, \widetilde{\pi}\right)-V_0^{\tau}\left(s| \vecmu, \pi \right)+\tau \mathbb{E}_{\widetilde{\pi}, \vecmu} \left[\sum_{h=0}^{H-1} \kl\left(\widetilde{\pi}_h\left(\cdot \mid s_h\right) | \pi_h\left(\cdot \mid s_h\right)\right) \middle| s_0=s\right] \\
        & \quad=\mathbb{E}_{\tilde{\pi}, \vecmu}\left[\sum_{h=0}^{H-1}\left\langle q_h^{\tau}\left(s_h, \cdot | \vecmu, \pi\right)-\tau \log \pi_h\left(\cdot \mid s_h\right), \widetilde{\pi}_h\left(\cdot \mid s_h\right)-\pi_h\left(\cdot \mid s_h\right)\right\rangle \mid s_0=s\right].
    \end{align*}
\end{lemma}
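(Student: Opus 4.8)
The plan is to reduce the claim to a standard entropy-regularized performance-difference identity for a time-inhomogeneous single-agent MDP. The crucial observation is that both value functions on the left are evaluated against the \emph{same} fixed population flow $\vecmu = \lpop(\pi)$, so the transition kernels $P(\cdot|\cdot,\cdot,\mu_h)$ and rewards $R(\cdot,\cdot,\mu_h)$ are identical for $\pi$ and $\widetilde{\pi}$. The mean-field coupling therefore plays no role, and the statement is purely about comparing two policies in a common (time-varying) MDP. First I would fix a trajectory $(s_0,a_0,s_1,a_1,\ldots)$ generated by $\widetilde{\pi}$ under $\vecmu$ (so $s_0=s$, $a_h\sim\widetilde{\pi}_h(\cdot|s_h)$, $s_{h+1}\sim P(\cdot|s_h,a_h,\mu_h)$) and telescope the value of $\pi$ along it, using the terminal convention $V_H^{\tau}\equiv 0$:
\begin{align*}
V_0^{\tau}(s|\vecmu,\pi) = -\,\Exop_{\widetilde{\pi},\vecmu}\Big[\textstyle\sum_{h=0}^{H-1}\big(V_{h+1}^{\tau}(s_{h+1}|\vecmu,\pi)-V_h^{\tau}(s_h|\vecmu,\pi)\big)\,\Big|\,s_0=s\Big].
\end{align*}
Writing $V_0^{\tau}(s|\vecmu,\widetilde{\pi})$ as the actual return-plus-entropy accumulated under $\widetilde{\pi}$ and subtracting, the difference becomes an expectation over the $\widetilde{\pi}$-trajectory of per-step terms $R(s_h,a_h,\mu_h)+\tau\entropy(\widetilde{\pi}_h(\cdot|s_h))+V_{h+1}^{\tau}(s_{h+1}|\vecmu,\pi)-V_h^{\tau}(s_h|\vecmu,\pi)$.

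Next I would conditionally integrate each step. Conditioning on $(s_h,a_h)$ and integrating over $s_{h+1}\sim P(\cdot|s_h,a_h,\mu_h)$, the Bellman relation $q_h^{\tau}(s,a|\vecmu,\pi)=R(s,a,\mu_h)+\Exop_{s'\sim P(\cdot|s,a,\mu_h)}[V_{h+1}^{\tau}(s'|\vecmu,\pi)]$ collapses the reward together with the continuation into a single $q_h^{\tau}(s_h,a_h|\vecmu,\pi)$; integrating over $a_h\sim\widetilde{\pi}_h(\cdot|s_h)$ then turns this into the inner product $\langle q_h^{\tau}(s_h,\cdot|\vecmu,\pi),\widetilde{\pi}_h(\cdot|s_h)\rangle$. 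Expanding the subtracted value through $V_h^{\tau}(s_h|\vecmu,\pi)=\langle q_h^{\tau}(s_h,\cdot|\vecmu,\pi),\pi_h(\cdot|s_h)\rangle+\tau\entropy(\pi_h(\cdot|s_h))$ leaves
\begin{align*}
V_0^{\tau}(s|\vecmu,\widetilde{\pi})-V_0^{\tau}(s|\vecmu,\pi)=\Exop_{\widetilde{\pi},\vecmu}\Big[\textstyle\sum_{h=0}^{H-1}\big(\langle q_h^{\tau}(s_h,\cdot|\vecmu,\pi),\widetilde{\pi}_h(\cdot|s_h)-\pi_h(\cdot|s_h)\rangle+\tau\entropy(\widetilde{\pi}_h(\cdot|s_h))-\tau\entropy(\pi_h(\cdot|s_h))\big)\,\Big|\,s_0=s\Big].
\end{align*}

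Finally I would invoke the elementary identity, valid for any $u,v\in\Delta_\setA$, that $\entropy(u)-\entropy(v)=-\langle\log v,\,u-v\rangle-\kl(u|v)$, which is checked by expanding all three quantities over $\setA$. Applying it with $u=\widetilde{\pi}_h(\cdot|s_h)$ and $v=\pi_h(\cdot|s_h)$ rewrites each entropy difference as $-\langle\tau\log\pi_h(\cdot|s_h),\widetilde{\pi}_h(\cdot|s_h)-\pi_h(\cdot|s_h)\rangle-\tau\kl(\widetilde{\pi}_h(\cdot|s_h)|\pi_h(\cdot|s_h))$. Absorbing the $-\tau\log\pi_h(\cdot|s_h)$ contribution into the existing inner product (producing the factor $q_h^{\tau}-\tau\log\pi_h$) and moving the KL term across to the left-hand side yields exactly the claimed equation. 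I expect the only genuine obstacle to be the entropy bookkeeping: the current-step entropy bonus lives in $Q_h^{\tau}$ but is stripped out of $q_h^{\tau}$, so one must keep straight that the trajectory is driven by $\widetilde{\pi}$ while the telescoped value belongs to $\pi$, and that the $\tau\entropy(\widetilde{\pi}_h)$ term comes from the return of $\widetilde{\pi}$ whereas the $\tau\entropy(\pi_h)$ term comes from expanding $V_h^{\tau}(\cdot|\vecmu,\pi)$. Once those $q^{\tau}/V^{\tau}$ relations are written out correctly, the entropy--KL identity closes the proof with no further computation.
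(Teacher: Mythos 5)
Your proof is correct: the telescoping of $V^{\tau}_h(\cdot|\vecmu,\pi)$ along $\widetilde{\pi}$-trajectories, the Bellman identity $q_h^{\tau}(s,a|\vecmu,\pi)=R(s,a,\mu_h)+\Exop_{s'\sim P(\cdot|s,a,\mu_h)}[V_{h+1}^{\tau}(s'|\vecmu,\pi)]$, and the identity $\entropy(u)-\entropy(v)=-\langle\log v,\,u-v\rangle-\kl(u|v)$ combine exactly as you describe, and your observation that both values are evaluated under the same frozen flow $\vecmu$ is precisely why the mean-field coupling is irrelevant here. This is the same standard performance-difference argument the paper invokes (it simply cites \cite{mei2020global, zhang2023learning} rather than writing it out), so there is nothing to add.
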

\begin{proof}
    See the standard proof technique for the performance difference lemma, e.g. \cite{mei2020global, zhang2023learning}.
\end{proof}

Finally, we state two technical lemmas due to \cite{zhang2023learning}.

\begin{lemma}[Lemma I.3 of \cite{zhang2023learning}]\label{lemma:eps_mixing}
Let $p, p' \in \Delta_{\setA}$ be arbitrary, and $\widehat{p}=(1-\beta) p+\beta \operatorname{Unif}(\mathcal{A})$ for some $\beta\in (0,1)$.
Then, 
    \begin{align*}
\kl\left(p^* | \widehat{p}\right) & \leq \log \frac{|\mathcal{A}|}{\beta}, \\
\kl\left(p^* | \widehat{p}\right)-\kl\left(p^* | p\right) & \leq \frac{\beta}{1-\beta} .
    \end{align*}
\end{lemma}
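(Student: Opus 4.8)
The plan is to prove the two inequalities directly from the definition of the KL divergence, exploiting the two pointwise lower bounds on $\widehat{p}$ that the uniform mixture provides: namely $\widehat{p}(a) = (1-\beta)p(a) + \tfrac{\beta}{|\setA|} \geq \tfrac{\beta}{|\setA|}$ for the first bound, and $\widehat{p}(a) \geq (1-\beta)p(a)$ for the second. Neither bound requires any structure on the arbitrary target distribution $p^*$, so both follow from elementary estimates.

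For the first inequality, I would start from $\kl(p^* | \widehat{p}) = \sum_a p^*(a)\log\tfrac{p^*(a)}{\widehat{p}(a)}$. Using $\widehat{p}(a)\geq \tfrac{\beta}{|\setA|}$ together with $p^*(a)\leq 1$ gives the pointwise bound $\tfrac{p^*(a)}{\widehat{p}(a)} \leq \tfrac{|\setA|}{\beta}$, hence $\log\tfrac{p^*(a)}{\widehat{p}(a)} \leq \log\tfrac{|\setA|}{\beta}$. Averaging this against the probability vector $p^*$ (whose entries sum to one) yields $\kl(p^* | \widehat{p}) \leq \log\tfrac{|\setA|}{\beta}$, as claimed.

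For the second inequality, I would expand the difference and note that the $\sum_a p^*(a)\log p^*(a)$ contributions cancel, leaving
\[
\kl(p^* | \widehat{p}) - \kl(p^* | p) = \sum_a p^*(a)\,\log\frac{p(a)}{\widehat{p}(a)}.
\]
The lower bound $\widehat{p}(a)\geq (1-\beta)p(a)$ gives $\tfrac{p(a)}{\widehat{p}(a)} \leq \tfrac{1}{1-\beta}$ pointwise, so each logarithm is at most $\log\tfrac{1}{1-\beta} = -\log(1-\beta)$, and summing against $p^*$ bounds the whole expression by $-\log(1-\beta)$. The final step is the elementary inequality $\log x \leq x-1$ applied at $x = \tfrac{1}{1-\beta} > 1$, which yields $-\log(1-\beta) = \log\tfrac{1}{1-\beta} \leq \tfrac{1}{1-\beta} - 1 = \tfrac{\beta}{1-\beta}$, completing the proof.

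There is essentially no substantive obstacle here; the argument is a short chain of pointwise bounds on $\widehat{p}$ followed by convexity of the logarithm via $\log x \leq x - 1$. The only point deserving slight care is tracking that $p^*$ is an \emph{arbitrary} element of $\Delta_{\setA}$ (not related to $p$), so that no properties of $p^*$ beyond being a probability vector are used — which is exactly why both bounds hold uniformly in the target distribution.
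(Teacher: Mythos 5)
Your proof is correct. The paper itself gives no proof of this lemma --- it simply imports it as Lemma~I.3 of the cited reference --- and your argument is the standard elementary one: the pointwise lower bounds $\widehat{p}(a) \geq \beta/|\mathcal{A}|$ and $\widehat{p}(a) \geq (1-\beta)p(a)$, followed by $\log x \leq x-1$. The only edge case worth a half-sentence is the second inequality when $\operatorname{supp}(p^*) \not\subseteq \operatorname{supp}(p)$, where $\kl(p^*\,|\,p) = +\infty$ and the claim holds trivially; your pointwise bound degenerates gracefully there as well.
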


\begin{lemma}[Lemma 3.3 in \cite{cai2020provably}]\label{lemma:technical_lemma_pol}
    Let $p, p^* \in \Delta_{\setA}$, $\alpha > 0$ and $g: \setA \rightarrow [0, G]$ be arbitrary, and $q\in \Delta_{\setA}$ be a distribution such that $q(\cdot) \, \propto \, p(\cdot) \exp\{ \alpha g(\cdot) \}$.
    Then,
    \begin{align*}
        \left\langle g(\cdot), p^*(\cdot)-p(\cdot)\right\rangle \leq \frac{\alpha G^2}{ 2}+\alpha^{-1}\left[\kl\left(p^* | p\right)-\kl\left(p^* | q\right)\right] .
    \end{align*}
\end{lemma}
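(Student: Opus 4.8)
The plan is to recognize this as the standard one-step descent inequality for the exponential-weights (entropic mirror descent) update, and to reduce it to a bound on the log-partition function of $q$. First I would make the normalization explicit: since $q(\cdot) \propto p(\cdot)\exp\{\alpha g(\cdot)\}$, we have $q(a) = p(a)\exp\{\alpha g(a)\}/Z$ with $Z := \sum_{a'} p(a')\exp\{\alpha g(a')\}$. Everything then follows from computing the two KL terms in closed form against this explicit $q$.

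The key algebraic step is to evaluate the KL difference. Using $\log\bigl(q(a)/p(a)\bigr) = \alpha g(a) - \log Z$ together with $\sum_a p^*(a) = 1$, one obtains
\begin{align*}
\kl(p^* | p) - \kl(p^* | q) = \sum_a p^*(a)\log\frac{q(a)}{p(a)} = \alpha \langle g, p^*\rangle - \log Z.
\end{align*}
Substituting this into the claimed right-hand side, multiplying the bracket by $\alpha^{-1}$, and cancelling the common term $\langle g, p^*\rangle$ against the $\langle g, p^*\rangle$ hidden in the left-hand side $\langle g, p^*-p\rangle$, the target inequality becomes equivalent to the single scalar bound $\log Z \leq \alpha\langle g, p\rangle + \tfrac{\alpha^2 G^2}{2}$.

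Finally I would bound this log-partition term. Observe that $Z = \Exop_{a\sim p}\!\left[\exp\{\alpha g(a)\}\right]$ is precisely the moment generating function of the random variable $g(a) \in [0,G]$ under $p$. Hoeffding's lemma applied to this bounded variable gives $\Exop_{a\sim p}\!\left[\exp\{\alpha(g(a) - \langle g,p\rangle)\}\right] \leq \exp\{\alpha^2 G^2/8\}$, i.e.\ $\log Z \leq \alpha\langle g,p\rangle + \alpha^2 G^2/8 \leq \alpha\langle g,p\rangle + \alpha^2 G^2/2$, which is exactly the required estimate (the looser constant $1/2$ in place of $1/8$ leaves ample slack). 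Equivalently, a second-order Taylor expansion of the cumulant generating function $\psi(\lambda) = \log \Exop_{a\sim p}[e^{\lambda g(a)}]$ about $\lambda=0$, using $\psi'(0)=\langle g,p\rangle$ and $\psi''(\lambda) = \mathrm{Var}_\lambda(g) \leq G^2/4$, yields the same bound. There is no real obstacle in this lemma: it is a routine consequence of the closed-form update, and the only care needed is tracking the normalizer $Z$ correctly and invoking the bounded-MGF (Hoeffding) estimate with the stated constant.
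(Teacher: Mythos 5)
Your proof is correct. Note that the paper itself gives no argument for this lemma at all — it is imported verbatim as Lemma 3.3 of \cite{cai2020provably} — so you have supplied the proof the paper omits, and it is the standard one underlying the cited source: write $q(a)=p(a)e^{\alpha g(a)}/Z$, observe the closed form $\kl(p^*|p)-\kl(p^*|q)=\alpha\langle g,p^*\rangle-\log Z$, and reduce the claim to the cumulant bound $\log Z\leq\alpha\langle g,p\rangle+\alpha^2G^2/2$. Your reduction algebra is right (the $\langle g,p^*\rangle$ terms cancel exactly), and your Hoeffding step actually proves the sharper constant $\alpha^2G^2/8$, so the stated $\alpha G^2/2$ follows with slack; the alternative via $\psi''(\lambda)=\operatorname{Var}_\lambda(g)\leq G^2/4$ is equally valid. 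The only cosmetic caveat is the degenerate case where $p(a)=0$ for some $a$ with $p^*(a)>0$: then both KL terms are $+\infty$ and the bracket should be read under the usual convention (or one restricts to $p$ with full support, as in the PMD iterates where $\pi_{t,h}(a|s)\geq\beta_t>0$), but this does not affect the substance of the argument.
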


\subsection{Extended Proof of Theorem~\ref{theorem:pmd}}\label{sec:proof_pmd}

As mentioned before, the proof is an adaptation of \cite{zhang2023learning} to setting where learning occurs with $N$ potentially asymmetric agents.
The main differences will be the absence of an explicit MFG and the fact that our algorithms are only allowed to use samples of finite agent trajectories.

Define the random variable $\vecmu_t := \{\mu_{t,h}\}_{h=0}^{H-1}:= \lpop(\pi_t)$, which is the mean-field population distribution induced by the policy at epoch $t$.
We denote the random variables due to estimation error of the q-functions at epoch $t$, time step $h$ and an arbitrary state $s$ as
\begin{align*}
    \setE^s_{t,h} := \left| \langle \qest{t}{h}(s,\cdot ) - q^\tau_{h}(s, \cdot|  \vecmu_{t}, \pi_{t} ), \pi^*_h(\cdot|s) -  \pi_{t,h}(\cdot|s) \rangle \right|.
\end{align*}
Furthermore, let $\pi^*$ be the unique $\tau$-regularized MFG-NE, and $\vecmu^* := \{\mu^*_h\}_{h=0}^{H-1} = \lpop(\pi^*)$.
We define the quantity
\begin{align*}
    \Delta_t := \sum_{h=0}^{H-1} \mathbb{E}_{s_h \sim \mu_h^{*}}\left[\kl\left(\pi_h^{*}\left(\cdot \mid s_h\right) \| \pi_{t, h}\left(\cdot \mid s_h\right)\right)\right],
\end{align*}
which will be the main quantity of error to be bounded using the techniques of \cite{zhang2023learning}.
We denote the mixing coefficients $\beta_t := \sfrac{1}{t+1}$ for generality.
Finally, we also define the distribution mismatch coefficients
\begin{align*}
   C_{\text{dist}} := \sup_{t, h} \sup_{\substack{s,a}} \frac{\mu^*_{h}(s,a)}{\mu_{t,h}(s,a)} , %
\end{align*}
which are always finite (and bounded) in our entropy-regularized setting (see for instance \cite{cayci2021linear}).
It is well-known that the PMD update with entropy regularization can be written as
\begin{align*}
    \widehat{\pi}_{t+1,h}(a|s) \propto \pi_{t,h}(a|s)^{1-\tau\plr{t}} \exp\{\plr{t} \qest{t}{h}(s,a) \},
\end{align*}
with appropriate normalization.

By Lemma~
\ref{lemma:eps_mixing}, it holds (almost surely) for any $s\in\setS$ that
\begin{align*}
    \kl(\pi^*_h(\cdot|s)| \pi_{t+1,h}(\cdot|s)) \leq \kl\left(\pi_h^{*}\left(\cdot \mid s\right) | \widehat{\pi}_{t+1, h}\left(\cdot \mid s\right)\right)+\frac{\beta_t}{1-\beta_t}.
\end{align*}
Using Lemma~\ref{lemma:technical_lemma_pol} and the fact that $\pi_t(a|s) \geq \beta_t$ for all $(s,a) \in \setS\times\setA$, 
\begin{align*}
    &\kl(\pi^*_h(\cdot|s)| \pi_{t+1,h}(\cdot|s)) \\
    &\leq -\plr{t}\left\langle\qest{t}{h}\left(s, \cdot \right)-\tau \log \pi_h\left(\cdot \mid s\right), \pi_h^{*}\left(\cdot \mid s\right)-\pi_{t, h}\left(\cdot \mid s\right)\right\rangle \\
        &\quad +\kl\left(\pi_h^{*}\left(\cdot \mid s\right) | \pi_{t, h}\left(\cdot \mid s\right)\right)+\frac{1}{2} \plr{t}^2\left(H+\tau H \log |\mathcal{A}|+\tau \log \frac{|\mathcal{A}|}{\beta_t}\right)^2+\frac{\beta_t}{1-\beta_t} \\
    &\leq -  \plr{t}\left\langle q_h^{\tau}\left(s, \cdot| \vecmu_t, \pi_t \right)-\tau \log \pi_h\left(\cdot \mid s\right), \pi_h^{*}\left(\cdot \mid s\right)-\pi_{t, h}\left(\cdot \mid s\right)\right\rangle \\
        &\quad +\kl\left(\pi_h^{*}\left(\cdot \mid s\right) | \pi_{t, h}\left(\cdot \mid s\right)\right)+\frac{1}{2} \plr{t}^2\left(H+\tau H \log |\mathcal{A}|+\tau \log \frac{|\mathcal{A}|}{\beta_t}\right)^2+\frac{\beta_t}{1-\beta_t}+\plr{t} \setE^s_{t,h},
\end{align*}
Then, using the performance difference lemma (Lemma~\ref{lemma:perf_diff}),
\begin{align*}
    &\Delta_{t+1} - \Delta_t \\
    &:= \sum_{h=0}^{H-1} \mathbb{E}_{\mu_h^{*}}\left[\kl\left(\pi_h^{*}\left(\cdot \mid s_h \right) | \pi_{t+1, h}\left(\cdot \mid s_h\right)\right)-\kl\left(\pi_h^{*}\left(\cdot \mid s_h\right) \| \pi_{t, h}\left(\cdot \mid s_h\right)\right)\right] \\
&\leq \plr{t} [V^{\tau}\left(\vecmu_t, \pi_t \right)-V^{\tau}\left( \vecmu_t, \pi^{*}\right)] -\tau \plr{t}  \sum_{h=0}^{H-1} \mathbb{E}_{\mu_h^{*}}\left[\kl\left(\pi_h^{*}\left(\cdot \mid s_h\right) \| \pi_{t, h}\left(\cdot \mid s_h\right)\right)\right]  \\
    & \quad +\frac{1}{2} \plr{t}^2 H\left(H+\tau H \log |\mathcal{A}|+\tau \log \frac{|\mathcal{A}|}{\beta_{t-1}}\right)^2+\frac{\beta_t}{1-\beta_t} H+2 \plr{t} \sum_{h=0}^{H-1} \mathbb{E}_{\mu_h^{*}}\left[\setE_{t,h}\right]  \\
&\leq -\tau \plr{t} \Delta_t +\frac{1}{2} \plr{t}^2 H\left(H+\tau H \log |\mathcal{A}|+\tau \log \frac{|\mathcal{A}|}{\beta_{t-1}}\right)^2+\frac{\beta_t}{1-\beta_t} H+2 \plr{t} \sum_{h=0}^{H-1} \mathbb{E}_{\mu_h^{*}}\left[\setE_{t,h}\right] ,
\end{align*}
where the last inequality follows from the monotone improvement lemma (Lemma~\ref{ref:monotone_improvement_lemma}) applied to the policy pair $\pi_t, \pi^*$.
Rearranging both sides,
\begin{align*}
    \Delta_t \leq &\frac{1}{\tau \plr{t}}\left(\Delta_t -\Delta_{t+1}\right)+\frac{\plr{t}}{2 \tau} H\left(H+\tau H \log |\mathcal{A}|+\tau \log \frac{|\mathcal{A}|}{\beta_{t-1}}\right)^2+\frac{\beta_t H}{(1-\beta_t) \tau \plr{t}} \\
& \quad+\frac{2}{\tau} \sum_{h=1}^H \mathbb{E}_{\mu_h^{*}}\left[\varepsilon_h\right]  .
\end{align*}
Summing this inequality from $t=1, \ldots, T$, we obtain
\begin{align*}
\frac{1}{T} \sum_{t=1}^T \Delta_t &\leq \frac{1}{T \tau \plr{t}} \Delta_1 +\frac{\plr{t}}{2 \tau} H\left(H+\tau H \log |\mathcal{A}|+\tau \log \frac{|\mathcal{A}|}{\beta_{t-1}}\right)^2+\frac{\beta_t H}{(1-\beta_t) \tau \plr{t}} \\
&\quad +\frac{2}{\tau} \sum_{h=1}^H \mathbb{E}_{\mu_h^{*}}\left[\varepsilon_h\right]  .
\end{align*}
Given that $\plr{t} = \sfrac{1}{\sqrt{t+1}}$ and $\beta_t = \sfrac{1}{t+1}$, we obtain the bounds
\begin{align*}
    \frac{1}{T} \sum_{t=1}^T \Delta_t = \frac{\tau^{-1} \Delta_1 + \tau^{-1} H^2 + \tau H\log|\setA| + \tau \log ^2 (T + 1)}{\sqrt{T}}+\frac{ \sum_{t=1}^T \sum_{h=0}^{H-1} \mathbb{E}_{\mu_h^{*}}\left[\setE_{t,h}^s\right] }{T\tau},
\end{align*}
and finally using Young's inequality on the last term, and an application of Pinsker's inequality,
\begin{align*}
    &\frac{1}{T}\sum_{t}\sum_h \Exop_{\mu_h^*}\left[\frac{1}{2}\| \pi_{t,h}(\cdot|s_h ) - \pi^*_{h}(\cdot|s_h )\|_1^2 \right] \leq\frac{1}{T} \sum_{t=1}^T \Delta_t \\
    &\leq\frac{\tau^{-1} \Delta_1 + \tau^{-1} H^2 + \tau H\log|\setA| + \tau \log ^2 (T + 1)}{\sqrt{T}}+\frac{ 2\sum_{t=1}^T \sum_{h=0}^{H-1} \mathbb{E}_{\mu_h^{*}}\left[\setE_{t,h}^s\right] }{T\tau}, \\
    &\leq\frac{\tau^{-1} \Delta_1 + \tau^{-1} H^2 + \tau H\log|\setA| + \tau \log ^2 (T + 1)}{\sqrt{T}} +\frac{ \sum_{t=1}^T \sum_{h=0}^{H-1} \mathbb{E}_{\mu_h^{*}}\left[8\| \qest{t}{h}(s_h,\cdot) - q^\tau_{h}(s_h,\cdot|\vecmu_t,\pi_t)\|_2^2 \right] }{T\tau^{2}}, \\
        &\quad +\frac{ \sum_{t=1}^T \sum_{h=0}^{H-1} \mathbb{E}_{\mu_h^{*}}\left[ \| \pi_{t,h}(\cdot|s_h) - \pi^*(\cdot|s_h)\|_2^2 \right] }{4T}.
\end{align*}
Rearranging the terms, 
\begin{align*}
    &\frac{1}{T}\sum_{t}\sum_h \Exop_{\mu_h^*}\left[\frac{1}{2}\| \pi_{t,h}(\cdot|s_h ) - \pi^*_{h}(\cdot|s_h )\|_1^2 \right] \\
    &\leq \frac{4\tau^{-1} \Delta_1 + 4\tau^{-1} H^2 + 4\tau H\log|\setA| + 4\tau \log ^2 (T + 1)}{\sqrt{T}} \\
    &\quad +\frac{ \sum_{t=1}^T \sum_{h=0}^{H-1} \mathbb{E}_{\mu_h^{*}}\left[32\| \qest{t}{h}(s_h,\cdot) - q^\tau_{h}(s_h,\cdot|\vecmu_t,\pi_t)\|_2^2 \right] }{T\tau^{2}}.
\end{align*}
Finally, noting that by Theorem~\ref{theorem:td}, after taking expectations on both sides it holds that $\Exop[\mathbb{E}_{\mu_h^{*}}\left[4\| \qest{t}{h}(s_h,\cdot) - q^\tau_{h}(s_h,\cdot)\|_2^2 \right] ] = \cO(\varepsilon^2 + \alpha^2 + \beta^2 +\sfrac{1}{N})$, if follows that $\Exop[\mathbb{E}_{\mu_h^{*}}\left[ \sum_h \| \widebar{\pi}_h(\cdot|s_h) - \pi^*_h(\cdot|s_h)\|_2^2 \right] ] \leq \cO(\tau^{-2}\varepsilon^2 + \tau^{-2}\alpha^2 + \tau^{-2}\beta^2 +\tau^{-2}\sfrac{1}{N}) $ .
Using the standard Lipschitz continuity of exploitability (see e.g. \cite{yardim2023policy}), the exploitability bound in expectation holds.
Using Lemma~\ref{lemma:reg_bias}, we obtain the upper bound in expectation on the exploitability of the output policy $\widebar{\pi}$ in terms of the original (unregularized) DG.

\section{Details of Experiments}\label{sec:experiment_details}

\subsection{Hardware Setup for Experiments}

Except the A-Taxi benchmark, all our experiments are CPU-based.
We use a single AMD EPYC 7742 CPU, equipped with 128GB RAM.
For training policy and value neural networks with PPO in the A-Taxi benchmark, we use a single RTX 3090 GPU.
With this setup, running Symm-PMD and IPMD in the A-SIS and A-RPS benchmarks takes between 5-20 minutes, and running PPO on A-Taxi takes approximate 2 hours.
Evaluating exploitability for a given policy on A-SIS and A-RPS takes around 2 hours, as we employ a brute-force UCB-type bandit algorithm to accurately estimate best response in this setting.

\subsection{Extended Descriptions of the Experimental Setup}

For simplified notation, denote the state-action marginal densities
\begin{align*}
    \sigma_{\text{actions}}(\vecrho,a) &= \sum_{s'\in\setS}\empc{\vecrho}(s',a), \\
    \sigma_{\text{states}}(\vecrho,s) &= \sum_{s'\in\setA}\empc{\vecrho}(s,a').
\end{align*}

\textbf{Modified rock-paper-scissors (A-RPS). }
We formulate a modified population rock-paper-scissors game inspired by the formulation of \cite{cui2021approximately}.
Our version incorporates varying preferences between agents between possible moves as well as a crowdedness penalty.

A-RPS consists of three states $\setS := \{ R, P, S\}$ and three actions $\setA := \{ R, P, S\}$.
We use $N=2000$ players and a time horizon of $H=10$, though these can be increased or decreased arbitrarily.
We define the rewards as follows.
\begin{align*}
    R^i(s=R,a,\vecrho^{-i}) &= - c^i \sigma_{\text{actions}}(\vecrho,a) - u^i_R \sigma_{\text{states}}(\vecrho,P) + v^i_R \sigma_{\text{states}}(\vecrho,S), \\
    R^i(s=P,a,\vecrho^{-i}) &= - c^i \sigma_{\text{actions}}(\vecrho,a) - u^i_P \sigma_{\text{states}}(\vecrho,S) + v^i_P \sigma_{\text{states}}(\vecrho,R), \\
    R^i(s=S,a,\vecrho^{-i}) &= - c^i \sigma_{\text{actions}}(\vecrho,a) - u^i_S \sigma_{\text{states}}(\vecrho,R) + v^i_S \sigma_{\text{states}}(\vecrho,P). \\
\end{align*}
The coefficients $c^i,u^i, v^i$ are unique for each agent indicating their own utilities/rewards due to losing, winning, or individual penalty due to crowdedness.
The state transitions are deterministic and are given by:
\begin{align*}
    P^i(s'|s,a,\vecrho^{-i}) = \ind{s'=a}.
\end{align*}
We generate the fixed coefficients $u^i, v^i$ randomly by adding bounded noise to coefficients from \cite{cui2021approximately}, so that
\begin{align*}
    &u_R^i = 2 + \varepsilon^i_R, \quad v_R^i = 1 + \widebar{\varepsilon}^i_R \\
    &u_P^i = 4 + \varepsilon^i_P, \quad v_P^i = 2 + \widebar{\varepsilon}^i_P \\
    &u_S^i = 6 + \varepsilon^i_S, \quad v_S^i = 3 + \widebar{\varepsilon}^i_S.
\end{align*}
Therefore, the magnitudes of the player-specific additive terms determine $\beta$.
In the case of A-RPS, $\alpha=0$.

\textbf{Infection modeling with asymmetric agents (A-SIS).}
This benchmark, inspired by the SIS benchmark of \cite{cui2021approximately}, models a large population of infected or healthy agents that can choose to go out or remain in isolation.
Unlike the SIS benchmark, A-SIS is formulated as an $N$-player game and incorporates individual differences in natural susceptibilities, recovery rates, and aversion of isolation between agents.
We formalize the dynamic game as follows.
The game consists of the state space $\setS = \{ I, H \}$ ($I$ indicating infected, $H$ indicating healthy), and action space $\setA = \{ D, U \}$ ($D$ indicating social distancing, $U$ indicating going out).
The initial states $s_0^i$ are sampled i.i.d. from a uniform distribution over $\setS$.
Each agent $i\in N$ has a fixed \emph{susceptibility} parameter $\alpha_i \in [0,1]$, a fixed \emph{healing probability} $\theta_i \in [0,1]$ and a fixed \emph{aversion to isolation} parameter $\xi_i \in [0,1]$.
\begin{align*}
    P^i(I | H, D , \vecrho^{-i} ) &= 0 \\
    P^i(I | H, U , \vecrho^{-i} ) &= \alpha_i * \empc{\vecrho}(I, U),\\ 
    P^i(I | I, D , \vecrho^{-i} ) &= 1 - \theta_i,\\
    P^i(I | I, U , \vecrho^{-i} ) &= 1 - \theta_i.
\end{align*}
The probabilities of staying healthy are of course always defined by
\begin{align*}
    P^i(H | s,a, \vecrho^{-i} ) := 1 - P^i(I | s,a, \vecrho^{-i} ) .
\end{align*}
The rewards of each agent are give by the following which incorporates a penalty for illness and an agent specific penalty for isolation:
\begin{align*}
    R^i(s,a,\vecrho^{-i}) = - \ind{s=I} - \xi_i \ind{a=D}.
\end{align*}
The agent parameters $\alpha_i,\theta_i,\xi_i$ are as expected fixed throughout the game, and are sampled to be close.
In the case of A-SIS, we solve $N=1000$ agents with a time horizon of $H=20$.

\textbf{Asymmetric taxi (A-Taxi).}
Finally, as a more complicated benchmark we adapt the Taxi 
As in \cite{cui2021approximately}, we use the following layout of the city map, where $S$ indicates the starting cell of all agents, both $H$ and $S$ are impenetrable barriers and the rest of the city is divided into 2 zones.
\begin{align*}
    \left(\begin{array}{ccc}
1 & 1 & 1 \\
1 & 1 & 1 \\
1 & 1 & 1 \\
H & S & H \\
2 & 2 & 2 \\
2 & 2 & 2 \\
2 & 2 & 2
\end{array}\right)
\end{align*}
The action space of each agent is $\setA := \{U,D,L,R,W\}$, indicating actions to move in four directions (up, down, left, right respectively) and wait at the current location.
Customers can only be picked up while waiting, and delivered while waiting.
Each cell in the grid generates a new customer with probability $0.2$, which the agents can observe.
Upon picking up a customer, a random target coordinate is generate within the same zone.
Customers can only be left at their target cells.
Successful deliveries of customers in zone 1 generate a base reward of \num{1.1}, whereas successful deliveries of customers in zone 2 generate a lower reward of \num{1.0}.
Furthermore, each agent has a zone specific reward multiplier $\beta^1_i, \beta^2_i > 0$, so that agent $i$ by delivering a customer in zone $1$ gains reward $1.1 \beta_i^1$ and vice versa.
This models varying efficiencies of taxi drivers as well as individual preferences to various zones of the city.
Furthermore, we incorporate a crowdedness penalty: an agent $i\in[N]$ at state $s$ at time $h$ will not move (simulating a jamming effect) with probability $\min\{ \sum_{j} w_j \ind{s^j_h = s} , 0.7\}$, where $\{w_j\}_{j=1}^N$ are player specific weights indicating their contribution to traffic jams.
This intends to simulate unique contributions of each driver to traffic jams, presumably due to vehicle types, driving styles, etc.

The number of states in the game is on the order of $2^{30}$, making a neural network approximation fundamental.
For this reason, we use value and policy networks with two hidden layers with 128 neurons each, with a leaky ReLU nonlinearity.
We adopt the PPO implementation of CleanRL \cite{huang2022cleanrl} for our purposes.
The hyperparameters used are indicated in Table~\ref{tab:ppo_details}.

\begin{table*}[t]
\centering
  \begin{tabular}{lc} \toprule
    \textbf{Parameter} & \textbf{Value} \\
    \midrule
    Initial learning rate & \num{2.5e-4}\\
    Learning rate schedule & Linear\\
    $\gamma$ (discount factor) & \num{0.999} \\
    $\lambda_{\text{GAE}}$ 
 (see\cite{schulman2015high})& \num{0.95} \\
    Entropy regularization & \num{6e-3} \\ 
    Value loss coefficient & \num{0.9} \\
    Maximum gradient norm & \num{0.5} \\
    Clip coefficient & \num{0.2} \\
    Sample trajectories per epoch & \num{1} \\
    NN training passes per epoch & \num{4} \\
    Minibatch size & \num{16384} \\
    Advantage normalization & Yes \\
    \bottomrule
  \end{tabular}
  \caption{Hyperparameters of the PPO algorithm.}
  \label{tab:ppo_details}
\end{table*}

\subsection{Extended Experimental Results}

We report two additional sets of results regarding the sensitivity of our algorithms to $\alpha,\beta$ and the population distribution behvaiour in the A-Taxi environment.

In Figure~\ref{figure:appendix_exp}-(a), we report the sensitivity of the exact MFG-NE (computed via code provided in \cite{guo2023mfglib}) to heterogeneity parameters $\alpha,\beta$ in terms of exploitability in the $N=1000$ player game.
While keeping the other parameter constant at $0$, we sweep through various values of each of $\alpha,\beta$ in the range $(0,\sfrac{1}{4})$.
While around the \num{0.1} threshold, the exploitability rises as expected, for smaller values of $\alpha,\beta$ the bias introduced is very small, providing an empirical analysis of the approximation bound of Theorem~\ref{theorem:main_approximation}.

\begin{figure}[h]
\begin{tabular}{cc}
  \includegraphics[width=0.44\linewidth]{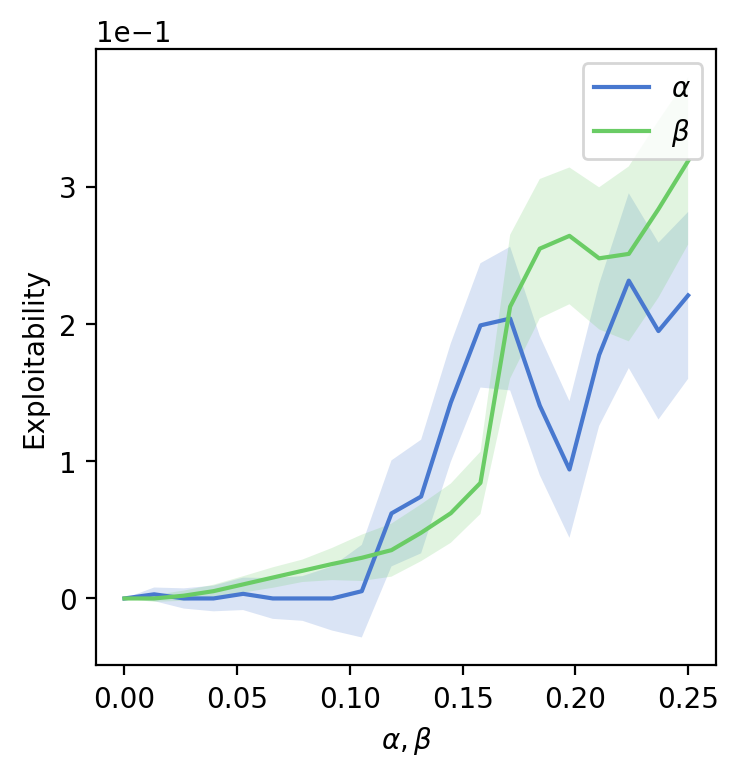} &
  \includegraphics[width=0.46\linewidth]{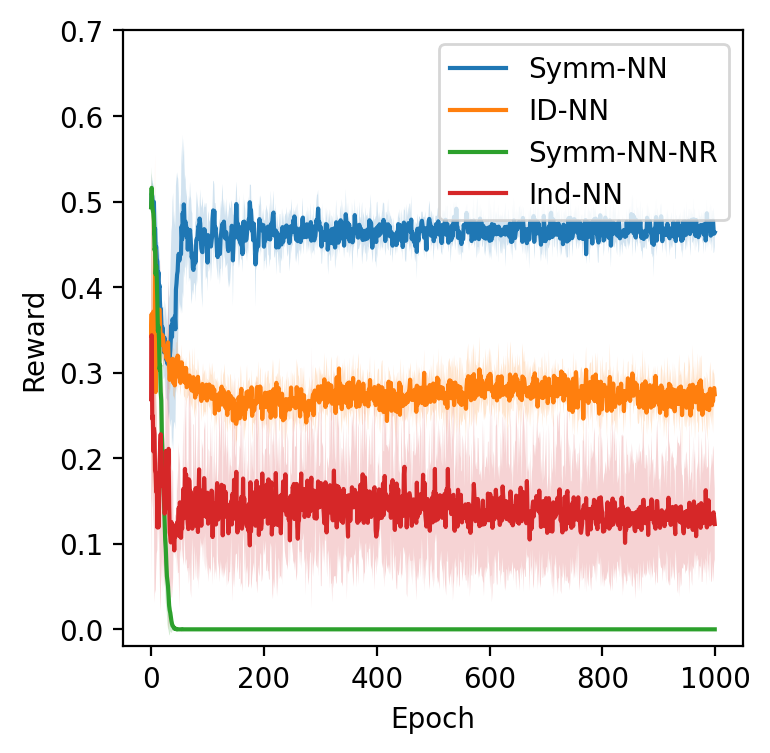 } \\
(a) & (b) 
\end{tabular}
\caption{
(a) The sensitivity of the MFG-NE to heterogeneity parameters $\alpha,\beta$ in the A-SIS environment, in terms of exploitability.
(b) Percentage of vehicles in Zone 1 in the A-Taxi environment throughout training epochs for 4 benchmark algorithms.
}
\label{figure:appendix_exp}
\end{figure}

In Figure~\ref{figure:appendix_exp}-(b), we keep track of number of taxis choosing to operate in Zone 1 in the A-Taxi environment throughout training.
While rewards in Zone 2 are higher in this environment, congestion effects require a mixed Nash equilibrium: agents must randomly choose at the very first step to serve either Zone 1 or 2.
The figure demonstrates the main advantage of policy-based methods for learning Nash: unlike most value-based methods, PPO can learn a mixed strategy instead of converging to a deterministic policy.
As an additional benchmark, in the figure we evaluate Symm-NN without any entropy regularization ($\tau=0$, shown by the line Symm-NN-NR).
In this case, the policy rapidly converges to a deterministic policy, indicating that a non-zero entropy regularizer might be necessary for learning a Nash equilibrium.

\end{document}